\documentclass[preprint]{imsart}


\RequirePackage{natbib}
\bibliographystyle{abbrvnat}

\RequirePackage{graphicx}
\RequirePackage[final]{pdfpages}
\RequirePackage{xcolor}		
\RequirePackage{array}
\RequirePackage{booktabs}		
\RequirePackage{color}
\usepackage{float}
\usepackage[normalem]{ulem}

\RequirePackage{hyperref}

\RequirePackage[OT1]{fontenc}

\RequirePackage{amsthm,amsmath,amssymb,dsfont}
\RequirePackage{bm,bbm,mathtools}
\RequirePackage{rotating}
\RequirePackage{lscape}
\RequirePackage{here}

\startlocaldefs
\numberwithin{equation}{section}
\theoremstyle{plain}
\newtheorem{Thrm}{Theorem}[section]
\newtheorem{Lem}[Thrm]{Lemma}
\newtheorem{Cor}[Thrm]{Corollary}
\newtheorem{Prop}[Thrm]{Proposition}
\newtheorem{Ex}{Example}
\theoremstyle{remark}
\newtheorem{Rem}[Thrm]{Remark}
\newtheorem{Def}[Thrm]{Definition}

\newcommand{\N}{\mathbb{N}}

\newcommand{\R}{\mathbb{R}}

\newcommand{\E}{\mathbb{E}}
\newcommand{\PP}{\mathbb{P}}

\newcommand{\NN}{\mathcal{N}}

\newcommand{\fM}{\mathfrak{M}}

\newcommand{\Cov}{\mathrm{Cov}}
\newcommand{\Var}{\mathrm{Var}}

\newcommand{\Id}{\mathrm{Id}}
\newcommand{\dd}{\mathrm{d}}
\newcommand{\trace}{\mathrm{trace}}
\newcommand{\Exp}{\mathrm{Exp}}
\newcommand{\Log}{\mathrm{Log}}
\newcommand{\Ave}{\mathrm{Ave}}

\newcommand{\ls}{\langle}
\newcommand{\rs}{\rangle}
\DeclarePairedDelimiter\ceil{\lceil}{\rceil}
\DeclarePairedDelimiter\floor{\lfloor}{\rfloor}
\DeclareMathOperator*{\argmin}{arg\,min}


\newcommand{\wt}{\widetilde}
\newcommand{\wh}{\widehat}
\newcommand{\ol}{\overline}
\newcommand{\ul}{\underline}
\newcommand{\1}{\mathds{1}}
\renewcommand{\epsilon}{\varepsilon}
\renewcommand{\phi}{\varphi}
\newcommand*\diff{\mathop{}\!\mathrm{d}}
\newcommand{\p}{\mathbb{P}}

\newcommand{\dn}{\diagdown}

\newcommand{\dr}{\mathrm{d}}

\endlocaldefs

\allowdisplaybreaks

\begin{document}

\setcounter{page}{1}
\begin{frontmatter}
\title{Statistical inference for intrinsic wavelet estimators of SPD matrices in a log-Euclidean manifold}
\runtitle{Inference for SPD matrices}
\begin{aug}
\author{
\fnms{Johannes} \snm{Krebs} \thanksref{a} \corref{} 
}
\hspace{-.7em}
\author{
\fnms{Daniel} \snm{Rademacher} \thanksref{b} \corref{} 
}
\hspace{-.7em}
\author{
\fnms{Rainer} \snm{von Sachs} \thanksref{c} \corref{} 
}

\address[a]{Department of Mathematics, KU Eichst{\"a}tt-Ingolstadt, 85072 Eichst{\"a}tt, Germany.
E-mail: johannes.krebs@ku.de}

\address[b]{Institute of Applied Mathematics, Heidelberg University, 69120 Heidelberg, Germany. E-mail: daniel.rademacher@uni-heidelberg.de}

\address[c]{ISBA/LIDAM,  UCLouvain, 1348 Louvain-la-Neuve, Belgium. E-mail: rainer.vonsachs@uclouvain.be}

\runauthor{Krebs, Rademacher, von Sachs}
\affiliation{Heidelberg University, TU Braunschweig, Universit{\' e} Catholique de Louvain}
\end{aug}

\date{February 12, 2022}

\begin{abstract}
In this paper we treat statistical inference for an intrinsic wavelet estimator of curves of symmetric positive definite (SPD) matrices in a log-Euclidean manifold. This estimator preserves positive-definiteness and enjoys permutation-equivariance, which is particularly relevant for covariance matrices. Our second-generation wavelet estimator is based on average-interpolation and allows the same powerful properties, including fast algorithms, known from nonparametric curve estimation with wavelets in standard Euclidean set-ups. 

The core of our work is the proposition of confidence sets for our high-level wavelet estimator in a non-Euclidean geometry. We derive asymptotic normality of this estimator, including explicit expressions of its asymptotic variance. This opens the door for constructing asymptotic confidence regions which we compare with our proposed bootstrap scheme for inference. Detailed numerical simulations confirm the appropriateness of our suggested inference schemes.
\end{abstract}
\begin{keyword}
 \kwd{Asymptotic normality} \kwd{Average interpolation} \kwd{log-Euclidean manifold} \kwd{Covariance matrices}  \kwd{SPD matrices}
 \kwd{Matrix-valued curves} \kwd{Nonparametric inference} \kwd{Second generation wavelets}  
\end{keyword}
%
%

\end{frontmatter}

\section{Introduction}\label{Section_Introduction}
In this paper we derive statistical inference for an intrinsic wavelet estimator of a curve of positive definite matrices in a log-Euclidean manifold. This estimator preserves, among other interesting properties, the essential property of remaining itself positive-definite (PD). The paradigm of our approach is that this preservation of PD is possible without any pre- or postprocessing step, e.g. finding the PD estimator coming closest to the proposed wavelet estimator. Such a device would be prohibitive for a feasible analysis of its statistical properties (in particular mean-square convergence and asymptotic normality of the final estimator).

Some recent work (e.g. \cite{pennec2006intrinsic}, \cite{hinkle2014intrinsic}) on using differential-geometric tools in nonparametric estimation of matrix-valued curves opened an elegant and mathematically sound way of how to transfer powerful modern non-parametric curve estimation schemes from the classical situation only involving {\em Euclidean} distances to the world of {\em non-flat} manifolds, i.e. {\em curved} spaces. A prominent example is the space of symmetric and positive-definite (SPD) matrices which we will focus upon here. Important applications for curves with such matrix-valued entries are covariance matrices (with arguments changing over time or space, e.g.), but also diffusion tensors (\cite{zhu2009intrinsic},  \cite{dryden2009non}): Diffusion tensor imaging (DTI) can be  used in clinical applications to obtain high-resolution information of internal structures of pathological versus healthy  tissues of certain organs (e.g. hearts and brains). For each tissue voxel, there is a 3 x 3 SPD matrix to describe the shape of local diffusion. Other scientific applications of SPD matrices, finally, are numerous, such as in computer vision \cite{caseiro2012computervision}, elasticity \cite{moakher2006elasticity}, signal processing \cite{arnaudson2013sp}, medical imaging (\cite{Fillard2007imaging}), \cite{fletcher2007dti}) and neuroscience \cite{Friston2011neuroscience}. 

In \cite{chau2020intrinsic} the slightly larger class of Hermitian positive-definite (HPD) matrices has been studied, with as prominent examples, spectral density matrices of multivariate time series which are functions of frequency (or even of time and frequency for locally stationary time series with a time-varying correlation structure as in  \cite{chau2021tvspectral}). It is merely for reasons of keeping the presentation sufficiently light that we restrict ourselves, in this paper, to the class of SPD matrices.

\vspace{0.3cm}

The essential key to success for the aforementioned transfer to {\em curved} spaces hinges on the following idea. Evaluating the performance of usual nonparametric curve estimation builds on measuring the distance, and hence the speed of convergence, of an estimator to the target curve by the standard Euclidean metric. Hence it is important to replace the 
latter by a suitable metric that takes into account the structure of the underlying manifold. 

While the work by \cite{chau2020intrinsic} concentrated on the use of the {\em affine-invariant} Riemannian metric, in this paper we consider the {\em log-Euclidean} metric to be at the base of our derivations. It provides a different natural distance between two SPD matrices, while still allowing to preserve some interesting invariance properties. In particular, the log-Euclidean metric transforms the space of SPD matrices in a complete metric space, and it is unitary congruence invariant (see also Lemma \ref{Lem:Invariance} in the Supplement \ref{AppendixA}).
This implies equivariance of our constructed estimators with respect to, e.g., rotations, and has as the following important application in the use of the log-Euclidean metric for estimation of covariance matrices of a multivariate data vector (e.g. a time series): if the coordinates of the entries of the data vector are permuted, the estimator of its covariance matrix follows this permutation - a property lacking for other choices, such as, e.g., the Cholesky and the log-Cholesky metrics  (\cite{lin2019riemannian}).

Using the log-Euclidean metric, most importantly, opens the way to derive properly centered asymptotic normality of the proposed curve estimators. We will subsequently use this in order to construct correctly centered confidence regions, due to the simplified geometric structure of the log-Euclidean compared to other Riemannian manifolds. Another advantage is that estimators based on this metric do not suffer from the {\em swelling effect} - which means that the determinant of an average of two (or more) SPD matrices in this metric is guaranteed to not exceed the determinant of any of the averaged matrices. This property will guarantee that our constructed confidence regions which are based on estimators that are constructed as intrinsic averages in our space of SPD matrices, will not swell either towards or beyond the borders of our manifold. This is obviously an important advantage for constructing confidence volumes not simply based on the Euclidean metric in order to have them respect the nominal level in practice without becoming incorrectly too large.

\vspace{0.3cm}

The relevant literature from the field of differential geometry on the theory of SPD matrices is vast (e.g. \cite{arsigny2007geometric}, \cite{pennec2006intrinsic}), and in this paper we restrict ourselves to review the aspect that are necessary for our approach. Essentially the idea is to relate a non-flat or curved space -  which is supposed to contain both the matrix-valued curved to be denoised and the estimator that provides this denoising - with a classical Euclidean geometry that allows the control of distances. Mathematically speaking this is achieved by studying, locally for each element of the curved space, its {\em tangent space}. There are many possibilities to do this, most of them are based on differential-geometric tools such as the Exponential mapping - and its inverse -  between the manifold and its tangent space. Among these, we decided to choose the log-Euclidean approach, in which the considered distances only depend on the matrix-logarithm of our objects.


\vspace{0.3cm}

As another paradigm of our work we want to benefit from the powerful properties of denoising observed curves by {\em wavelet estimators} (see, e.g., \cite{antoniadis1994wavelets}). However, as our observed curves are matrix-valued, classical wavelet algorithms cannot be directly applied. 
Using the log-Euclidean metric allows us to build our work on a particular {\em second-generation wavelet} scheme (for general such schemes we refer, e.g. to \cite{jansen2005second}), similarly to the one developed in \cite{chau2020intrinsic}, as an important modification of the approach of \cite{rahman2005multiscale}. This scheme is based on {\em Average-Interpolation} (AI), which is known to have powerful properties as developed, e.g. in \cite{daubechies1991two},\cite{daubechies1992two} and \cite{daubechies1992sets}. Moreover \cite{donoho1993smooth} showed that wavelet estimators based on AI carry the same desirable properties as first-generation wavelet schemes. Those allow us to control not only rates of convergence of our intrinsic wavelet estimator but go beyond this in the study of its (asymptotic) variance in our newly derived central limit theorem. By construction and, again, use of the log-Euclidean metric which avoids any swelling effects, this enables us to construct appropriate confidence regions for points on the curve with values in the space of SPD matrices.

\vspace{0.3cm}

With our contribution we complete an important gap on statistical inference for wavelet-based curve estimators on non-standard geometries. Related statistical work exists in the context of local polynomial estimation (\cite{yuan2012local}, wavelet estimation (\cite{rahman2005multiscale}), and more generally for regression on Riemannian manifolds (\cite{boumal2011discreteA}, \cite{hinkle2014intrinsic}), the latter references not providing any convergence rates of a statistical error measure (such as the mean-squared error). However, in this work we focus on providing confidence regions, both based on {\em asymptotic normality} and, for comparison and in its own interest, based on a {\em bootstrap scheme}. Beyond our detailed theoretical investigations we convince ourselves by extended simulation studies about the appropriateness of our suggested inference methods.

\vspace{0.3cm}

The rest of the paper is organized as follows. The notation and abbreviations are given in Section~\ref{Section_Notation}. We lay out conceptual details of differential geometry and intrinsic average interpolation in Section~\ref{Section_AI}. The results for wavelet estimation in our nonparametric regression setup of denoising SPD-matrix valued curves are given in Section~\ref{Section_WaveletRegression}, i.e. reviewing rates of mean-square convergence, and foremost, deriving asymptotic normality including an explicit calculation of the asymptotic variance of our estimators. Section~\ref{Section_ConfidenceSets} presents the two different constructions of our confidence sets which live in the considered non-flat space of SPD-matrix valued curves. Here we base ourselves, on the one hand, on the shown asymptotic normality, and on the other hand, on a newly proposed (wild) bootstrap scheme, for which we show its theoretical validity. Finally we provide numerical simulation studies in Section~\ref{Sec_Applications}, which show the satisfactory empirical coverage and a comparison between bootstrap confidence regions and those based on asymptotic normality. A short conclusion Section~\ref{Sec:Conclusions} sketches some possible extensions for future work. 

All technical details are given in a Supplement which is structured as follows: First, in Section~\ref{Section_Proofs} we give further insights on the construction of our AI-wavelet scheme of section 3, including some proofs. Then we proceed to proving our statistical results of sections ~\ref{Section_WaveletRegression} and ~\ref{Section_ConfidenceSets} in Sections~\ref{Sec: Proofs_Wav_Reg} and~\ref{Sec:Proofs_Bootstrap}, respectively. The Section~\ref{AppendixA} contains further material regarding the log-Euclidean metric.

\section{Preliminaries}\label{Section_Notation}

As outlined in the introduction this work is concerned with intrinsic curve estimation within the manifold $Sym^+(d)$ of
$d\times d$ symmetric positive-definite (SPD) matrices. 
Now there are several ways to equip this differentiable manifold with a Riemannian metric, that is a positive definite inner product $g_S$ on
the tangent space $T_S Sym^+(d)$ defined at each point $S \in Sym^+(d)$ and depending smoothly on $S$. We remark that $T_S Sym^+(d)$ can be 
identified with the ambient vector space $Sym(d)$ of symmetric matrices in the sense that both spaces are isomorphic.
The most natural approach to obtain a Riemannian metric on $Sym^+(d)$ is then to simply restrict the Frobenius inner product on $Sym(d)$, which leads 
to the so called \emph{affine invariant} metric, see \cite{bhatia2009positive} for a comprehensive treatment. 
Although geometrically very convenient, this metric imposes some 
computational difficulties when it comes to compute quantities such as Fr{\'e}chet means. This is inherently due to the non-flatness of the resulting
Riemannian manifold.\\ 
To circumvent some delicate issues that arise in inferential investigations on curved spaces, we instead consider the log-Euclidean metric, which was
introduced in \cite{arsigny2007geometric} and provides a \emph{flat geometry} on $Sym^+(d)$. A detailed exposition of the Riemannian structure induced 
by the log-Euclidean metric is provided in the Supplement, Section \ref{AppendixA}, of which we give a short summary in the following.\\

We start by collecting some notation: Let $M(d)$ denote the space of real $d\times d$ matrices and $GL(d)$ the 
subspace of invertible matrices. Then for any $A\in M(d)$, its \emph{matrix exponential} is given by 
$\exp(A) = \sum_{k=0}^{\infty} A^k/k! \in GL(d)$. In particular the restriction $\exp\colon Sym(d) \to Sym^+(d)$ is one-to-one and for any $S\in Sym^+(d)$,
there is a $\log(S) = A \in Sym(d)$ such that $\exp(A) = S$. The mapping $\log \colon Sym^+(d) \to Sym(d)$ is called \emph{(principal) matrix logarithm}.\\
Then, choosing the Frobenius inner product $\ls A,B\rs_{\Id} = \ls A,B\rs_F = \trace(A B)$, $A,B \in Sym(d)$ in \eqref{metricextension},
the derivations in Section \ref{AppendixA} lead to the following collection of geometric tools regarding the log-Euclidean metric:
\begin{table}[h!]
\centering
\begin{tabular}{||l | l ||} 
 \hline 
    $Sym^+(d)$ & manifold \\ [.5ex]
    \hline
    $T_S Sym^+(d) = S\times Sym(d) \cong Sym(d)$ & tangent space\\ [.5ex]
    \hline
     $g_S(U,V) = \trace(d_S\log(U) d_S\log(V)), \quad U,V \in T_S Sym^+(d)$ & log-Eucl. metric \\ [.5ex]
     \hline
     $d(S_1,S_2) = \|\log(S_2) - \log(S_1)\|_F, \quad S_1,S_2 \in Sym^+(d)$ & log-Eucl. distance \\ [.5ex]
     \hline
     $\gamma(t;S_1,S_2) = \exp((1-t)\log(S_1) + t\log(S_2)),\quad S_1,S_2 \in Sym^+(d)$ & geodesic \\ [.5ex] 
     \hline
     $\Exp_S(U) = \exp\big(\log(S) + d_S\log (U)\big),\quad S\in Sym^+(d),~ U \in T_S Sym^+(d)$ & exponential map\\ [.5ex]
     \hline
     $\Log_{S_1}(S_2) = d_{\log(S_1)}\exp\big( \log(S_2) - \log(S_1) \big), \quad S_1,S_2 \in Sym^+(d)$ & logarithmic map\\ [.5ex]
     \hline
     $\Gamma_{S_1}^{S_2}(U) = d_{\log(S_2)}\exp\big( d_{S_1}\log(U) \big),\quad S_1,S_2 \in Sym^+(d),~U \in T_{S_1} Sym^+(d)$ & parallel transport\\[.5ex]
     \hline
     $\Gamma_{S}^{\Id}(U) = d_S\log(U), \quad S\in Sym^+(d),~U \in T_S Sym^+(d)$ & \\[.5ex]
     \hline
 \end{tabular}
\caption{Abbreviations for the relevant differential geometric tools}
\label{table:Abbreviations2}
\end{table}

Turning to the probabilistic aspects of our work, we recall, from the Hopf-Rinow theorem (i.e. \cite{pennec2006intrinsic}), 
that $(Sym^+(d), \dr)$ is a complete separable metric space.
Denote by $\mathcal{B}(Sym^+(d))$ the corresponding Borel $\sigma$-algebra, then we call a measurable function
$X\colon (\Omega,\mathcal{A}, \p) \to (Sym^+(d), \mathcal{B}(Sym^+(d)))$ a \emph{random element} on $Sym^+(d)$ and the induced probability measure 
$\p^X := \p \circ X^{-1}$ on $Sym^+(d)$ is as usual referred to as the \emph{law or distribution of $X$}.\\
Further, let $P(Sym^+(d))$ denote the set of all probability measures on $(Sym^+(d),$ $\mathcal{B}(Sym^+(d)))$ and $P_m(Sym^+(d))$ the subset of 
probability measures which have a finite moment of order $m$ with respect to the log-Euclidean distance $d$, i.e.
\begin{align*}
	P_m(Sym^+(d)) := \left\{ \nu \in P(Sym^+(d))\colon \int_{Sym^+(d)} d(S_0,S)^m \, \dr \nu(S) < \infty \right\},
\end{align*}   
where $S_0\in Sym^+(d)$ is arbitrary. We note in passing that we will use $\lambda$ to denote the Lebesgue measure on $\R$.

\section{The intrinsic AI wavelet transform}
\label{Section_AI}

This section is largely inspired by the developments in \cite{chau2020intrinsic}, adapted to the particular situation of a log-Euclidean manifold. It addresses 
the three fundamental aspects of the present curve estimation problem. Due to the non-Euclidean set-up of SPD-matrix valued curves, we need to both develop how 
{\em average-interpolation} (AI) as a particular second generation schemes for wavelet estimation works for non-scalar valued curves, but also how this construction 
can be adapted to the log-Euclidean manifold. In order to do so, we first outline the geometric aspects of the present setting and provide the reader with some 
background on the {\em intrinsic}, i.e. the Fr{\' e}chet mean for the log-Euclidean metric and its connection to average interpolation (for classical wavelet AI we 
refer to \cite{klees2000wavelets} and to \cite{donoho1993smooth}). AI for wavelets provide multiscale representations and can successfully be implemented via 
{\em refinement schemes}, the basics of which we explain in the second part of this section. Here, in order to implement the {\em predicting step} that is inherent to 
all second generation schemes based on {\em lifting} (\cite{jansen2005second}), we rely on predicting the midpoints of the refinement scheme by {\em intrinsic  average}
interpolation. In the third part, we detail the concept of forward and backward average interpolation (which provide the fast wavelet transform), hence our multiscale
algorithm for processing observed data and their estimated local averages across different scales. Finally, in the fourth part, we discuss convergence of AI refinement schemes.

\subsection{Fr{\' e}chet means}
\label{Subsec:FrechetMean}
Let $\E[X]$ or $\mu$ denote the Fr{\' e}chet (or Karcher) mean of a random element $X$ on $Sym^+(d)$, which provides a general notion of location in metric spaces
and is in our case defined as the set
\begin{align*}
	\E[X] := \argmin_{R\in Sym^+(d)} \E[d^2(R,X)] =
	\argmin_{R\in Sym^+(d)} \int_{Sym^+(d)} d^2(R,S) \, \dr \PP^X(S).
\end{align*}

If $\PP^X \in P_2(Sym^+(d))$, then at least one such point exists. Moreover, the Fr{\' e}chet mean $\E[X]$ is unique because $Sym^+(d)$ is a geodesically complete 
and flat manifold with respect to the log-Euclidean metric. Since at every point $S\in Sym^+(d)$ the exponential map (see table \ref{table:Abbreviations2}) is defined 
on the entire tangent space $T_SSym^+(d)$ the Fr{\' e}chet mean $\E[X] \in Sym^+(d)$ is also
uniquely characterized by the condition $\E [\Log_{\mu}(X)] = 0$. The simple geometry under the log-Euclidean metric  then allows to conclude
\begin{align*}
	\E[X] = \exp(\E[\log(X)])\ ,
\end{align*}
where $\E[\log(X)]$ is just the usual Euclidean expectation of the random matrix $\log(X)$ in the vector space $Sym(d)$. 
That is, for $\PP^{\log(X)} = \PP^X \circ \exp$, we have
\begin{align*}
	\E[\log(X)] = \int_{Sym(d)} A \, \dr \PP^{\log(X)}(A) = \int_{Sym^+(d)} \log(S) \, \dr \PP^{X}(S).
\end{align*}
If $X$ has a discrete distribution $\PP^X = \sum_{i=1}^n w_i\delta_{S_i}$, where $\sum_{i} w_i = 1$, $w_i > 0$ and $S_i \in Sym^+(d)$, then 
\begin{align*}
    E[X] = \exp\left( \sum_{i=1}^n w_i \log(S_i) \right) =: \mathrm{Ave}(\{ S_i \}; \{w_i\}),
\end{align*}
see also \cite{arsigny2007geometric}, Theorem 3.13. In case $w_i = 1/n$ we may simply write $\mathrm{Ave}(\{ S_i \})$ (or $\overline{S}_n$ resp.\ $\overline{X}_n$) 
for the average $\mathrm{Ave}(\{ S_i \}; \{1/n\})$.\\

Furthermore, if we consider a smooth curve $c\colon \R \supset I \to Sym^+(d)$ as a mapping from the probability space $(I,\mathcal{B}(I), \lambda/\lambda(I))$ to 
the 1-dimensional submanifold $c(I) \subset Sym^+(d)$ we can also define the \emph{intrinsic mean $\mathrm{Ave}_I(c)$ of $c$} by
\begin{align*}
	\mathrm{Ave}_I(c) := \argmin_{R\in c(I)} \int_{c(I)} d^2(R,S) \, \dr\nu^{c}(S)
\end{align*}
with $\nu^c := \lambda/\lambda(I) \circ c^{-1}$. Again in case $\nu^c \in P_2(c(I))$, $\mathrm{Ave}_I(c)$ exists and is unique. Restricting the above considerations 
on the submanifold $c(I)$ we then obtain
\begin{align}\label{IntrinsicMeanOfCurve}
	\mathrm{Ave}_I(c) = \exp\left\{ \int_{c(I)} \log(S) \, \dr \nu^c(S) \right\} 
	= \exp\left\{ \frac{1}{\lambda(I)} \int_I \log(c(t)) \, \dr  t \right\}.
\end{align}   

\subsection{Intrinsic average-interpolation refinement scheme}
\label{Subsec:AI-RefinementScheme}
The data are assumed to stem from
a curve $c\colon [0,1] \to Sym^+(d)$, which is square integrable in that $\int_0^1 \|c(t)\|_F^2 \, \dr t < \infty$. As usual for nonparametric regression, the $n=2^J$
data observations $(c(k/2^J))_{k=0, \ldots, 2^J-1}$ result from (noisy versions of) equidistant sampling of the curve $c(t)$ on a (dyadic) resolution scale  $J$, i.e. 
for grid points $t_k=k/2^J, k=0, \ldots, 2^J-1$. As a common possibility for refinement schemes based on wavelets, those sampled data points are identified with the 
{\em scaling coefficients} $(M_{J,k})_k$ of the (wavelet) refinement scheme on the sampling (or finest) scale $J$.  That is, the \emph{input data} of the refinement
scheme are the $(M_{J,k})_k$, observed on scale $J$, and satisfy the following deterministic relation
\begin{align*}
	M_{J,k} = \mathrm{Ave}_{I_{J,k}}(c) 
	= \exp\left\{ 2^J \int_{I_{J,k}} \log(c(t)) \, \dr  t \right\}, \quad k=0,\ldots, 2^J - 1,
\end{align*} 
where $I_{J,k} = 2^{-J}[k,k+1)$ is a uniform partition of $[0,1)$. \\[5pt]  
We note that the intrinsic refinement scheme described in the following can also be adapted to a {\em non-dyadic} observation grid, see for instance \cite{chau2018}.\\ 

In order to later utilize the refinement scheme for regression analysis the requirements for the scheme are 
two folded. On the one hand, refined or predicted midpoints have to form an average of coarser scale midpoints, so the transformation is suitable for noise-removal. 
On the other hand, the success of the refinement schemes hinges on the ability to reconstruct what is to be defined an {\em intrinsic} polynomial of a given degree
without any error of approximation, akin the essential property 
that is enjoyed by wavelet schemes (of adapted order) in the classical setting. Similarly to the usual definition of a polynomial of degree $k$ to have vanishing
derivatives of all orders $\ell \geq k$, here now a smooth curve 
$c\colon \R \supset [a,b] \to M$ on a manifold $M$ with existing (covariant) derivatives of all orders is said to be a \emph{polynomial curve} or
\emph{intrinsic polynomial} of degree $k$, if all its derivatives of orders $\ell \geq k$ do vanish. For example a zero-degree polynomial is just a constant curve 
and a first-degree polynomial corresponds to 
a geodesic curve. For a more formal definition, we refer to \cite{hinkle2014intrinsic}, see also \cite{chau2020intrinsic}. Unfortunately intrinsic $n$th-degree
polynomials are in general difficult to express in closed form.
However, and this will be important for our purposes, the unique interpolation polynomial passing through $S_0,\ldots,S_n \in Sym^+(d)$ can be evaluated via an 
intrinsic version of Neville's algorithm, see in the supplement, section \ref{Section_Proofs} for details. 

Now to derive a suitable refinement scheme that fits those requirements we adapt the average-interpolation (AI) refinement scheme as introduced in 
\cite{donoho1993smooth} to the log-euclidean setup. 
That is, based on $j$-scale midpoints $(M_{j,k})_k$ the predicted or refined $(j+1)$-scale midpoints, denoted $(\wt{M}_{j+1,k})_k$, are computed as the $(j+1)$-scale
midpoints of unique intrinsic polynomials with 
$j$-scale midpoints $(M_{j,k})_k$. To elaborate, let $N = 2L+1$ for some $L\geq 0$ and fix a location $k \in \{ L,\ldots, 2^j-L \}$. The intrinsic AI-refinement scheme 
of order or degree $N$ is then formally defined as follows:\\
Let $\rho = \rho_{j,k} \colon [0,1] \to Sym^+(d)$ be the unique intrinsic polynomial (of degree $N-1$) with $j$-scale midpoints $M_{j,k-L},\ldots, M_{j,k+L}$, i.e. 
\begin{align*}
    \Ave_{I_{j,k'}}(\rho) = \exp\left\{ 2^j \int_{k'2^{-j}}^{(k'+1)2^{-j}} \log(\rho(t)) \,\dd t \right\} = M_{j,k'}, \quad k'=k-L,\ldots,k+L. 
\end{align*}
Then calculate the two predicted midpoints $\wt{M}_{j+1,2k}$ and$\wt{M}_{j+1,2k+1}$ on the next finer scale as 
\begin{align*}
    \wt{M}_{j+1,2k} = Ave_{I_{j+1,2k}}(\rho) = \exp\left\{ 2^{j+1} \int_{k2^{-j}}^{(2k+1)2^{-(j+1)}} \log(\rho(t)) \,\dd t \right\},\\
    \wt{M}_{j+1,2k+1} = Ave_{I_{j+1,2k}}(\rho) = \exp\left\{ 2^{j+1} \int_{(2k+1)2^{-(j+1)}}^{(k+1)2^{-j}} \log(\rho(t)) \,\dd t \right\}.
\end{align*}
Computing the integrals on the right hand side directly is of course infeasible, but similarly to AI-refinement on the real line, we can take advantage of the fact
that the mean (in the sense of \eqref{IntrinsicMeanOfCurve}) of an intrinsic polynomial is again an intrinsic polynomial to obtain tractable formulas for the
predicted midpoints. To that end denote the cumulative 
intrinsic mean of $\rho$ by 
\begin{align*}
    \mathrm{Mean}(\rho; x) := \Ave_{[(k-L)2^{-j}, x]}(\rho),\qquad x\in [(k-L)2^{-j}, 1].
\end{align*}
Then, for $ l=1,\ldots, 2L+1=N$, a simple calculation shows that
\begin{align*}
    \ol{M}_{j,l} := \mathrm{Mean}(\rho; (k-L+l)2^{-j}) 
    = \Ave(M_{j,k-L},\ldots, M_{j,k-L+l}).
\end{align*}
Furthermore we have 
\begin{align}\label{Eq:AI-Refinement1}
    \mathrm{Mean}(\rho; (2k+1)2^{-(j+1)}) 
    =& \exp\left\{ \frac{2L}{2L+1} \log(\ol{M}_{j,L}) + \frac{1}{2L+1} \log(\wt{M}_{j+1,2k}) \right\} \nonumber\\
    =& \gamma(\frac{1}{2L+1}; \ol{M}_{j,L}, \wt{M}_{j+1,2k} ),
\end{align}
that is $\mathrm{Mean}(\rho; (2k+1)2^{-(j+1)})$ lies on the geodesic segment connecting $\ol{M}_{j,L}$ and $\wt{M}_{j+1,2k}$. Since $\mathrm{Mean}(\rho; \cdot)$ 
is by construction again an intrinsic polynomial of degree $N-1$ it can be reconstructed by a suitable polynomial interpolation. To that end let $\pi$ denote the
intrinsic interpolation polynomial determined by the constrains 
\begin{align*}
    \pi((k-L+1)2^{-j}) = \ol{M}_{j,l} = \Ave(M_{j,k-L},\ldots, M_{j,k-L+l}),
\end{align*}
for $l=1,\ldots,N$. Then, due to the uniqueness of the interpolation polynomial, we have $\pi((2k+1)2^{-(j+1)}) =  \mathrm{Mean}(\rho; (2k+1)2^{-(j+1)})$ and
\eqref{Eq:AI-Refinement1} yields 
\begin{align}\label{predictionEven}
    \wt{M}_{j+1,2k} = \gamma(-2L; \pi((2k+1)2^{-(j+1)}), \ol{M}_{j,L}).
\end{align}
For the other predicted midpoint note that 
\begin{align*}
    \Ave(\wt{M}_{j+1,2k}, \wt{M}_{j+1,2k+1}) = M_{j,k}
\end{align*}
and therefore solving for $\wt{M}_{j+1,2k+1}$ yields 
\begin{align}\label{predictionOdd}
    \wt{M}_{j+1,2k+1}
    = \exp\left\{2\log(M_{j,k}) - \log(\widetilde{M}_{j,2k})\right\}.
\end{align}
Derivations of \eqref{Eq:AI-Refinement1} and \eqref{predictionEven} are provided in the supplement, section \ref{Section_Proofs}.\\

\textbf{Faster midpoint prediction in practice.}
The AI refinement scheme \eqref{predictionEven} and \eqref{predictionOdd} reduces in practice to elementary linear algebra, which is encoded in a vector of prediction
coefficients. We show the derivation of these coefficients for $N=1$ and $N=3$. Let $( M_{j,k-L},\ldots, M_{j,k+L} )$ be a given input on scale $j$.\\

\textit{(a) $N=1$ (i.e., $L=0$)}. The prediction scheme simply forwards the values from scale $j$ to scale $j+1$, viz.,
\begin{align*}
    \wt{M}_{j+1,2k} &= \exp\left\{ \log(\pi((2k+1)2^{-(j+1)})) \right\} = \ol{M}_{j,k} = \Ave(M_{j,k}) = M_{j,k},\\
    \wt{M}_{j+1,2k+1} &= \exp\left\{ 2\log(M_{j,k}) - \log(\wt{M}_{j,2k}) \right\} = M_{j,k}.
\end{align*}

\textit{(b) $N=3$ (i.e, $L=1$)} is the first non-trivial case. We show in Section~\ref{S:DetailsMidpointPrediction}
\begin{align}
	    \wt{M}_{j+1,2k} &= \exp\left\{ \frac{1}{8}\log(M_{j,k-1}) + \log(M_{j,k}) - \frac{1}{8}\log(M_{j,k+1}) \right\}, \label{E:FastPredictEx1}\\
    \wt{M}_{j+1,2k+1}  &= \exp\left\{ -\frac{1}{8}\log(M_{j,k-1}) + \log(M_{j,k}) + \frac{1}{8}\log(M_{j,k+1}) \right\}.\label{E:FastPredictEx2}
\end{align}
Carrying out these steps for general $L\ge 0$, i.e., $N=2L+1$, we obtain
\begin{align}\begin{split}\label{E:ScalarPrediction}
    \wt{M}_{j+1,2k} &= \Ave(\{M_{j,k-L},\ldots, M_{j,k+L}\}, \\
    &\qquad\qquad \{-c_L,\ldots,-c_1,1,c_1,\ldots,c_L\})\\
    \wt{M}_{j+1,2k+1} &= \Ave(\{M_{j,k-L},\ldots, M_{j,k+L}\},\\
    &\qquad\qquad \{c_L,\ldots,c_1,1,-c_1,\ldots,-c_L\})
    \end{split}
\end{align}
for suitable weights $(c_1,\ldots,c_L)$. In fact these weights are the same as in the average-interpolation transform in the Euclidean case (see \cite{donoho1993smooth}). 
If $L=1$, $c_1 = -1/8$ by the above derivation. If $L=2$, $(c_1,c_2) = (22, -3)/128$. If $L=3$, $(c_1,c_2,c_3)=(-201, 44, -5)/1024$.

\subsection{Intrinsic forward and backward average-interpolation wavelet transform} 
The intrinsic AI-refinement scheme developed in the previous section naturally leads to a corresponding intrinsic AI wavelet transform that can 
be used to synthesize a smooth curve in $Sym^+(d)$.
The first step in constructing a wavelet transform is to build up a redundant midpoint (or scaling coefficient) pyramid. The midpoint pyramid
algorithm aggregates information from a finest observation scale $J$ successively to coarser scales. It is at the heart of the wavelet transform.\\

\begin{figure}[H]
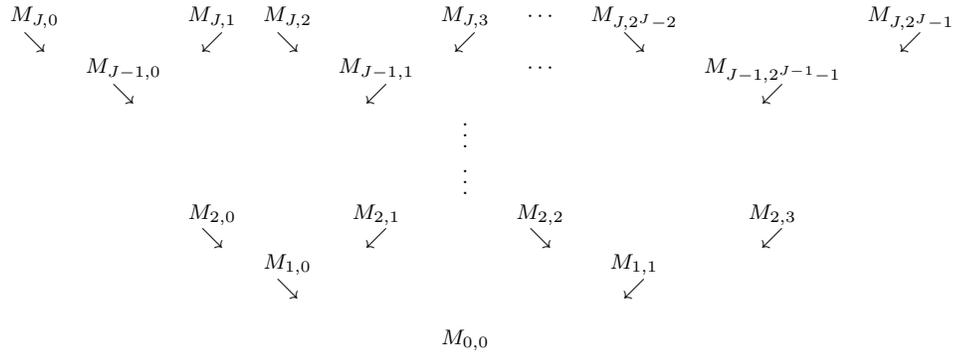

\begin{align*}
	\begin{array}{cccccccccc}
	M_{J,0}  & 			  &  M_{J,1}  & M_{J,2}  & 		     & M_{J,3}  & \cdots      & M_{J,2^J-2}  &   				 & M_{J,2^J-1} \\
	\searrow & 			  & \swarrow  & \searrow & 		     & \swarrow & 	 	             & \searrow     &   				 & \swarrow    \\
			 & M_{J-1,0}  &			  & 		 & M_{J-1,1} & 			& \cdots      &    			 & M_{J-1,2^{J-1}-1} &			   \\
			 & \searrow   & 		  &          & \swarrow  &			& 		   		  &				 & \swarrow			 & 			   \\
			 &			  &			  &			 & 			 & \vdots   &		  	      &				 &					 &			   \\
	 		 &			  &			  &			 & 			 & \vdots   &		  	      &				 &					 &			   \\
	         & 			  & M_{2,0}   &			 & M_{2,1}   &			& M_{2,2}  &	   	  & M_{2,3}		 		     &			   \\
	         &		      & \searrow  & 		 & \swarrow  & 		    & \searrow & 		  & \swarrow  	 & 					  			   \\
	         &			  &			  & M_{1,0}	 &				&	&		   & M_{1,1}  & 			 					 &			   \\
	         &			  &		      & \searrow &			 &			&		   & \swarrow &			     				 &			   \\
	         	         &			  &			  &			 &			 &   &		   &		  &				 &\\
	         &			  &			  &			 &			 & M_{0,0}  &		   &		  &				 &									 			 
	\end{array}
\end{align*}
\caption{Scheme of the midpoint pyramid algorithm. Each $M_{j,k}$ is obtained from $M_{j+1,2k}$ and $M_{j+1,2k+1}$ by the averaging procedure in \eqref{midpointPyramid},
resulting in a consecutive finer average for a fixed $M_{j,k}$ in the case that the finest scale $J$ increases to $\infty$.}
\label{Fig:MidpointPyramidAlgo}
\end{figure} 
\textbf{Midpoint pyramid.}
Consider midpoints $(M_{J,k})_{k=0}^{2^J-1}$ at the \emph{finest scale} $J$ to be given by the sampled data. At the next coarser scale $j=J-1$ set 
$M_{j,k}$ to be the
halfway point or \emph{midpoint} on the  geodesic $\gamma$ passing through $M_{j+1,2k}$ and $M_{j+1,2k+1}$, i.e. 
\begin{align}\begin{split}\label{midpointPyramid}
	M_{j,k} &:= \gamma(1/2; M_{j+1,2k},M_{j+1,2k+1}) = \Ave(M_{j+1,2k}, M_{j+1,2k+1})
\end{split}\end{align}
for $k=0,\ldots, 2^j -1$.
Continue this coarsening operation up to 
scale $j=0$ to obtain the \emph{midpoint pyramid} $((M_{j,k})_{j,k}: 0\le j \le J, 0\le k \le 2^j-1 )$, see also the scheme in Figure~\ref{Fig:MidpointPyramidAlgo}.


Now the above considerations lead to the following forward ($j\leadsto j-1$) and backward ($j-1 \leadsto j$) AI-wavelet transform.\\

\textbf{Forward wavelet transform.}
As \textit{input} take the $j$-scale midpoints $(M_{j,k})_k$. The \textit{output} is the sequence 
of $(j-1)$-scale midpoints $(M_{j-1,k})_k$ and $j$-scale \emph{wavelet coefficients} $(D_{j,k})_k$. In detail, the forward transform works as follows.
\begin{enumerate}
    \item[1.]\textit{Coarsen/predict.}\\ 
    (i) Compute the $(j-1)$-scale midpoints $(M_{j-1,k})_k$ according to the midpoint relation \eqref{midpointPyramid}.\\  
    (ii) Select a refinement order $N=2L+1$, $L\geq 0$ and generate the predicted midpoints $(\widetilde{M}_{j,k})_k$ based on $(M_{j-1,k})_k$ via 
    \eqref{predictionEven} and \eqref{predictionOdd} respectively \eqref{E:ScalarPrediction}.
    
    \item[2.]\textit{Difference.} Define the \emph{wavelet coefficients} $(D_{j,k})_{k}$ as an intrinsic difference according to 
    \begin{align}\label{waveletCoeff}
        D_{j,k} := 2^{-j/2} \Log_{\widetilde{M}_{j,2k+1}}(M_{j,2k+1}) \in T_{\widetilde{M}_{j,2k+1}}Sym^+(d).
    \end{align}
\end{enumerate}

\begin{Rem}
By construction the wavelet coefficients live in different tangent spaces. However, in order to be able to compare the wavelet coefficients (e.g., with a given uniform threshold), parallel transporting them to the same tangent space at the 
identity results in so called \emph{whitened wavelet coefficients} given by 
    \begin{align}\begin{split}\label{whitenedWaveletCoeff}
	\mathfrak{D}_{j,k} &:= \Gamma_{\widetilde{M}_{j,2k+1}}^{\Id}(D_{j,k}) \\
		&= 2^{-j/2} \big( \log(M_{j,2k+1}) - \log(\widetilde{M}_{j,2k+1}) \big) \in T_{\Id}Sym^+(d) = Sym(d).
	\end{split}
\end{align}
Of course the "length" of the vectors does not change, that is
\begin{align*}
    g_{\widetilde{M}_{j,2k+1}}( D_{j,k}, D_{j,k} )
    = 2^{-j} \| \log(M_{j,2k+1}) - \log(\widetilde{M}_{j,2k+1}) \|_F^2
    = \| \mathfrak{D}_{j,k} \|_F^2.
\end{align*}
\end{Rem}


\textbf{Backward wavelet transform.} As \textit{input} take the $(j-1)$-scale midpoints $(M_{j-1,k})_{k}$ and $j$-scale wavelet coefficients $(D_{j,k})_{k}$. 
The \textit{output} is the sequence of $j$-scale midpoints $(M_{j,k})_{k}$. In detail, the backward transform is performed as follows.
\begin{enumerate}
    \item [1.]\textit{Refine/predict.}
    (i) Given the refinement order $N=2L+1$, $L\geq 0$, generate the odd predicted midpoints 
    $(\widetilde{M}_{j,2k+1})_{k}$ for $k\in \{0,\ldots,2^{j-1}-1\}$ based on $(M_{j-1,k})_{k}$ via (\ref{predictionOdd}).\\ 
    (ii) Compute the odd $j$-scale midpoints $(M_{j,2k+1})_{k}$ for $k\in\{ 0,\ldots,2^{j-1}-1\}$ by reversing (\ref{waveletCoeff}), i.e.,
    \begin{align}\label{ReconstructionOdd}
        M_{j,2k+1} = \Exp_{\widetilde{M}_{j,2k+1}}(2^{j/2}D_{j,k}).
    \end{align}
    \item[2.]\textit{Complete.} Compute the even $j$-scale midpoints $(M_{j,2k})_{k}$ through the midpoint relation (\ref{midpointPyramid}) for
    $k\in\{1,\ldots,2^{j-1}-1\}$, i.e.,
    \begin{align}\label{ReconstructionEven}
        M_{j,2k}=\exp\left(2\log(M_{j-1,k}) - \log(M_{j,2k+1})\right)
    \end{align}
\end{enumerate}

Thus, given the coarsest midpoint $M_{0,0}$ and the wavelet coefficient pyramid $(D_{j,k})_{j,k}$ for $j=0,\ldots,J$ and $k=0,\ldots,2^{j-1}-1$, the 
original input sequence $(M_{J,k})_{k}$ can be retrieved by repeating the reconstruction procedure \eqref{ReconstructionOdd} and 
\eqref{ReconstructionEven} up to scale $J$.

\subsection{Convergence of AI refinement schemes}\label{Subsec:fast_midpoint_pred}

The estimation procedure which we introduce in the subsequent section \ref{Section_WaveletRegression} will require multiple consecutive applications of the AI
refinement scheme when lifting the wavelet estimator from a coarse scale $J_0$ to a finer scale $J$. To consider such iterative applications it turns out
to be convenient to express the transformations \eqref{E:ScalarPrediction} via suitable transition matrices. In particular this allows us to study the 
asymptotic behaviour of our wavelet estimator.\\
To elaborate, let us define for a generic $m\times n$ matrix
$A=(a_{i,j})_{i,j}$, the blocked $md\times nd$ matrix $A_{\dn d}$ by
$$
	A_{\dn d} = ( a_{i,j} I_d )_{i,j},
$$
where $I_d\in\R^{d\times d}$ is the identity matrix. This means $A_{\dn d}$ consists of block matrices each of these matrices being the $a_{i,j}$ multiple of the
identity $I_d$. Using the weights $c_i$, $i\in \{1,\ldots,L\}$, we define the square matrices $E_{N}, O_N \in \R^{(2N-1)\times(2N-1)}$, see equations \eqref{E:EN} 
and \eqref{E:ON} at the end of this section. (Notice that $E_N$ and $O_N$ are similar as $O_N = Z E_N Z$, where $Z=Z^{-1}$ with $Z = [e_{2N-1},\ldots,e_1]$ and $e_i$ being the $i$th standard 
basis (column) vector in $\R^{(2N-1)}$.) Then a careful observation shows that the AI refinement \eqref{E:ScalarPrediction} can be described with $(E_N)_{\dn d}$ resp.
$(O_N)_{\dn d}$ by
\begin{align}\label{EvenTransition}
    \begin{bmatrix}
    \log \wt{M}_{j+1,2k-2L}\\
    \vdots\\
    \log \wt{M}_{j+1,2k+2L}
    \end{bmatrix}
    =
   ( E_N)_{\dn d}
     \begin{bmatrix}
    \log M_{j,k-2L}\\
    \vdots\\
    \log M_{j,k+2L}
    \end{bmatrix},\\
\label{OddTransition}
    \begin{bmatrix}
    \log \wt{M}_{j+1,2k-2L+1}\\
    \vdots\\
    \log \wt{M}_{j+1,2k+2L+1}
    \end{bmatrix}
    =
   ( O_N)_{\dn d}
    \begin{bmatrix}
    \log M_{j,k-2L}\\
    \vdots\\
    \log M_{j,k+2L}
    \end{bmatrix}.   
\end{align}
Consequently, we have
\begin{Lem}\label{L:j-stepPrediction}
For $m\geq 1$ and $p=0,\ldots, 2^m-1$ we have
\begin{align*}
    \begin{bmatrix}
    \log \wt{M}_{j+m,2^mk-2L+p}\\
    \vdots\\
    \log \wt{M}_{j+m,2^mk+2L+p}
    \end{bmatrix}
    = 
    X_m X_{m-1} \ldots X_1
    \begin{bmatrix}
    \log M_{j,k-2L}\\
    \vdots\\
    \log M_{j,k+2L}
    \end{bmatrix}
\end{align*}
with $X_i \in \{ (E_N)_{\dn d}, (O_N)_{\dn d} \}$, $i=1,\ldots,m$.
\end{Lem}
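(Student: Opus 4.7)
The plan is to prove the claim by induction on $m \geq 1$, using the one-step transitions \eqref{EvenTransition} and \eqref{OddTransition} as the base case and leveraging the fact that iterated AI refinement corresponds to iterated matrix multiplication in log-coordinates.

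For the base case $m = 1$, I would simply note that $p \in \{0,1\}$ and set $X_1 = (E_N)_{\dn d}$ if $p = 0$ and $X_1 = (O_N)_{\dn d}$ if $p = 1$. The output indices $2k - 2L + p, \ldots, 2k + 2L + p$ then match exactly those appearing in \eqref{EvenTransition} or \eqref{OddTransition}, respectively, so the claim is immediate at this level.

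For the induction step, assuming the statement holds for $m-1$, I would decompose $p \in \{0,\ldots,2^m - 1\}$ in its least significant bit, writing $p = 2p' + r$ with $p' \in \{0,\ldots,2^{m-1}-1\}$ and $r \in \{0,1\}$. Applying the inductive hypothesis at the center index $2^{m-1}k + p'$, I obtain matrices $X_1,\ldots,X_{m-1} \in \{(E_N)_{\dn d},(O_N)_{\dn d}\}$ such that the logs of $\wt M_{j+m-1,2^{m-1}k - 2L + p'},\ldots,\wt M_{j+m-1,2^{m-1}k + 2L + p'}$ are given by $X_{m-1}\cdots X_1$ acting on the scale-$j$ input block. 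Treating these predicted midpoints as the scale-$(j+m-1)$ input to one more AI refinement step and applying either \eqref{EvenTransition} (if $r=0$) or \eqref{OddTransition} (if $r=1$) centered at $2^{m-1}k + p'$, I obtain the scale-$(j+m)$ output block centered at $2(2^{m-1}k + p') + r = 2^m k + p$, which is precisely the block on the left-hand side of the lemma. Setting $X_m$ accordingly completes the induction.

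The main obstacle is the indexing bookkeeping: keeping track of how the center of the output block doubles at each refinement step and is then shifted by $r \in \{0,1\}$ according to the choice of $E_N$ versus $O_N$, so that the cumulative shift after $m$ steps reconstructs the binary expansion of $p$. A secondary issue worth flagging is that in the iterated scheme the input to the $(i+1)$-th refinement is the predicted block $\wt M_{j+i,\cdot}$ rather than actual midpoints; this is legitimate because \eqref{EvenTransition} and \eqref{OddTransition} are purely deterministic linear identities in log-coordinates on whatever block of $4L+1$ midpoints is supplied, and because the matrices $E_N, O_N$ are square, so the block size $4L+1$ is preserved across iterations.
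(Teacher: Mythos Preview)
Your proposal is correct and follows essentially the same route as the paper's proof: induction on $m$, with the base case $m=1$ read off directly from \eqref{EvenTransition}/\eqref{OddTransition}, and the induction step obtained by writing $p=2q+r$ with $r\in\{0,1\}$, applying the inductive hypothesis to the scale-$(j+m-1)$ block with shift $q$, and then multiplying by $(E_N)_{\dn d}$ or $(O_N)_{\dn d}$ according to $r$. Your explicit remarks on the index bookkeeping and on feeding predicted midpoints back as input are helpful clarifications that the paper leaves implicit.
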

So, it is of crucial interest whether the sequences $X_m\cdot X_{m-1} \cdot\ldots \cdot X_1$ converge as $m$ increases to infinity. 
The answer is affirmative and follows from results in \cite{daubechies1991two, daubechies1992two} as well as \cite{donoho1993smooth}.\\
To see this, we have to first define 
\begin{align*}
    (\wh{c}_{-2L},\ldots, \wh{c}_{2L+1}) 
    := ( c_L, -c_L,\ldots, c_1, -c_1, 1, 1, -c_1, c_1, \ldots, -c_L, c_L )
\end{align*}
The unique non-trivial solution of the two-scale difference equation 
\begin{align}\label{E:TwoScaleDiffEquation}
    f(x) = \sum_{k} \wh{c}_{k} f(2x-k), \qquad f\in L^1
\end{align}
is then given by the limit (that is an application ad infinitum) of the classical AI refinement scheme on the real line to the Kronecker sequence 
$(\delta_{0,k})_k$, see \cite{donoho1993smooth} Theorem 2.1 and \cite{daubechies1992two} Theorem 2.2 for a detailed explanation. Let $\phi_L$ denote 
this \emph{fundamental solution} of \eqref{E:TwoScaleDiffEquation}. For further properties of $\phi_L$, in particular its regularity, we refer to the 
aforementioned papers. Now the connection between $\phi_L$ and the limit of products of the form $X_m\cdot X_{m-1} \cdot\ldots \cdot X_1$, is as follows:

For any $x\in[0,1]$ we have the binary expansion 
\begin{align*}
    x = \sum_{j=1}^{\infty} d_j(x) 2^{-j}, \qquad d_j(x) \in \{0,1\}. 
\end{align*}
Adapting the notation in \cite{daubechies1992two} let us define 
\begin{align*}
    T_0 = Z E_N^T Z, \qquad T_1 = Z O_N^T Z.
\end{align*}
Then as a consequence of Theorem 2.2 in \cite{daubechies1992two} we have that all infinite products of $T_0$ and $T_1$ converge to the limit 
\begin{align*}
    \lim_{m \to \infty} T_{d_1(x)}  T_{d_2(x)} \ldots T_{d_m(x)}
    = 
    \begin{bmatrix}
    \phi_L(x-2L)  & \cdots & \phi_L(x-2L) \\
    \vdots &                         & \vdots \\
    \phi_L(x+2L)  & \cdots & \phi_L(x+2L)
    \end{bmatrix},
\end{align*}
where $(d_j(x))_j$ are the coefficients of the binary expansion of $x\in[0,1]$. Now let
\begin{align*}
    X_{d_j(x)} = \1\{ d_j(x) =0 \} (E_N)_{\backslash d} + \1\{ d_j(x) =1 \} (O_N)_{\backslash d}
\end{align*}
then 
\begin{align*}
    X_{d_1(x)}^T  X_{d_2(x)}^T  \ldots  X_{d_m(x)}^T = Z_{\backslash d}   T_{d_1(x)}  T_{d_2(x)} \ldots  T_{d_m(x)}  Z_{\backslash d}
\end{align*}
and therefore the above convergence implies for each $x\in[0,1]$
\begin{align}\label{E:LimitMatrix}
    \lim_{m \to \infty} X_{d_m(x)} \ldots   X_{d_2(x)} X_{d_1(x)}
    =     
    \begin{bmatrix}
    \phi_L(x-2L) &  \cdots & \phi_L(x+2L) \\
    \vdots &                          & \vdots \\
    \phi_L(x-2L) & \cdots & \phi_L(x+2L)
    \end{bmatrix}_{\backslash d}
    =: \Phi_L(x)_{\backslash d}.
\end{align}

\begin{landscape}
{\small
\begin{align}
    E_N
    &:=
	\left[
	\begin{array}{ccccccccccccccccc}
    -c_L & -c_{L-1} & \cdots & -c_1   & 1      & c_1      & \cdots & c_{L-1}  & c_L      & 0    &        &          &        &        &         &  \\
    c_L  & c_{L-1}  & \cdots & c_1    & 1      & -c_1     & \cdots & -c_{L-1} & -c_L     & 0        &        &          &         &        &         & \\
    0    & -c_L     & \cdots & -c_2   & -c_1   & 1        & \cdots & c_{L-2}  & c_{L-1}  & c_L      &        &          &         &        &         & \\
    0    & c_L      & \cdots & c_2    & c_1    & 1        & \cdots & -c_{L-2} & -c_{L-1} & -c_L     &        &          &         &        &         & \\
         &          &        & \vdots & \vdots & \vdots   &        & \vdots   & \vdots   & \vdots   &        &          &         &        &         & \\
         &          &        & 0      & -c_L   & -c_{L-1} & \cdots & -c_1     & 1        & c_1      & \cdots & c_{L-1}  & c_L     & 0      &         & \\
         &          &        & 0      &  c_L   &  c_{L-1} & \cdots &  c_1     & 1        & -c_1     & \cdots & -c_{L-1} & -c_L    & 0      &         & \\
         &          &        &        &        &          &        &  \vdots  & \vdots   & \vdots   &        & \vdots   & \vdots  &        &         & \\
         &          &        &        &        &          & 0      & -c_L     & -c_{L-1} & -c_{L-2} & \cdots & 1        & c_1     & \cdots & c_L     & 0\\
         &          &        &        &        &          & 0      & c_L      & c_{L-1}  & c_{L-2}  & \cdots & 1        & -c_1    & \cdots & -c_L    & 0\\
         &          &        &        &        &          &        & 0        & -c_L     & -c_{L-1} & \cdots & -c_1     & 1       & \cdots & c_{L-1} & c_L 
    \end{array}
    \right]
 \label{E:EN} \tag{$E_N$}\\
 \nonumber \\
  \nonumber \\
   \nonumber \\
    O_N
    &:=
    \left[
    \begin{array}{ccccccccccccccccc}
    c_L & c_{L-1} & \cdots & c_1    & 1      & -c_1     & \cdots & -c_{L-1}  & -c_L      & 0        &        &          &         &        &         &          & \\
    0   & -c_L    & \cdots & -c_2   & -c_1   & 1        & \cdots & c_{L-2}   & c_{L-1}   & c_L      &        &          &         &        &         &          & \\
    0   & c_L     & \cdots & c_2    & c_1    & 1        & \cdots & -c_{L-2}  & -c_{L-1}  & -c_L     &        &          &         &        &         &          & \\
        &         &        &        & \vdots & \vdots   &        & \vdots    & \vdots    & \vdots   &        &          &         &        &         &          & \\
        &         &        & 0      & c_L    & c_{L-1}  & \cdots & c_1       & 1         & -c_1     & \cdots & -c_{L-1} & -c_L    & 0      &         &          & \\
        &         &        & 0      & 0      & -c_L     & \cdots & -c_2      & -c_1      & 1        & \cdots & c_{L-2}  & c_{L-1} & c_L    &         &          & \\
        &         &        &        &        &          &        &  \vdots   & \vdots    & \vdots   &        & \vdots   & \vdots  & \vdots &         &          & \\
        &         &        &        &        &          &        & -c_L      & -c_{L-1}  & -c_{L-2} & \cdots & c_1      & c_2     & c_3    & \cdots  & c_L      & 0 \\
        &         &        &        &        &          &        & c_L       & c_{L-1}   & c_{L-2}  & \cdots & -c_1     & -c_2    & -c_3   & \cdots  & -c_L     & 0 \\
        &         &        &        &        &          &        & 0         & -c_L      & -c_{L-1} & \cdots & 1        & c_1     & c_2    & \cdots  & c_{L-1}  & c_L \\
        &         &        &        &        &          &        & 0         & c_L       & c_{L-1}  & \cdots & 1        & -c_1    & -c_2   & \cdots  & -c_{L-1} & -c_L
    \end{array}
    \right]
\label{E:ON} \tag{$O_N$}
\end{align}
}
\end{landscape}

\section{Wavelet regression for smooth curves in \texorpdfstring{$Sym^+(d)$}{Sym+(d)} }
\label{Section_WaveletRegression}

Having introduced in the previous section the AI-wavelet scheme for SPD-matrix valued data, we can now describe our procedure of how to estimate smooth curves of symmetric PD matrices, observed in the presence of noise. For this we begin by introducing the data generating process given a certain curve $c\colon [0,1] \to Sym^+(d)$. Following the usual paradigm of (scalar) wavelet estimation, we set 
$$
	M_{J,k} = \mathrm{Ave}_{I_{J,k}}(c)\ , \; \; k\in\{0,\ldots,n-1\} ,
$$
for $n=2^J$. We study an independent sample $M_{J,0,n},\ldots,M_{J,n-1,n}$ such that 
$M_{J,k,n} \sim \nu_{J,k}$
with $\nu_{J,k} \in P_p(Sym^+(d))$ and $\E_{\nu_{J,k}}[M_{J,k,n}] = M_{J,k}$ for a certain $p\ge 2$ given in \eqref{E:MomentConditionLogM}. Moreover, to be consistent, 
we assume that the distributions of $M_{J,k,n}$ and $M_{J',k',n'}$ are equal whenever $k2^{-J}$ and $k' 2^{-J'}$ represent the same dyadic number. 
As an example consider the Riemannian signal plus noise model.
\begin{Ex}[Riemannian signal plus noise model]\label{Example:SignalPlusNoise}
Suppose $c\colon [0,1] \to Sym^+(d)$ is an unknown target signal, $c_k = c(k/n)$, $k=0,\ldots,n$ and 
$\xi_k \sim \tilde\nu$ are i.i.d. mean-zero noise random variables in $Sym(d)$, $\tilde\nu \in P^2(Sym(d))$. We will consider the intrinsic signal 
plus noise model as a running example (see also Equation 2.2.17 in \cite{chau2018}) in the following.
\begin{align*}
    X_k &= \Exp_{c_k}(\Gamma_{\Id}^{c_k}(\xi_k))
    = \exp(\log(c_k) + \xi_k) , \quad k=0,\ldots, n.
\end{align*}
\end{Ex}

\textbf{The estimation procedure.} Given input data on a dyadic scale, we obtain a smoothed curve as follows.
\begin{itemize}
\item [(0)] Let $X=(X_k)_{k=0,\ldots,n-1}$ be a sequence on the dyadic scale $J$ such that $2^J = n$.

\item [(1)] Denote $(M_{j,k,n})_{j,k}$ the midpoint pyramid according to (\ref{midpointPyramid}) based on $X$. The empirical wavelet coefficients are $\widehat{D}_{j,k,n}$ for $j\in \{0,\ldots,J\}$ and $k\in \{0,\ldots,2^{j}-1\}$. We obtain these from the (estimated) predicted midpoints via  \eqref{predictionEven}, \eqref{predictionOdd} and \eqref{waveletCoeff} for some refinement order $N=2L+1, L \geq 0$.

In particular, on scale $J$,
$M_{J,k,n} = X_k$ for $k\in \{0,\ldots,n-1\}$.



\item [(2)] The smoothing occurs at scales $j \in \{J_0,\ldots, J\}$, where $J_0 \in \{ 1,\ldots,J \}$ defines the thresholding scale for the empirical wavelet coefficients. We have
\begin{align*}
    \widehat{D}_{j,k,n}^{(t)} = 
    \begin{cases}
    \widehat{D}_{j,k,n} &,j=0,\ldots,J_0-1\\
    0                   &,j=J_0,\ldots,J
    \end{cases}
    \quad \text{for}~k=0,\ldots, 2^{j-1}-1.
\end{align*}
\item [(3)] Starting with $M_{0,0,n}$ and $\widehat{D}_{1,0,n}^{(t)}$ recursively apply the backward wavelet transform to obtain \emph{linearly thresholded estimated
midpoints $(\widehat{M}_{j,k,n})_{j,k}$}. At scale $J$, we obtain from $(\widehat{M}_{J,k,n})_{k}$ the curve $\hat c_n(t) = \sum_k \widehat{M}_{J,k,n} \1_{I_{J,k}}(t)$.\\[5pt]
\end{itemize}

\textbf{The covariance operator.} In order to study asymptotic normality and confidence regions, we need to model the covariance structure of the sampling scheme. We abbreviate the covariance operator of $\log M_{J,k,n}$ by 
$$
	{\cal C}(k 2^{-J}) \colon Sym(d) \to Sym(d).
$$
We assume that $( {\cal C}(u): u\in [0,1] \text{ is dyadic})$ can be extended to a c{\` a}dl{\` a}g process $( {\cal C}(u): u\in [0,1] ) \subset {\cal L}$ whose projections are positive definite in that $\langle A,  {\cal C}(u) A\rangle_F > 0$ for all $u\in[0,1]$. 
Here ${\cal L} = {\cal L}(Sym(d))$ denotes the (Banach-) space of bounded linear operators $T\colon Sym(d) \to Sym(d)$ equipped with the usual induced operator norm
$\| T\|_{\cal L} = \sup\{ \|TA\|_F\mid A \in Sym(d),~ \|A\|_F \leq 1\}$.

Furthermore, we assume a uniform bounded moments condition
\begin{align}\label{E:MomentConditionLogM}
		c_p =	\sup \E[ \| \log M_{j,k,n} \|_F^{p} ] < \infty,
\end{align}
for a $p>2$ and where the supremum is taken over $k \in \{0,\ldots,2^j-1\}$, $j \in \N$. This makes the process $({\cal C}(u) : u\in [0,1] )$ Bochner integrable, viz.,
\begin{align}\label{Cond:CovOp}
		\int_0^1 {\cal C}(u) \diff u \in {\cal L}.
\end{align}
Indeed, the Bochner integral exists if and only if $\int_0^1 \| {\cal C}(u) \|_{{\cal L}} \diff u < \infty$ as Lebesgue integral, this is satisfied in our model by the uniform bounded moments condition from \eqref{E:MomentConditionLogM} (see also Lemma~\ref{L:ExBochnerInt}).

Using that the inner product is bilinear and continuous in each argument, we can exchange the Bochner integral with the Lebesgue integral on any interval $I\subseteq [0,1]$
\[
		\int_I \ls A, {\cal C}(u) B \rs \diff{u} =  \left\ls A, \int_I{\cal C}(u)\diff{u} \  B \right\rs 
\]
for $A,B\in Sym(d)$.
This implies for each $A,B\in Sym(d)$ and $j\le J$
\begin{align}\begin{split}\label{E:ConvergenceCovOp}
		&2^{-(J-j)} \sum_{u=0}^{2^{(J-j)}-1} \ls A,{\cal C} ( k 2^{-j} + u 2^{-J}) B \rs 
		 \\
		&\to  \ 2^j \int_{k 2^{-j}}^{(k+1)2^{-j}} \ls A, {\cal C}(u) B \rs \diff u =\left\ls A, 2^j \int_{k 2^{-j}}^{(k+1)2^{-j}}{\cal C}(u) \diff uB \right\rs 
		\end{split}
\end{align}
    as $J\to\infty$ for $j$ constant. In particular, for a small interval size, we can approximate the integral in \eqref{E:ConvergenceCovOp} by $\ls A, {\cal C}(k 2^{-j}) B \rs$ by the c{\` a}dl{\` a}g property.
    
\begin{Ex}[Continuation of example \ref{Example:SignalPlusNoise}]\label{Example:SignalPlusNoise2} Note that in the signal plus noise model we have $\log M_{J,k,n} = \log (c_k) + \xi_k$ and therefore
\[
    \Cov(\log M_{J,k,n}) = \Cov(\xi_k) = \mathcal{C}(k2^{-J}) \equiv \cal C
\]
for some $\cal C \in {\cal L}$. Hence the extension of the covariance operators is trivially given by $\mathcal{C}(u) := \mathcal{C},~u\in[0,1]$. Suppose the noise matrices 
$\xi_k$ have independent entries, that is 
\[
\xi_k = \sum_{1\leq i\leq j\leq d} z_{ij}^k E_{ij},
\]
where $(z_{ij}^k)_{i,j,k}$ are independent with $\E z_{ij}^k = 0$ and $\Var (z_{ij}^k) = \sigma_{ij}^2$. The matrices $E_{ij} \in Sym(d)$ are defined by 
\[
    (E_{ij})_{k,l} 
    = \begin{cases}
    1 &,\quad  i=k, j=l\ ~\text{or}\ ~ i=l, j=k\\
    0 &, \quad  \text{else}
    \end{cases}     
\]
and obviously form a basis for $Sym(d)$. 

Now $\mathcal{C} = \Cov(\xi_k)\colon Sym(d) \to Sym(d)$ is a $4$th order tensor which can be represented (with respect to 
the standard basis) by an array $(C_{ijnm})_{i,j,n,m}$, i.e. $(\mathcal{C} A)_{i,j} = \sum_{n,m} C_{ijnm}A_{nm}$ for $A \in Sym(d)$. Since the covariance operator
is uniquely characterized by $\ls  \mathcal{C} A,A \rs_F = \Var \ls \xi_k,A \rs_F$ for all $A \in Sym(d)$, writing out this equation yields
\[
    \sum_{i,j} \sum_{n,m} C_{ijnm}A_{nm}A_{ij} 
    = \Var \sum_{i,j} (\xi_k)_{ij} A_{ij} 
    = \sum_{i,j} \Var(z_{ij}^k) A_{ij}^2
    = \sum_{i,j} \sigma_{ij}^2  A_{ij}^2,
\]
which in turn implies for the coefficients 
\[
    C_{ijnm} 
    = \begin{cases}
    \sigma_{ij}^2 &, \quad i=n, j=m\  ~\text{or}\ ~ i=m, j=n\\
    0             &, \quad \text{else}.
    \end{cases} 
\]
A similar consideration shows that in the general case of dependent entries the coefficients of $\cal C$ are given by
\[
    C_{ijnm} = \Cov((\xi_k)_{ij}, (\xi_k)_{nm}) =: \sigma_{ijnm}.
\]
Moreover due to the symmetry constraint $\sigma_{ijnm} = \sigma_{nmij} = \sigma_{mnij} = \sigma_{ijmn}$. 
\end{Ex}

\subsection{Asymptotics of the AI wavelet estimator}\label{Subsec:Asymp_linear_est}

In order to derive our main result on asymptotic normality for the estimator $\wh{M}_{J,k,n}$, it might be useful to first recall existing results on mean-square consistency of this estimator which result from the findings in \cite{chau2018} and \cite{chau2020intrinsic} for a general  class of Riemannian metrics, including our considered log-Euclidean metric. We state here the corresponding result without a proof.


\begin{Prop}[Rate of convergence from \cite{chau2020intrinsic} and \cite{chau2018} Appendix A]\label{P:ConsistencyLinEstimator}
Consider the wavelet estimator $\wh{M}_{J,k,n}$, based on a refinement scheme of order $N$ and on (linear) thresholding on scale $J_0$. Let further $c(t)$ be a smooth curve of SPD matrices, possessing at least $N$ continuous derivatives. Then 
there is a $C\in\R_+$, which does not depend on $n$ and $k$ such that
$$
		\E[ \| \log \wh{M}_{J,k,n} - \E[ \log \wh{M}_{J,k,n}]  \|_F^2 ] \le C 2^{-(J-J_0)}
$$
and
$$
	\| \E[ \log \wh{M}_{J,k,n}] - \log M_{J,k}  \|_F \le C 2^{- N J_0 }.
$$
In particular, the choice $J_0 = \floor{J/(2N+1)}\ $ matches the variance and the squared bias, so
$$
	\E[ d( \wh{M}_{J,k,n}, M_{J,k} )^2 ] = \E[ \| \log \wh{M}_{J,k,n} - \log M_{J,k} \|_F^2 ] \le 2 C n^{-2N/(2N+1)}\ ,
$$
which implies that $\ n^{-1} \sum_{k=0}^{n-1} \E[ d( \wh{M}_{J,k,n}, M_{J,k} )^2 ] = O(n^{-2N/(2N+1)})$.
\end{Prop}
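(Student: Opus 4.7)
The plan is to reduce the problem to a linear wavelet estimator in the matrix-logarithm coordinates, where the log-Euclidean geometry makes the analysis essentially Euclidean. Since linear thresholding sets $\wh D_{j,k,n}^{(t)} = 0$ for $j \ge J_0$, the backward transform \eqref{ReconstructionOdd}--\eqref{ReconstructionEven} amounts to iteratively applying the AI predictor to the scale-$J_0$ midpoints $(M_{J_0,k',n})_{k'}$. Because the predictor acts linearly on matrix logarithms (by Lemma~\ref{L:j-stepPrediction} and the block structure $(E_N)_{\dn d}, (O_N)_{\dn d}$), one can write
\begin{align*}
\log\wh M_{J,k,n} \;=\; \sum_{k'} W_{J,J_0,k,k'} \, \log M_{J_0,k',n},
\end{align*}
with deterministic matrices $W_{J,J_0,k,k'}$ that are block entries of a product of $J-J_0$ transition matrices of type $(E_N)_{\dn d}$ or $(O_N)_{\dn d}$. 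The crucial structural input is that the operator norm of such products is uniformly bounded in $J-J_0$ and that the number of nonzero $k'$ stays $O(N)$; this is precisely the content of \eqref{E:LimitMatrix} together with the Daubechies--Lagarias convergence argument cited there.

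For the variance bound, combine the midpoint pyramid \eqref{midpointPyramid} with the uniform moment condition \eqref{E:MomentConditionLogM}: iterating the pyramid gives $\log M_{J_0,k',n} = 2^{-(J-J_0)} \sum_{\ell} \log M_{J,\ell,n}$ over a block of $2^{J-J_0}$ independent samples, each with uniformly bounded covariance operator. Hence $\Cov(\log M_{J_0,k',n})$ has operator norm of order $2^{-(J-J_0)}$, and the linear representation above, combined with the uniform bound on $\|W_{J,J_0,k,k'}\|$ and the $O(1)$ support width, yields
\begin{align*}
\E\bigl\|\log\wh M_{J,k,n} - \E\log\wh M_{J,k,n}\bigr\|_F^2 \;\le\; C\, 2^{-(J-J_0)}.
\end{align*}

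For the bias, taking expectations gives $\E[\log M_{J_0,k',n}] = \log M_{J_0,k'} = 2^{J_0}\int_{I_{J_0,k'}}\log c(t)\,\dd t$, so $\E[\log\wh M_{J,k,n}]$ is the deterministic AI reconstruction applied to the exact scale-$J_0$ cell averages of $\log c$. Because AI of order $N$ reproduces intrinsic polynomials of degree $\le N-1$ exactly --- and in the log-Euclidean setting this collapses to the classical Euclidean AI reproduction applied entrywise to $\log c$ --- the bias is governed by the Taylor remainder of order $N$ of $\log c$, which has $N$ continuous derivatives since $c$ does. The standard interpolation-error argument of \cite{donoho1993smooth} then yields $\|\E\log\wh M_{J,k,n} - \log M_{J,k}\|_F \le C 2^{-NJ_0}$. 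Finally, balancing the variance and the squared bias, $2^{-(J-J_0)} \asymp 2^{-2NJ_0}$, forces $J_0 = \lfloor J/(2N+1)\rfloor$ and the stated MSE rate $n^{-2N/(2N+1)}$; averaging over $k$ preserves it.

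The main obstacle I anticipate is the uniform norm control on the products $W_{J,J_0,k,k'}$: without it the variance could blow up geometrically in $J-J_0$. The convergence \eqref{E:LimitMatrix} to the bounded limit $\Phi_L(x)_{\dn d}$ together with the continuity and compact support of the fundamental solution $\phi_L$ (cf.\ \cite{donoho1993smooth, daubechies1992two}) are exactly what deliver this bound; once in hand, the remainder of the proof is a routine transcription of the scalar AI wavelet analysis into matrix-logarithm coordinates.
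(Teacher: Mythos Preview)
The paper does not actually prove this proposition: it is explicitly stated as a citation of results from \cite{chau2020intrinsic} and \cite{chau2018}, with the sentence ``We state here the corresponding result without a proof.'' So there is no in-paper argument to compare against.

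That said, your sketch is the correct one and is essentially what those references do, specialized to the log-Euclidean setting where everything collapses to a scalar-type argument in $\log$-coordinates. The linear representation $\log\wh M_{J,k,n} = \sum_{k'} W_{J,J_0,k,k'}\log M_{J_0,k',n}$ with $O(N)$ nonzero terms is exactly the content of Lemma~\ref{L:j-stepPrediction} and \eqref{E:ConsistenyLinEst1}; the variance bound then follows from independence, the pyramid averaging \eqref{E:NormalityLogM1}, and \eqref{E:MomentConditionLogM}. For the bias you correctly reduce to Euclidean AI reproduction of polynomials applied to $\log c$, which inherits $N$ continuous derivatives from $c$ since the matrix logarithm is smooth on $Sym^+(d)$.

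The one place to be slightly more careful is the uniform boundedness of the products $X_{d_m}\cdots X_{d_1}$: pointwise convergence in \eqref{E:LimitMatrix} alone is not enough, but the Daubechies--Lagarias theory in fact gives that these products are bounded uniformly over all binary sequences and all $m$ (equivalently, $\phi_L$ is continuous with compact support, so $\sup_{x,i,j}|\Phi_L(x)_{i,j}|<\infty$; this is precisely what the paper invokes later in the proof of Proposition~\ref{P:Cor_Thm_AN_LE}). With that in hand your argument goes through.
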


\vspace{0.5cm}

Now we turn to deriving  asymptotic normality of the wavelet estimator which follows from the above lifting scheme and the following proposition.
\begin{Prop}[Asymptotic normality of $\log M_{j,k,n}$]\label{P:NormalityLogM} Assume that the uniform bounded moments condition~\eqref{E:MomentConditionLogM} is satisfied for a $p>2$.
\begin{itemize}\setlength\itemsep{1em}
\item [(a)] Let $j>0$ and let $0\le k \le 2^j-1$ be arbitrary but fixed. Then
$$
	2^{-j/2} n^{1/2} (\log M_{j,k,n} - \E[   \log M_{j,k,n} ] )\Rightarrow {\cal N}(0,\Sigma_{j,k}), \quad n=2^J \to\infty,
$$ 
on the Hilbert space $Sym(d)$ and the covariance operator $\Sigma_{j,k}$ is characterized by the condition
\[
	\Sigma_{j,k} = 2^{j} \int_{k 2^{-j} }^{(k+1)2^{-j}} {\cal C}(u) \diff u .
\]
\item [(b)]  Let $(J_{0,n},k_n)_n$ be a sequence such that $J_{0,n}\to\infty$ and $J-J_{0,n} \to\infty$ and $k_n 2^{-J_{0,n}} \to x \in [0,1]$, which is a point of continuity of ${\cal C}$. Then
$$
	2^{-J_{0,n}/2} n^{1/2} (\log M_{J_{0},k,n} - \E[   \log M_{J_{0},k,n} ] ) \Rightarrow {\cal N}(0,{\cal C}(x)).
$$
\end{itemize}
\end{Prop}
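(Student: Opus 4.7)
The starting point is to unfold the midpoint pyramid. Since the averaging relation \eqref{midpointPyramid} translates in the log--Euclidean chart into the arithmetic mean
\[
    \log M_{j,k,n} \;=\; \frac{1}{2^{J-j}} \sum_{u=0}^{2^{J-j}-1} \log M_{J,\,k2^{J-j}+u,\,n},
\]
and the summands on the right are independent by assumption, the quantities we wish to analyze are, up to normalization, sums of independent mean-zero random elements of the finite-dimensional Hilbert space $Sym(d)$. A direct computation shows
\[
    2^{-j/2} n^{1/2} \bigl( \log M_{j,k,n} - \E[\log M_{j,k,n}] \bigr)
    \;=\; \frac{1}{\sqrt{2^{J-j}}} \sum_{u=0}^{2^{J-j}-1} Y_{J,u},
\]
where $Y_{J,u} := \log M_{J,k2^{J-j}+u,n} - \E[\log M_{J,k2^{J-j}+u,n}]$ are independent, centered, with values in $Sym(d)$. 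The same algebraic identity holds in (b) with $j$ replaced by $J_{0,n}$ and $k$ by $k_n$.

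Next I would invoke the Lindeberg--Feller central limit theorem for triangular arrays in the finite-dimensional Hilbert space $Sym(d)$ (which one can reduce to the classical scalar version via Cram\'er--Wold on an orthonormal basis of $Sym(d)$). The Lindeberg condition is verified through the Lyapunov route: from the uniform bounded moments condition \eqref{E:MomentConditionLogM} with exponent $p>2$ one has $\E\|Y_{J,u}\|_F^p \le 2^p c_p$, hence
\[
    \frac{1}{(2^{J-j})^{p/2}} \sum_{u=0}^{2^{J-j}-1} \E \|Y_{J,u}\|_F^{p}
    \;\le\; 2^p c_p \cdot (2^{J-j})^{1-p/2} \;\to\; 0,
\]
as $J\to\infty$. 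In case (b) the same estimate yields the Lyapunov bound because the number of summands $2^{J-J_{0,n}}\to\infty$ by hypothesis.

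It remains to identify the limiting covariance operator. For (a), the covariance of the normalized sum equals
\[
    \frac{1}{2^{J-j}} \sum_{u=0}^{2^{J-j}-1} {\cal C}\bigl( (k 2^{J-j}+u)\,2^{-J} \bigr),
\]
which is a Riemann sum for $2^j\!\int_{k2^{-j}}^{(k+1)2^{-j}} {\cal C}(u)\,\dr u$. Convergence (as operators on the finite-dimensional space $Sym(d)$) is equivalent to convergence of all bilinear forms $\ls A, \cdot\, B\rs_F$, which by \eqref{E:ConvergenceCovOp} converges to $\ls A, \Sigma_{j,k} B\rs_F$, giving exactly the operator $\Sigma_{j,k}$ claimed. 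For (b), the Riemann sum is taken over the shrinking interval $[k_n 2^{-J_{0,n}}, (k_n+1)2^{-J_{0,n}})$ of length $2^{-J_{0,n}}$ containing (or accumulating to) the continuity point $x$ of ${\cal C}$; the c\`adl\`ag property of ${\cal C}$ implies uniform closeness of ${\cal C}(\cdot)$ to ${\cal C}(x)$ on a sufficiently small neighborhood of $x$, so the Riemann sum tends to ${\cal C}(x)$.

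The only mildly delicate step is the interplay between the two different asymptotic regimes in part (b): one needs $J-J_{0,n}\to\infty$ to activate the CLT (enough averaged summands) and simultaneously $J_{0,n}\to\infty$ to collapse the averaging interval onto the continuity point of ${\cal C}$. Both are granted by hypothesis, so the Lindeberg--Feller conclusion together with the covariance identification completes the proof.
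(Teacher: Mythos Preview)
Your proposal is correct and follows essentially the same route as the paper: unfold the midpoint pyramid into an arithmetic mean in the log-chart, apply Cram\'er--Wold together with a Lyapunov-type CLT for triangular arrays (using the uniform $p$th-moment bound \eqref{E:MomentConditionLogM}), and identify the limiting covariance via the Riemann-sum convergence \eqref{E:ConvergenceCovOp} in part~(a) and continuity of ${\cal C}$ at $x$ in part~(b). The only cosmetic difference is that the paper fixes $A\in Sym(d)$ from the outset and works with the scalar array $\ls A,\log M_{J,u,n}\rs$, whereas you first phrase the CLT in $Sym(d)$ and then reduce via Cram\'er--Wold; the substance is identical.
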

In order to quantify the limiting distribution of the wavelet estimator, we need additionally the limit
\[
		E_{N,\infty} = \lim_{m\to\infty} E_N^m = \Phi_L(0) \in \R^{(2N-1) \times (2N-1)},
\]
see \eqref{E:LimitMatrix}, and the corresponding quantity 
\begin{equation}\label{E:kappa}
 \kappa_N:=   \sum_{i=-2L}^{2L} (E_{N,\infty})^2_{1,i} = \sum_{i=-2L}^{2L} \phi^2_L(i),
\end{equation}
which enters into the variance of the Gaussian approximation. 
\begin{Rem}
A factor of the form $\kappa_N$ is quite common to appear in the limit of wavelet estimators. We point out that for example \eqref{E:kappa} coincides with the term
$w_0^2$ which appears in Theorem 3.4 in \cite{antoniadis1994wavelets}.
\end{Rem}
\begin{Thrm}[Asymptotic normality of the wavelet estimator]\label{T:NormalityLinEst}
Consider the wavelet estimator $\wh M_{J_n,k_n,n}$ at a dyadic position $x \in (0,1)$ of the finest (sample) scale $J_n$, and let $J_{0,n}\to\infty$, $J_n- J_{0,n} \to \infty$, and $k_n 2^{-J_{0,n}} \to x \in [0,1]$. Assume that $\cal C$ is continuous in $x$. Then for $p>2$
\begin{align*}
	 2^{-J_{0,n}/2} n^{1/2} (\log \wh M_{J_n,k_n,n} - \E[\log \wh M_{J_n,k_n,n} ] ) \Rightarrow {\cal N}\Big(0, \kappa_N\ {\cal C}(x)  \Big), \quad n\to\infty.
\end{align*}
\end{Thrm}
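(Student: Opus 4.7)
The plan is to reduce the statement to a finite deterministic linear combination of independent coarse-scale log-midpoints and then apply Proposition~\ref{P:NormalityLogM}(b). Because the thresholding in step (2) zeroes every wavelet coefficient at scales $j\ge J_{0,n}$ and the backward transform is the exact inverse of the forward transform on the unthresholded scales, one has $\widehat M_{J_{0,n}-1,k,n}=M_{J_{0,n}-1,k,n}$ for every $k$, so that from scale $J_{0,n}$ onwards the reconstruction reduces to pure AI refinement \eqref{EvenTransition}--\eqref{OddTransition}. Iterating Lemma~\ref{L:j-stepPrediction} over $m_n:=J_n-J_{0,n}$ steps then yields the representation
\begin{equation*}
	\log \widehat M_{J_n,k_n,n} \;=\; \sum_{i=-2L}^{2L}\alpha_{i,n}\, \log M_{J_{0,n},k^{\ast}_n+i,n},
\end{equation*}
where $k^{\ast}_n$ is the coarse-scale index containing $x$ and $(\alpha_{i,n})_i$ is the row of the product $X_{d_{m_n}}\cdots X_{d_1}$ selected by the position of $k_n$ inside that cell.

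I would next identify the limit of the deterministic coefficients $\alpha_{i,n}$. Since $x\in(0,1)$ is dyadic and $J_{0,n}\to\infty$, for $n$ large enough $x\cdot 2^{J_{0,n}}\in\mathbb{Z}$, hence $k_n=x\cdot 2^{J_n}=2^{m_n}k^{\ast}_n$, i.e.\ $k_n$ sits at the left boundary of its coarse cell. The binary digits governing the choice between $(E_N)_{\dn d}$ and $(O_N)_{\dn d}$ are therefore all zero, so the product collapses to $(E_N)_{\dn d}^{m_n}$. By \eqref{E:LimitMatrix} applied at $x=0$, this matrix converges to $\Phi_L(0)_{\dn d}=(E_{N,\infty})_{\dn d}$, whose rows all equal $(\phi_L(-2L),\ldots,\phi_L(2L))\otimes I_d$. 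Consequently $\alpha_{i,n}\to\phi_L(i)$ for every $i\in\{-2L,\ldots,2L\}$.

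The probabilistic content is then a joint CLT for the centred coarse log-midpoints. Set
\begin{equation*}
	Y_{i,n} := 2^{-J_{0,n}/2}\, n^{1/2} \bigl( \log M_{J_{0,n},k^{\ast}_n+i,n} - \E \log M_{J_{0,n},k^{\ast}_n+i,n} \bigr).
\end{equation*}
The $Y_{i,n}$ depend on pairwise disjoint blocks of the independent finest-scale sample and are therefore mutually independent for every fixed $n$. Proposition~\ref{P:NormalityLogM}(b), applied separately to each $i$ using $(k^{\ast}_n+i)\,2^{-J_{0,n}}\to x$ and continuity of $\mathcal C$ at $x$, gives $Y_{i,n}\Rightarrow\mathcal N(0,\mathcal C(x))$; combined with componentwise independence via the Cram\'er--Wold device on $Sym(d)^{4L+1}$, this upgrades to joint convergence to independent $\mathcal N(0,\mathcal C(x))$ components. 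Slutsky, together with the deterministic convergence $\alpha_{i,n}\to\phi_L(i)$, finally yields
\begin{equation*}
	2^{-J_{0,n}/2}\, n^{1/2} \bigl( \log \widehat M_{J_n,k_n,n} - \E \log \widehat M_{J_n,k_n,n} \bigr) \;\Rightarrow\; \sum_{i=-2L}^{2L} \phi_L(i)\, Z_i,
\end{equation*}
with $Z_i$ i.i.d.\ $\mathcal N(0,\mathcal C(x))$. By independence, the right-hand side is centred Gaussian with covariance operator $\bigl(\sum_{i=-2L}^{2L}\phi_L^2(i)\bigr)\mathcal C(x)=\kappa_N\mathcal C(x)$, as required.

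The main technical obstacle is the bookkeeping in the first two steps: carefully identifying which row of the transition matrix is activated by the fine-scale index $k_n$, and verifying that for $J_{0,n}$ eventually exceeding the binary length of the dyadic $x$ that row does converge to $(\phi_L(-2L),\ldots,\phi_L(2L))$, so that the limiting variance constant is exactly $\kappa_N$. Once this is in place, the probabilistic ingredients are essentially immediate from Proposition~\ref{P:NormalityLogM}(b) and the mutual independence of observations across disjoint coarse cells.
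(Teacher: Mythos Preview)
Your proposal is correct and takes a genuinely different route from the paper's own proof. Both arguments begin from the same representation of $\log\widehat M_{J_n,k_n,n}$ as a deterministic linear combination $\sum_i\alpha_{i,n}\log M_{J_{0,n},k^\ast_n+i,n}$, obtained by iterating Lemma~\ref{L:j-stepPrediction} over $J_n-J_{0,n}$ refinement steps, and both exploit that for a dyadic $x$ the binary digits $d_{J_0+1},\ldots,d_J$ are eventually all zero so that the matrix product collapses to $E_N^{m_n}$ and the row coefficients converge to $(\phi_L(-2L),\ldots,\phi_L(2L))$ by~\eqref{E:LimitMatrix}.

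Where the two arguments diverge is in the probabilistic step. The paper does not invoke Proposition~\ref{P:NormalityLogM}(b) as a black box; instead it substitutes the finest-scale representation~\eqref{E:NormalityLogM1} for each $\log M_{J_{0,n},k^\ast_n+i,n}$, writes out the variance of $\langle A,\log\widehat M_{J_n,k_n,n}\rangle$ directly as in~\eqref{E:NormalityLinEst3}, shows that it converges to $(\sum_i\alpha_{i,n}^2)\langle A,\mathcal C(x)A\rangle$ by continuity of $\mathcal C$, and then verifies the Lyapunov condition for the full triangular array of fine-scale summands. Your argument is more modular: you observe that the $4L+1$ coarse-scale log-midpoints are mutually independent because they are built from disjoint blocks of fine-scale data, apply Proposition~\ref{P:NormalityLogM}(b) to each of them separately (using $(k^\ast_n+i)2^{-J_{0,n}}\to x$), lift marginal to joint weak convergence via factorisation of characteristic functions, and then conclude by Slutsky with the deterministic convergence $\alpha_{i,n}\to\phi_L(i)$. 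This makes the appearance of $\kappa_N=\sum_i\phi_L(i)^2$ completely transparent. The price you pay is that your route relies essentially on the convergence of the $\alpha_{i,n}$ and therefore does not immediately adapt to the non-dyadic situation of Proposition~\ref{P:Cor_Thm_AN_LE}, whereas the paper's direct variance computation, together with the uniform lower bound $\sum_v\alpha_v^2\ge 1/(2N+3)$ and the Lyapunov verification, handles both cases in one argument.
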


As usual with asymptotics ($n\to\infty$) for estimating curves sampled on an asymptotically finer and finer grid, one has to understand the above given expression for the asymptotic variance $\kappa_N C(x)$ in the sense that the reference location $x$ has to remain the same fixed dyadic rational in $(0,1)$ for all $n$.\\
In general, i.e. for $x$ not remaining the same fixed dyadic number, we are still be able to formulate a Central Limit Theorem of the following form:
\begin{Prop}\label{P:Cor_Thm_AN_LE}
Let $x\in (0,1)$ be not necessarily a dyadic number and assume the same regularity conditions as in Theorem~\ref{T:NormalityLinEst}. Then
\begin{align*}
	& Var( \ls A, \log \wh M_{J_n,k_n,n} - \E[\log \wh M_{J_n,k_n,n} ]  \rs ) ^{-1/2} \\
	&\qquad \cdot  (\ls A,\log \wh M_{J_n,k_n,n} - \E[\log \wh M_{J_n,k_n,n} ]\rs ) \Rightarrow {\cal N}(0,1), \quad n\to\infty,
\end{align*}
for any $0\neq A\in Sym(d)$.
\end{Prop}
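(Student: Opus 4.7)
The plan is to reduce the assertion to a univariate central limit theorem for a sum of $O(1)$ independent random variables, using the lifting scheme to linearize the estimator, and to cope with the generally oscillating lifting weights through the Slutsky-type normalization by the true variance.

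\textbf{Step 1 (Linearization).} First I would use that all wavelet coefficients above scale $J_{0,n}$ are set to zero, so that the estimator arises from $m_n := J_n - J_{0,n}$ iterations of the predicting refinements \eqref{EvenTransition}, \eqref{OddTransition}. By Lemma~\ref{L:j-stepPrediction}, this yields a representation
\[
\log \wh{M}_{J_n,k_n,n} = \sum_{l} w^{(n)}_l \log M_{J_{0,n},l,n},
\]
with $l$ running through a window of fixed size $2N-1$ around the coarse index associated to $x$, and scalar weights $w^{(n)}_l$ forming one row of the matrix product $X_{m_n}\cdots X_1$ with $X_i \in \{(E_N)_{\dn d},(O_N)_{\dn d}\}$. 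Subtracting means and pairing with $A$ gives $Y_n = \sum_l w^{(n)}_l \eta_{n,l}$ where $\eta_{n,l} := \ls A,\, \log M_{J_{0,n},l,n} - \E[\log M_{J_{0,n},l,n}]\rs$. Because each $M_{J_{0,n},l,n}$ is, via the midpoint pyramid \eqref{midpointPyramid}, an average over a disjoint dyadic block of the independent finest-scale observations, the $\eta_{n,l}$ are independent across $l$.

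\textbf{Step 2 (CLT via subsequences).} Next I would apply Proposition~\ref{P:NormalityLogM}(b) together with the Cram{\'e}r-Wold device: for each fixed window index $l$ one has $l\,2^{-J_{0,n}} \to x$, hence
\[
(n/2^{J_{0,n}})^{1/2}\,\eta_{n,l} \Rightarrow \mathcal{N}\bigl(0,\,\ls A, \mathcal{C}(x) A\rs\bigr),
\]
jointly to a vector of independent Gaussians with common variance. Consequently $s_n^2 := \Var(Y_n) = (2^{J_{0,n}}/n)\,\ls A, \mathcal{C}(x) A\rs\, \sum_l (w^{(n)}_l)^2\,(1+o(1))$. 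Writing
\[
\frac{Y_n}{s_n} = \sum_l \frac{w^{(n)}_l}{\bigl(\sum_m (w^{(n)}_m)^2\,\ls A,\mathcal{C}(x)A\rs\bigr)^{1/2}}\,(n/2^{J_{0,n}})^{1/2}\eta_{n,l}\,(1+o(1)),
\]
I would extract, along an arbitrary subsequence, a further subsequence on which the bounded vector $(w^{(n)}_l)_l$ converges to some $(w^*_l)_l$. The continuous-mapping theorem together with joint convergence then yields weak convergence to $\sum_l w^*_l Z_l \big/ \bigl(\sum_l (w^*_l)^2\,\ls A, \mathcal{C}(x) A\rs\bigr)^{1/2}$, with $Z_l$ i.i.d.\ $\mathcal{N}(0,\ls A,\mathcal{C}(x) A\rs)$, which is exactly $\mathcal{N}(0,1)$. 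Since every subsequential limit equals $\mathcal{N}(0,1)$, the full sequence converges.

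\textbf{Step 3 (Main obstacle).} The crux is the non-degeneracy of $s_n$. Uniform boundedness of $(w^{(n)}_l)_{n,l}$ is implied by \eqref{E:LimitMatrix}: every partial product $X_m\cdots X_1$ converges to an entry of the bounded, continuous function $\Phi_L$, and the partial products inherit uniform bounds from the classical AI theory of \cite{daubechies1992two,donoho1993smooth}. More delicately, one must rule out subsequential limits $(w^*_l)_l$ that vanish identically; this follows because every row of $\Phi_L(x')_{\dn d}$ consists of evaluations of the fundamental scaling function $\phi_L$, which does not vanish identically on its support (the same non-degeneracy underlies $\kappa_N > 0$ in Theorem~\ref{T:NormalityLinEst}). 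Combined with the standing positive-definiteness of $\mathcal{C}(x)$ and $A \neq 0$, this gives $\sum_l (w^*_l)^2\,\ls A, \mathcal{C}(x) A\rs > 0$, so the standardized limit is indeed $\mathcal{N}(0,1)$ and the subsequence principle closes the argument.
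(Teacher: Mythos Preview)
Your argument is correct but takes a genuinely different route from the paper. The paper expands the estimator all the way down to the finest scale $J$ via \eqref{E:ConsistenyLinEst1} and verifies the Lyapunov condition directly for the resulting triangular array of $O(2^{J-J_0})$ independent summands: using the uniform moment bound \eqref{E:MomentConditionLogM}, the uniform upper bound $\max_v |\alpha_v| \le \sup_x \max_{i,j} \Phi_L(x)_{i,j}$, and the uniform lower bound $\sum_v \alpha_v^2 \ge 1/(2N+3)$, the Lyapunov ratio is shown to be $O(2^{-\delta(J-J_0)/2})$. Your approach instead stays at the coarse scale $J_0$, invokes Proposition~\ref{P:NormalityLogM}(b) for each of the finitely many blocks, and absorbs the non-convergence of the lifting weights through a subsequence argument. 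Both methods are valid; the paper's is more self-contained (it does not re-enter Proposition~\ref{P:NormalityLogM}) and gives an explicit Lyapunov rate, while yours separates the roles of the noise and the weights more transparently.

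One point in your Step~3 deserves tightening. You justify non-degeneracy of $s_n$ by arguing that every subsequential limit $(w_l^*)_l$ is a row of some $\Phi_L(x')_{\dn d}$ and that $\phi_L$ does not vanish identically. This is more than you need and not obviously true as stated: the matrix products here involve the \emph{shifted} digit string $(d_{J_{0,n}+1}(x),d_{J_{0,n}+2}(x),\ldots)$, and identifying every subsequential limit with a $\Phi_L(x')$ requires an additional compactness/continuity argument. The paper's device is simpler and works uniformly in $n$: every row of $E_N$ and $O_N$ sums to $1$, hence so does every row of any product, so $\sum_l w_l^{(n)} = 1$ for all $n$ and therefore $\sum_l (w_l^{(n)})^2 \ge 1/(2N-1)$ by the constrained minimization $\min\{\sum w_l^2 : \sum w_l = 1\}$. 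This directly gives $\liminf_n s_n^2 / (2^{J_{0,n}}/n) > 0$ without any appeal to subsequential limits of the weights.
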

%

\begin{Rem}
We shed a bit more light onto the fact that for non-dyadic numbers no stable limiting variance of the wavelet estimator may exist, and that we can only give the limiting behavior of the wavelet estimator in the ``standardized'' version of Proposition~\ref{P:Cor_Thm_AN_LE}. This phenomenon has already been observed by  \cite{antoniadis1994wavelets}, Theorem 3.3 and its discussion, for ordinary scalar-valued wavelet estimators. Essentially it is induced by the fact that for non-dyadic numbers on scale $j$ of the form $t_j = 2^j t - [2^j t ] \neq 0$, this sequence $t_j$ - and hence the asymptotic variance expressed as a function of this $t_j$ -  fail to converge. In our situation this can be translated to non-convergence of the matrix products of the type $X_J \cdot X_{J-1} \cdot \ldots \cdot X_{J_0+1}$ with $X_i\in \{ (E_N)_{\dn d}, (O_N)_{\dn d} \}$, a representation occurring in  Lemma~\ref{L:j-stepPrediction} (see also \eqref{E:NormalityLinEst2} in the Supplement). Recall that these products describe what happens if one lifts the AI wavelet estimator smoothed on scale $J_0$ from $J_0$ back to the sampling scale $J$. As both scales $J_0$ and $J$ need to tend to infinity simultaneously (though on different rates with $n$), convergence to finite limit depends on the fact whether there exists a $n_0$ beyond which the sequence  $t_{n,0} := k_n 2^{-J_{0,n}} = x \in (0,1)$ for all $n > n_0$ (case of a dyadic point) or not: The affirmative case amounts to considering $J_0=J_{0,n}$ (i.e. the element $X_{J_0+1}$ in the matrix product) akin to remain fixed for all $n > n_0$, and the considered infinite matrix product possesses a limit (which is given by \eqref{E:LimitMatrix}). In the opposite case $X_{J_0+1}$ continues to change with $n$, leading to no convergence of the matrix product. By the proof of Proposition~\ref{P:Cor_Thm_AN_LE} we can, however, show that the behaviour remains somewhat controlled in this latter case.
\end{Rem}

With our derived Central Limit Theorem~\ref{T:NormalityLinEst} we observe strong parallels in the given rates with those from classical nonparametric curve estimation. Choosing a scale $J_0<J$ is needed in order to collect enough statistical information for the estimation procedure using the midpoint pyramid algorithm, as the variance of a potential estimator on the initial scale $J$ would not decay. In order to increase the precision of the estimator, we also need the usual nonparametric conditions that $J= J_{0,n}\to\infty$ and $J_n- J_{0,n} \to \infty$, for which the precise asymptotic rates can be obtained as usual by matching the (squared) asymptotic bias and variance as in Proposition~\ref{P:ConsistencyLinEstimator}. Note, however, that this optimal choice $J_0 = \floor{J/(2N+1)}$ is merely of asymptotic flavour, as in practice, for reasonably small sample sizes $n=2^J$, this would lead to far too small scales $J_0$. 

\begin{Rem}
Clearly, Theorem~\ref{T:NormalityLinEst} is closely linked to Proposition~\ref{P:NormalityLogM} because the convergence to a Gaussian distribution is induced by the midpoint pyramid algorithm, which is applied in both statements.
The main difference between both results consists in the final lifting step from the coarse scale $J_0$ back to the initial (sampling) scale $J$. This step rescales the limiting covariance operator from ${\cal C}(x)$ to $\kappa_N {\cal C}(x)$. Although we have not been able to prove, we conjecture that $\kappa_N < 1$ if $L=(N-1)/2 > 1$ which shows a decrease of the value of the asymptotic variance. In the following we give a small  discussion of the values of $\kappa_N$ for the first $N=1,3,5,7$:

If $L=0$, then the wavelet estimator simply corresponds to the $\log M_{J_0,k,n}$.
 If $L=1$, i.e. $N=3$, we obtain for the limit of $\lim_{m\to\infty} E_3^m = E_{3,\infty}$ the matrix
\begin{align*}
		E_3^m &=
		\begin{pmatrix}
		\frac{1}{8} & 1 & - \frac{1}{8} & 0 & 0\\
		- \frac{1}{8} & 1 &  \frac{1}{8} & 0 & 0 \\
		0 & \frac{1}{8} & 1 & - \frac{1}{8} & 0 \\
		0 & - \frac{1}{8} & 1 & \frac{1}{8} & 0 \\
		0 & 0 & \frac{1}{8} & 1 & -\frac{1}{8}
	\end{pmatrix}^m \rightarrow 
	\frac{1}{12}
	\begin{pmatrix}
		- 1 & 7 & 7 & - 1 & 0 \\
		- 1 & 7 & 7 & - 1 & 0 \\
		- 1 & 7 & 7 & - 1 & 0 \\
		- 1 & 7 & 7 & - 1 & 0 \\
		- 1 & 7 & 7 & - 1 & 0 
 	\end{pmatrix} = E_{3,\infty}.
\end{align*}

With some effort we were able to analytically compute $\kappa_N$ for some typical choices of $N$.
The results are summarized in Table \ref{table.valuesKappa}. 

\begin{table}
\centering
\begin{tabular}{|c|c|c|c|c|}
\hline
& $N=1$ & $N=3$ & $N=5$ & $N=7$\\
\hline 
&&&&\\ [-1pt]
 $\kappa_N$ & $1$ & $\frac{25}{36} \approx 0.69 $ & $\frac{168549}{213160} \approx 0.79 $ & $\frac{107721892723}{126282847320} \approx 0.85 $\\[5pt]
\hline
\end{tabular}

\caption{}
\label{table.valuesKappa}
\end{table}

\end{Rem}


\section{Confidence sets for the wavelet estimator}\label{Section_ConfidenceSets}

In this section we develop confidence sets for our AI-based wavelet estimators of SPD matrix-valued curves, both based on the asymptotics of the previous section as 
well as on an appropriate form of a wild bootstrap scheme. It is in particular with the latter one that we propose some interesting inference method for non-scalar 
valued non-parametric curve estimators.

\subsection{Asymptotic confidence sets} Our derived asymptotic $(1-\alpha)$-confidence sets are based on Proposition~\ref{P:NormalityLogM} and Theorem~\ref{T:NormalityLinEst}. 
In view of these results we rely on the log-normal distribution which we briefly recall here in the specific context of random matrices in $Sym^+(d)$.

\begin{Def}[Log-normal distribution]\label{D:LogNormalDistribution}
A random matrix $X \in Sym^+(d)$ has a \emph{log-normal} distribution (w.r.t.\ the log-Euclidean metric) with mean $S \in Sym^+(d)$ and covariance
operator $\Sigma \in {\cal L}$ if and only if
\begin{align*}
    \ls \log X,A \rs_F \sim \NN(\ls \log S,A \rs_F, \ls \Sigma A,A \rs_F) \quad \forall A\in Sym(d).
\end{align*}
\end{Def}
Now a straightforward way to obtain distributional properties for log-normal random matrices is to utilize that $Sym(d)$ and $\R^q$, $q=d(d+1)/2$, are isometrically
isomorphic. Consider the mapping $\eta\colon Sym(d) \to \R^q$ defined by
\begin{align*}
    \eta(A) := (A_{11},\ldots,A_{dd}, \sqrt{2}A_{12},\ldots,\sqrt{2}A_{1d},\sqrt{2}A_{23},\ldots,\sqrt{2}A_{2d},\ldots,\sqrt{2}A_{d-1,d})^T
\end{align*}
for $\quad A=(A_{ij})_{ij}$. In other words $\eta$ stacks first the diagonal elements and than (row-wise) the above diagonal elements (scaled by $\sqrt{2}$) into 
a vector. Its easy to see that $\eta$ is one-to-one and $\ls \eta A, \eta B \rs = \ls A,B  \rs_F$, thus, $\eta$ is an isometry. Moreover, a random matrix $X \in Sym^+(d)$ has a \emph{log-normal} distribution in the sense of the definition if and only if
\begin{align*}
    \eta(\log X) \sim \NN (\eta (\log S), \wt{\Sigma}) \quad with~~ \wt{\Sigma} = \Cov(\eta(\log X)) = \eta\Sigma\eta^{-1}.
\end{align*}
This follows immediately since we have for arbitrary $a \in\R^q$ and $A = \eta^{-1}a \in Sym(d)$
\begin{align*}
    \ls \eta(\log X),a \rs = \ls \log X, A \rs_F 
    &\sim \NN( \ls \log S, A \rs_F, \ls \Sigma A,A \rs_F )\\
    &\sim \NN(\ls \eta(\log S),a \rs,\ls \eta \Sigma \eta^{-1}a, a \rs),
\end{align*}
i.e., $\eta(\log X) \sim \NN(\eta(\log S), \tilde \Sigma)$. Applying this insight to Proposition \ref{P:NormalityLogM} (a) yields 
\begin{align*}
2^{-j/2}\sqrt{n}\ \eta(\log M_{j,k,n} - \log M_{j,k})   \Rightarrow \NN(0, \wt \Sigma_{j,k}),
\end{align*}
provided $\E [\log M_{j,k,n}] = \log M_{j,k}$ (which is true under for the Riemannian signal plus noise model from Example~\ref{Example:SignalPlusNoise} and where $\wt\Sigma_{j,k}= \eta\Sigma_{j,k}\eta^{-1}$).

Now suppose that $\wh{\wt\Sigma}_{j,k,n}$ is a consistent estimator for the covariance operator $\wt\Sigma_{j,k}$. As $\wt \Sigma_{j,k}^{-1/2} \NN(0,\wt \Sigma_{j,k})$ agrees with $\NN(0,I_{q\times q})$, we conclude with Slutsky's theorem and the continuous mapping theorem
\begin{align*}
    &2^{-j}n\ \| \wh{\wt\Sigma}_{j,k,n}^{-1/2}\eta(\log M_{j,k,n} - \log M_{j,k}) \|^2\\
    =&  2^{-j}n\ \eta(\log M_{j,k,n} - \log M_{j,k})^T\ \wh{\wt\Sigma}_{j,k,n}^{-1}\ \eta(\log M_{j,k,n} - \log M_{j,k})
    \Rightarrow \chi_q^2.
\end{align*}
Therefore, if we let $\chi_{q,1-\alpha}^2$ denote the $(1-\alpha)$-quantile of the $\chi_q^2$ distribution, it follows that
\begin{align*}
    &CS_{n,1-\alpha}(M_{j,k})\\
    &:= \left\{ S\in Sym^+(d) \mid 2^{-j}n\ \| \wh{\wt\Sigma}_{j,k,n}^{-1/2}\ \eta(\log M_{j,k,n} - \log S) \|^2 
    \leq \chi_{q,1-\alpha}^2 \right\}
\end{align*}
is an asymptotic $(1-\alpha)$-confidence set for $M_{j,k}$.

It is evident that this approach carries over to the wavelet estimator and that an asymptotic confidence set for $ M_{J,k}$ at a dyadic point $x=k2^{-J}=k_n 2^{-J_n}$ is in this case given by
\begin{align}\begin{split}\label{E:ConfidenceSetLinearEstimator}
    &CS_{n,1-\alpha}(M_{J,k})\\
    &:= \left\{ S\in Sym^+(d) ~\Bigg|~ 2^{J_n-J_{0,n}} \kappa_N^{-1} 
    \| \wh{\wt{\mathcal{C}}}(x)^{-1/2} \ \eta(\log \wh{M}_{J_n,k_n,n} - \log S) \|^2 
    \leq \chi_{q,1-\alpha}^2 \right\}
    \end{split}
\end{align}
provided $\wh{\wt{\mathcal{C}}}(x)$ is a consistent estimator for the covariance operator $\wt{\mathcal{C}}(x) = \eta \mathcal{C}(x)\eta^{-1}$. 
However, such an estimator can converge slowly in practice. Moreover, if the difference $J-J_0$ is not substantial the
asymptotic confidence sets are expected to be rather conservative, which is also confirmed by simulations, see Section~\ref{Sec_Applications}.
These reasons motivate to consider confidence sets based on bootstrapping the wavelet estimator as an alternative.

\begin{Ex}[Continuation of example \ref{Example:SignalPlusNoise2}]\label{Example:SignalPlusNoise3}
Recall that in the signal plus noise model $M_{J,k,n} = \exp\{\log (c_k) + \xi_k\}$ we have $\mathcal{C}(u) = \mathcal{C} = \Cov(\xi_0),~u\in[0,1]$. Furthermore, in case
$\xi_k = \sum_{1\leq i\leq j\leq d} z_{ij}^k E_{ij}$, where $(z_{ij}^k)_{i,j,k}$ are independent with $\E[ z_{ij}^k ]= 0$ and $\Var( z_{ij}^k )= \sigma_{ij}^2$,
the coefficients of $\cal C$ (with respect to the standard basis) are simply
\[
    C_{ijnm} 
    = \begin{cases}
    \sigma_{ij}^2, &  i=n, j=m  ~\text{or}~ i=m, j=n\\
    0,             & \text{otherwise}.
    \end{cases} 
\]
Thus, a straightforward calculation reveals 
\begin{align*}
    \wt{\mathcal{C}} &= \eta \mathcal{C} \eta^{-1}
    = \mathrm{diag}(\sigma_{11}^2,\ldots,\sigma_{dd}^2,2\sigma_{12}^2,\ldots,2\sigma_{1d}^2, 2\sigma_{23}^2,\ldots,2\sigma_{2d}^2,\ldots, 2\sigma_{d-1,d}^2 ).
\end{align*}
Now, let $m_{ij}^{(n)} = (\log M_{J_n,k_n,n})_{ij}$ and $a_{ij} = (\log S)_{ij}$. Then writing out the inequality in (\ref{E:ConfidenceSetLinearEstimator}) yields 
\begin{align*}
     2^{J_n - J_{0,n}}\kappa_N^{-1}  
     \sum_{1\leq i \leq j \leq d} \frac{(m_{ij}^{(n)} - a_{ij})^2}{\sigma_{ij}^2} \leq \chi_{q,1-\alpha}^2
\end{align*}
Viewed as an inequality in $(a_{ij})_{i,j}$ this subset of $Sym(d)$ corresponds (via the isometry $\eta$) to an ellipsoid in $\R^q$ with center 
\begin{align*}
(m_{11}^{(n)},\ldots,m_{dd}^{(n)}, \sqrt{2} m_{12}^{(n)},\ldots,\sqrt{2} m_{d-1,d}^{(n)})
\end{align*}
and half-axis 
\begin{align*}
    \tau_{ij} &= 2^{(J_{0,n} - J_n)/2} \sqrt{\kappa_N\sigma_{ij}^2 \chi_{q,1-\alpha}^2},\quad 1\leq i\leq j\leq d.
\end{align*}
If we denote this ellipsoid by $E((m_{ij}^{(n)},\tau_{ij})_{i,j}, N, q, \alpha)$, we arrive at the compact representation 
\begin{align*}
    CS_{n,1-\alpha}(M_{J,k,n}) = \left\{ \exp(A)\mid A = \eta^{-1}(x), x\in E((m_{ij}^{(n)},\tau_{ij})_{i,j}, N, q, \alpha)  \right\}.
\end{align*}
To illustrate, consider the case $d=2$. Then 
\begin{align*}
    CS_{n,1-\alpha}(M_{J,k})
    =& \left\{ \exp\left\{ \begin{bmatrix} a_{11}& a_{12}/\sqrt{2}\\ a_{12}/\sqrt{2}& a_{22} \end{bmatrix} \right\} \Bigg| 
    \begin{pmatrix}
        a_{11}\\a_{22}\\a_{12}
    \end{pmatrix}
    \in
    E((m_{ij}^{(n)},\tau_{ij})_{i,j}, N, 3, \alpha),
    \right\},
\end{align*}
where
\begin{align*}
E((m_{ij}^{(n)},\tau_{ij})_{i,j}, N, 3, \alpha)
=&
\left\{
    \begin{pmatrix}
     m_{11}^{(n)}+r_{11}\sin\theta \cos\varphi\\
    m_{22}^{(n)}+r_{22}\sin\theta \cos\varphi\\
    m_{12}^{(n)}+r_{12}\cos\theta
    \end{pmatrix}
    \Bigg| r_{ij}\in[0,\tau_{ij}], \theta\in [0,\pi], \varphi\in (-\pi,\pi]
\right\}.
\end{align*}
\end{Ex}

\subsection{Wild bootstrap confidence sets} 
The resampling procedure works as follows. Let $J_0^*$ be a base level which is asymptotically coarser than $J$ in the sense that $J - J^*_0 \to \infty$ and 
$J^*_0\to\infty$ as $n\to \infty$. Then generate a bootstrap estimator as follows.
\begin{itemize}\setlength\itemsep{3pt}
	\item [(1)] Given the initial observations $M_{J,k,n}$ obtain the sequence of wavelet estimators $\wh M_{J^*_0,J,k,n}$ w.r.t.\ the base level $J^*_0$.
	\item [(2)] Calculate residuals $\hat \epsilon^*_{J,k,n} = \log \wh M_{J^*_0,J,k,n} - \log M_{J,k,n}$, $k\in \{0,\ldots, 2^J-1\}$.
	\item [(3)] Simulate bootstrap residuals $\epsilon^{boot}_{J,k,n} = \hat \epsilon^*_{J,k,n} V_{J,k,n}$, where the $V_{J,k,n}$ are iid, real-valued random variables with mean zero and unit variance for $k\in \{0,\ldots, 2^J-1\}$ and finite $(2+\delta)$th moment for some $\delta>0$.
	\item [(4)] Create bootstrap observations $ M^{boot}_{J,k,n} = \exp\Big( \log \wh M_{J^*_0,J,k,n} + \epsilon^{boot}_{J,k,n} \Big)$ for each $k\in \{0,\ldots, 2^J-1\}$.
	\item [(5a)] Compute $M^{boot}_{J_0,k,n}$ for $k\in \{0,\ldots, 2^J-1\}$ with the midpoint pyramid algorithm.	
	\item [(5b)] Compute the wavelet estimator $\wh M^{boot}_{J_0,J,k,n}$ w.r.t.\ $J_0$ for $k\in \{0,\ldots, 2^J-1\}$.
\end{itemize}
When simulating the residuals in (3) apart from a standard normal distribution, one can choose the two point distribution
$$
    V = \delta_{- \frac{\sqrt{5}-1}{2}}  \cdot p + \delta_{\frac{\sqrt{5}+1}{2}}  \cdot ( 1 - p ),
$$
with $p = \frac{\sqrt{5}+1}{2\sqrt{5}}$. Then $V$ is the unique centered two point distribution which does not change the second and third moment, viz.,  $\E[V]=0$ and $\E[V^3]=\E[V^2]=1$.\\[5pt]

Now, we consider the two processes
\begin{align*}
	\fM_{J,k,n} &= 2^{-J_0/2} n^{1/2} (\log \wh M_{J_0,J,k,n} - \E[ \log \wh M_{J_0,J,k,n} ]), \quad n\in\N, \\
	\fM_{J,k,n}^{boot} &= 2^{-J_0/2} n^{1/2} (\log \wh M^{boot}_{J_0,J,k,n} - \E[ \log \wh M^{boot}_{J_0,J,k,n} ] ), \quad n\in\N .
\end{align*}

We have the following fundamental result. 
\begin{Thrm}[Validity of the bootstrap]\label{T:ValidityBootstrap}
	Let the moment condition from \eqref{E:MomentConditionLogM} be satisfied for a $p> 4$ and let $n=2^J$ as well as $J_0, J_0^* \to \infty$ such that for some $\delta>0$
\[
	\min\Big\{ (1-J_0^*/J) \frac{p}{2} - (1-J_0/J), (1-J_0/J) \frac{p}{4} \Big\} \ge (1+\delta) \text{ for all } n.
\] 
Let $x\in (0,1)$ be a dyadic rational and let $k/2^{J} = x$ eventually. Then with probability 1
\begin{align*}
    	{\cal L^*}( \fM_{J,k,n}^{boot} ) &\Rightarrow {\cal N}\Big(0,\kappa_N\ {\cal C}(x) \Big), \quad  n\to\infty,
\end{align*}
conditionally on the data $\{M_{J,0,n},\ldots,M_{J,2^J-1,n} \}_n$.
In particular, 
\[
		\sup_{x\in Sym(d)} | \p( \fM_{J,k,n} \le x ) - \p^*( \fM_{J,k,n}^{boot}  \le x) | \to 0 \quad a.s., \quad n\to\infty
\]
conditionally on the data. Here, for a random matrix $X$ with values in $Sym(d)$, we define the distribution function by 
    $\p(X \le x) = \p( X_{i,j} \le x_{i,j},~ \forall~ 1\le i,j\le d)$ for $x\in Sym(d)$.
\end{Thrm}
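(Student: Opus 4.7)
The plan is to exploit the fact that on the log-scale the entire estimation pipeline is linear, and to reduce the bootstrap claim to a conditional Lindeberg-Feller CLT in the finite-dimensional Hilbert space $Sym(d)\cong\R^q$. Midpoint averaging \eqref{midpointPyramid} is linear in the matrix logarithm, and so is backward AI refinement, since in the log-Euclidean geometry both the exponential map of Table~\ref{table:Abbreviations2} and the intrinsic Neville interpolation of Section~\ref{Section_AI} reduce to linear operations on $\log$; consequently, for every $(J_0,J,n,k)$ there exist scalar weights $w_{n,k,l}$, supported on an $O(2^{J-J_0})$-wide window around $k$ and obtained from products of the matrices $(E_N)_{\dn d}$, $(O_N)_{\dn d}$ of Section~\ref{Subsec:fast_midpoint_pred}, such that $\log\wh M_{J_0,J,k,n}=\sum_l w_{n,k,l}\,\log M_{J,l,n}$. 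Substituting $\log M^{boot}_{J,l,n}=\log\wh M_{J_0^*,J,l,n}+\hat\epsilon^*_{J,l,n}V_{J,l,n}$ into this linear form and centering under $\p^*$ produces the representation
\[
\fM^{boot}_{J,k,n}\;=\;2^{-J_0/2}n^{1/2}\sum_{l=0}^{n-1} w_{n,k,l}\,\hat\epsilon^*_{J,l,n}\,V_{J,l,n},
\]
a sum of conditionally independent, conditionally mean-zero $Sym(d)$-valued random elements.

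Given this representation, conditional weak convergence to $\NN(0,\kappa_N \mathcal C(x))$ reduces, by Lindeberg-Feller in $Sym(d)$, to showing (i) almost-sure convergence of the conditional covariance
\[
\wh\Sigma_{n,k}\;:=\;2^{-J_0}n\sum_l w_{n,k,l}^2\,\hat\epsilon^*_{J,l,n}\otimes\hat\epsilon^*_{J,l,n}\;\longrightarrow\;\kappa_N\mathcal C(x)\quad\text{a.s.\ in }\LL,
\]
using $\E^*[V^2]=1$, together with (ii) a conditional Lindeberg condition. For (i) I would decompose $\hat\epsilon^*_{J,l,n}=A_{l,n}-\xi_{l,n}$, with $\xi_{l,n}:=\log M_{J,l,n}-\E[\log M_{J,l,n}]$ and $A_{l,n}:=\log\wh M_{J_0^*,J,l,n}-\E[\log M_{J,l,n}]$, and expand the outer product into the diagonal term $\sum_l w_{n,k,l}^2\,\xi_{l,n}\otimes\xi_{l,n}$ plus three remainder terms driven by the plug-in error $A_{l,n}$. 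The diagonal term has expectation $\sum_l w_{n,k,l}^2\,\Sigma_{J,l,n}$ with $\Sigma_{J,l,n}=\Cov(\log M_{J,l,n})$, and the Riemann-sum argument underlying Proposition~\ref{P:NormalityLogM} and Theorem~\ref{T:NormalityLinEst} through \eqref{E:ConvergenceCovOp} and \eqref{E:LimitMatrix} shows that this expectation converges to $\kappa_N\mathcal C(x)$. The fluctuation around the expectation is a sum of independent, centered elements of $\LL$; using $w_{n,k,l}^2\lesssim 2^{-2(J-J_0)}$ over a window of size $O(2^{J-J_0})$, a Rosenthal-type bound in the $(p/2)$-th norm yields an $L^{p/2}$-error of order $n^{-(1-J_0/J)p/4}$, which by the assumption $(1-J_0/J)p/4\ge 1+\delta$ is summable along $n=2^J$, giving a.s.\ convergence via Borel-Cantelli. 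The three plug-in remainder terms are controlled by applying Proposition~\ref{P:ConsistencyLinEstimator} at scale $J_0^*$: the resulting bound $\E\|A_{l,n}\|_F^{p/2}=O(2^{-(J-J_0^*)p/4})$ together with $(1-J_0^*/J)p/2\ge(1-J_0/J)+\delta$ forces their $2^{-J_0}n$-weighted contributions to vanish almost surely.

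The Lindeberg condition (ii) is verified by a standard truncation exploiting the $(2+\delta)$-th moment of the $V_{J,l,n}$ together with a maximal inequality for $\max_l\|\hat\epsilon^*_{J,l,n}\|_F$ under the moment hypothesis~\eqref{E:MomentConditionLogM}, which holds for $p>4$. For the second assertion, Theorem~\ref{T:NormalityLinEst} gives $\fM_{J,k,n}\Rightarrow\NN(0,\kappa_N\mathcal C(x))$ unconditionally, while the first assertion gives the same limit almost surely for $\p^*(\fM^{boot}_{J,k,n}\le\cdot)$; since $\mathcal C(x)$ is positive definite the limit has a continuous joint distribution function on $Sym(d)\cong\R^q$, so Pólya's theorem upgrades the pointwise to uniform convergence and the triangle inequality yields the claim. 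The main obstacle is step~(i): the simultaneous a.s.\ control of the plug-in residuals (governed by $J_0^*$) and the random wavelet-weighted outer-product fluctuation (governed by $J_0$); the hypotheses on $(J_0,J_0^*,p)$ are tuned precisely so that the principal-term fluctuation and all plug-in remainders become summable in parallel.
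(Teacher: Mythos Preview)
Your proposal is correct and follows essentially the same route as the paper: linearize on the log-scale, establish a conditional Lindeberg--Feller CLT by proving a.s.\ convergence of the conditional covariance via Burkholder/Rosenthal moment bounds combined with Borel--Cantelli under the stated rate conditions on $(J_0,J_0^*,p)$, and then upgrade to Kolmogorov-distance convergence by P\'olya using continuity of the Gaussian limit. The only cosmetic differences are that the paper works with scalar projections $\langle A,\cdot\rangle$ via Cram\'er--Wold rather than with outer products directly, and that it splits your plug-in term $A_{l,n}$ further into a stochastic and a bias part, handling the three pieces $(\wh Z_{u,n}-\E[\wh Z_{u,n}])$, $(\E[\wh Z_{u,n}]-Z_u)$, $(Z_u-Z_{u,n})$ separately; note also that the higher-moment bound you quote for $A_{l,n}$ does not follow from Proposition~\ref{P:ConsistencyLinEstimator} alone but requires the same Burkholder argument (Lemma~\ref{L:Burkholder}) that the paper invokes.
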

Note that we need the moment condition from \eqref{E:MomentConditionLogM} to be satisfied for $p$ greater than 4 because this time we need in each setting a law of large numbers to hold for the empirical variance estimate.

Regarding the base level $J_0$, we observe that given $p> 4 (2N+1)/(2N) $, we can choose the optimal $J_0 = \floor{ J/(2N+1)}$ as well as $J_0^* = \floor{ \alpha J}$ for
\[
	\alpha \in \Big(0, 1 - \frac{2}{p}\big(1+\frac{2N}{2N+1} \big) \Big).
\]
Note that the choice $\alpha = 1/(2N+1)$ is admissible in this case. However, given these choices, we have to correct for the asymptotic bias $2^{(J-J_0)/2} (\E[ \log \wh M_{J_0,J,k,n} ] - \log M_{J,k} ) $, which does not vanish in the limit given these choices.\medskip

\begin{Rem}\label{Rem:OverUndersmoothing}
Concerning the choice of the base level $J_0$ we remark the following. 
When compared to the optimal choice $\floor{ J / (2N+1)}$ we distinguish two scenarios which already arise in bootstrap for traditional kernel regression.
If $J_0 - \floor{ J / (2N+1)} \to -\infty$, we oversmooth the estimator; conversely, if $J_0 - \floor{ J / (2N+1)}\to \infty$, we undersmooth. Oversmoothing usually requires implicit or explicit bias correction and has been suggested in many contributions, see for instance the classical papers \cite{hardle1988bootstrapping}, \cite{hardle1991bootstrap}. 
In order to avoid this need for bias correction undersmoothing is recommended (as addressed already in \cite{hall1992bootstrap}, \cite{neumann1994fully}, \cite{hardle2004bootstrap}). In our present setting, and in particular for the simulated data examples in Section~\ref{Sec_Applications}, undersmoothing is also favorable as the optimal choice $J_0 = \floor{ J/(2N+1)}$ would be far too small even for moderate scales $J$.

\end{Rem} 


Based on the above considerations we propose as bootstrap confidence sets
\begin{align}\begin{split}\label{E:BootstrapConfidenceSets}
    CS_{n,1-\alpha}^{boot}(\wh{M}_{J,k}) &:= \{ S\in Sym^+(2) \mid d(\wh{M}_{J,k}, S) < r_k \}\\
    &= \exp\left( \{ A \in Sym(2) \mid  \| \eta(\log \wh{M}_{J,k}) - \eta(A)\| < r_k\} \right)\\
    &= \exp \left( \eta^{-1}\left( \{x\in\R^3 \mid \| \eta(\log \wh{M}_{J,k}) - x \| < r_k \}\right) \right),
    \end{split}
\end{align}
i.e. the metric ball with centre $\wh{M}_{J,k}$ and radius $r_k$ with respect to the log-Euclidean metric. The radius $r_k$ is taken as 
the $(1-\alpha)$-quantile of the distances 
\begin{align*}
    d(\wh{M}_{J,k},~\wh{M}_{J,k,b}^{boot}) = \| \log\wh{M}_{J,k} - \log\wh{M}_{J,k,b}^{boot}\|, \qquad b=1,\ldots,B.
\end{align*}

\section{Numerical simulation examples}\label{Sec_Applications}

In this simulation study we test our proposed estimation procedure for the signal plus noise model 
$M_{J,k,n} = \exp\{ \log(c_k) + \xi_k \} $, where the noise matrices $\xi_k$ have independent Gaussian entries 
$z_{ij}^k \sim \mathcal{N}(0,\sigma_{ij}^2)$, cf. example \ref{Example:SignalPlusNoise2} and \ref{Example:SignalPlusNoise3}. 
We consider the following smooth $Sym^+(2)$-valued curves 
\begin{align*}
    &c_1(t) = \begin{pmatrix}
        50\sqrt{1-(2t)^2} + 0.1 & 2\sin(17\pi t)\\
        2\sin(17\pi t) & 50t + 1
    \end{pmatrix},\\
    &c_2(t) = \begin{pmatrix}
        55\cos(5\pi(t+0.1)/11) & 50\sqrt{\sin(10\pi(t+0.1)/11)/2}\\
        50\sqrt{\sin(10\pi(t+0.1)/11)/2} & 55\sin(5\pi(t+0.1)/11)
    \end{pmatrix},\\
    &c_3(t) = \begin{pmatrix}
        2(5-10t)^2 & 5-10t\\
        5-10t &  1
    \end{pmatrix}
\end{align*}
for $t\in[0,1]$ (see Figure~\ref{Fig:CurvePlots}).\\

For the different curves we choose the following variance parameters
\begin{center}
    \begin{tabular}{llll}
     $c_1\colon$   & $\sigma_{11}= 0.05$, & $\sigma_{22}= 0.1$, & $\sigma_{12}= 0.01$ \\
     $c_2\colon$   & $\sigma_{11}= 0.1$, & $\sigma_{22}= 0.05$, & $\sigma_{12}= 0.1$ \\
     $c_3\colon$   & $\sigma_{11}= 0.1$, & $\sigma_{22}= 0.1$, & $\sigma_{12}= 0.1$
    \end{tabular}
\end{center}

For the different parameters of our AI-wavelet estimators we choose the order of the scheme to be moderate by taking $N=5$, the sample scale $J=10$ (i.e. a sample size of $n=1024$), and the smoothing scale $J_0<J$ to be the only remaining free parameter to be chosen curve by curve. For all simulations being based on $B=100$ bootstrap repetitions (for which we choose $J^*_0=J_0$), we compute first the wavelet estimator
$(\wh{M}_{J,k})_k$ and then the corresponding asymptotic (see \eqref{E:ConfidenceSetLinearEstimator}) and bootstrap (see \eqref{E:BootstrapConfidenceSets}) confidence sets - as derived in the previous section.
Note that in \eqref{E:ConfidenceSetLinearEstimator} we replace the estimated $\widehat{\widetilde{\cal{C}}}$ by the true $\widetilde{\cal{C}}$ (which is known in our simulation set-up) in order to minimize additional variability due to estimation of this variance parameter.

In order to visualize the results we identify positive definite matrices 
\begin{align*}
M = \begin{pmatrix}
    x & z\\
    z & y
\end{pmatrix}
\end{align*}
with points $(x,y,z)$ in the cone $C=\{ (x,y,z)\in\R^3 \mid xy > z^2 \}$. Note that this correspondence is one-to-one. Both
confidence sets are compared with respect to their empirical coverage
based on $K=100$ samples and, since the confidence sets can be considered as subsets in the cone $C$, we
also approximate their respective (Lebesgue-) volumes. 
To compensate for the dependency of the volumes on the location within the cone $C$ - close to the boundary for example the confidence sets are flattened out, and they
generally grow larger with increasing distance from the origin - we also scale them by the volume that corresponds to the matrix
exponential of the unit ball at the respective points.\\
Our simulations show that the wavelet estimator suffers from the classical bias phenomenon of non-parametric curve estimators  at the boundaries. We therefore omit the first and last $100$ values in our computations. Moreover, in case of curve $c_1$ and $c_2$ we observe that our choice of $J_0=7$ and $J_0=5$, respectively, leads to satisfactory empirical coverages but also to a slightly too variable estimator. 
For this we refer to the discussion on undersmoothing in Remark~\ref{Rem:OverUndersmoothing}.
Finally, for curve $c_3$ we also observe a bias to appear around $t=1/2$ (where the curve almost touches the origin) - to be observed in Table~\ref{table:c3} and Figure~\ref{fig_C3}. To demonstrate that resulting empirical coverages for this curve are slightly lower {\em only due to this phenomen}, we also show another simulation set-up in Table~\ref{table:c3-mean-centred} and Figure~\ref{fig_c3_centred}: here we center our confidence sets around the expected value of the wavelet estimator (instead of the true curve), which is approximated - at each considered point in time - by the sample mean over $50$ repetitions.\\[5pt]


\begin{figure}[H]
\centering
\includegraphics[width=120mm, height=60mm]{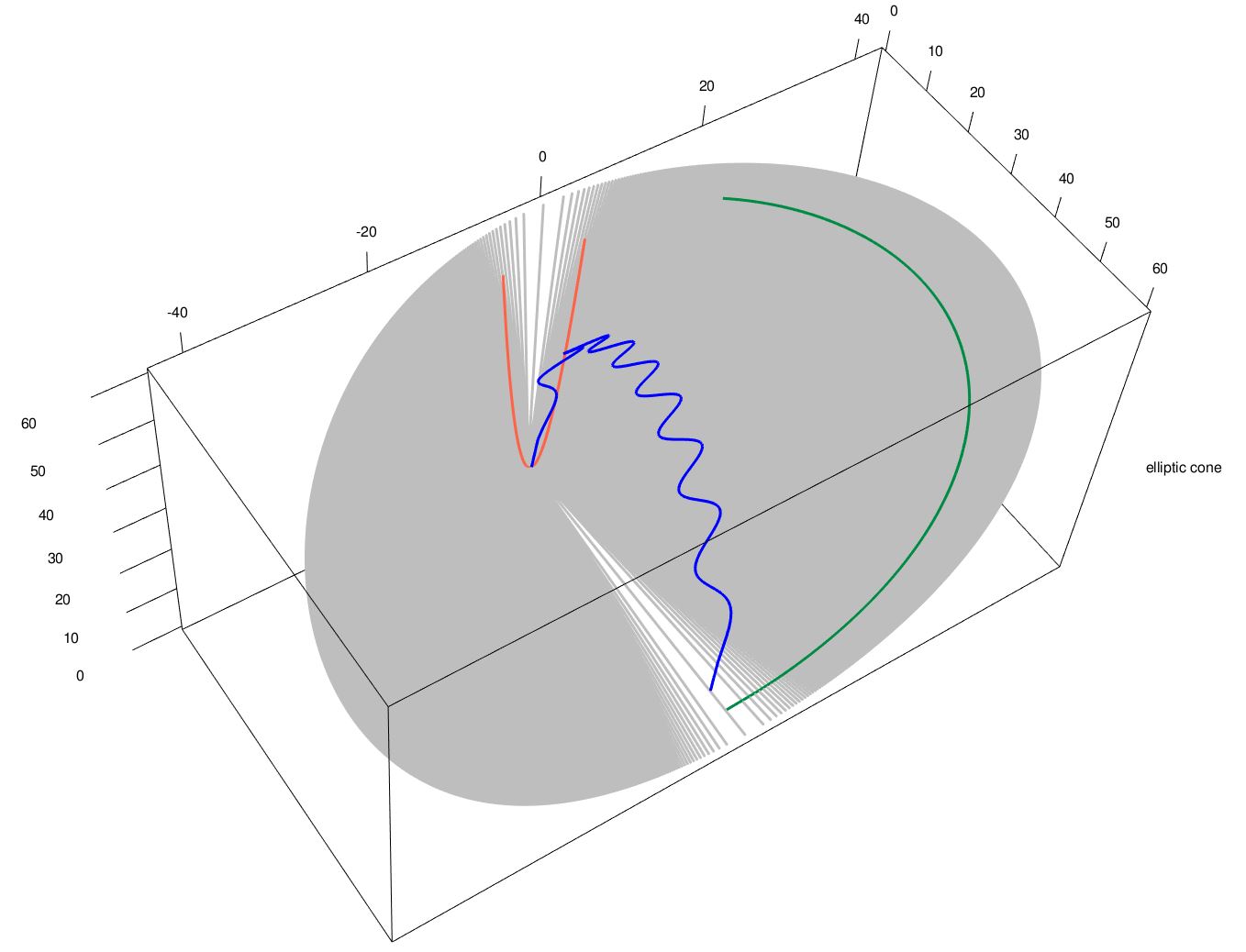}
\caption{Cone $C$ and curves $c_1=$ blue, $c_2=$ green, $c_3=$ red. 
\label{Fig:CurvePlots}}
\end{figure}


\begin{figure}[H]
\centering
\includegraphics[width=100mm, height=45mm]{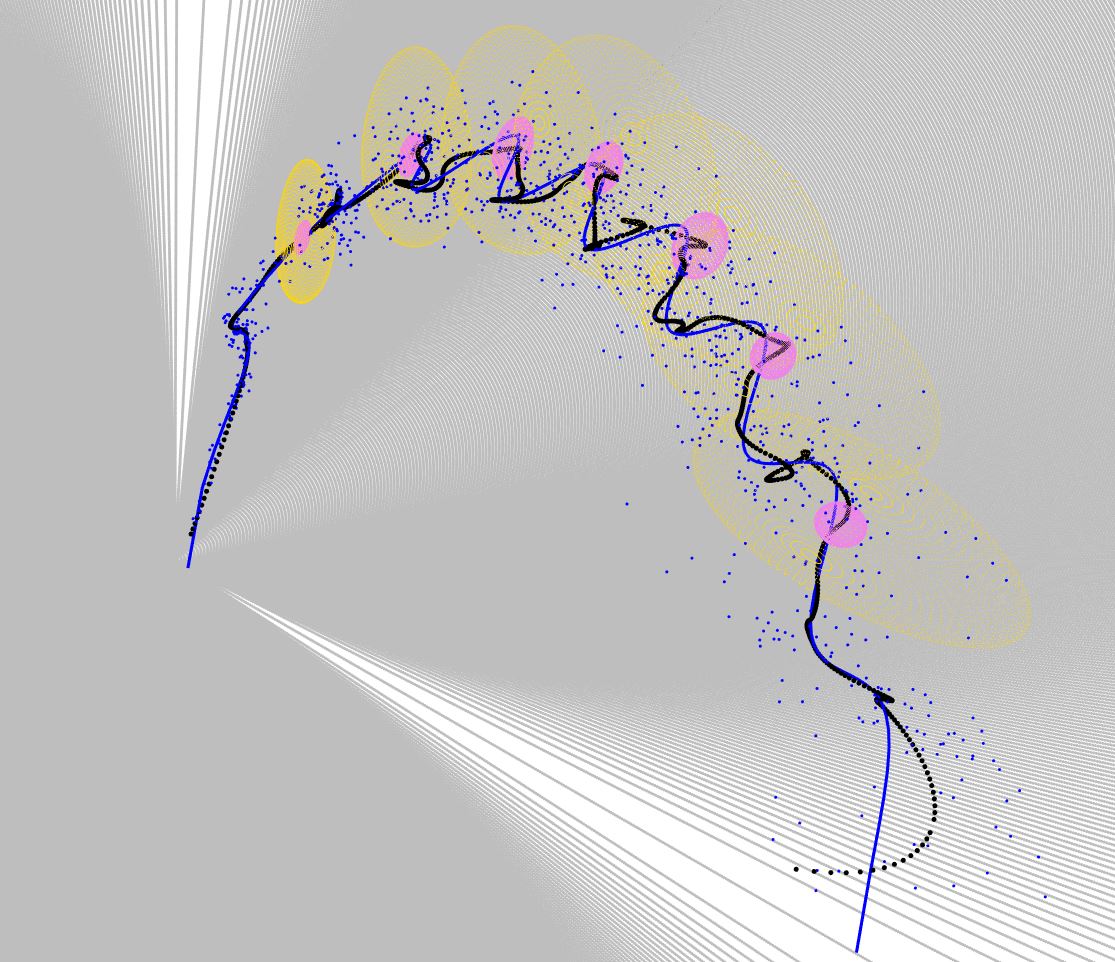}
\caption{AI-wavelet estimator (black), asymptotic (gold) and bootstrap (pink) $0.9$-confidence sets for curve $c_1$ and $J=10$, $J_0=7$. \label{overflow}}
\end{figure}


\begin{figure}[H]
\centering
\includegraphics[width=100mm, height=45mm]{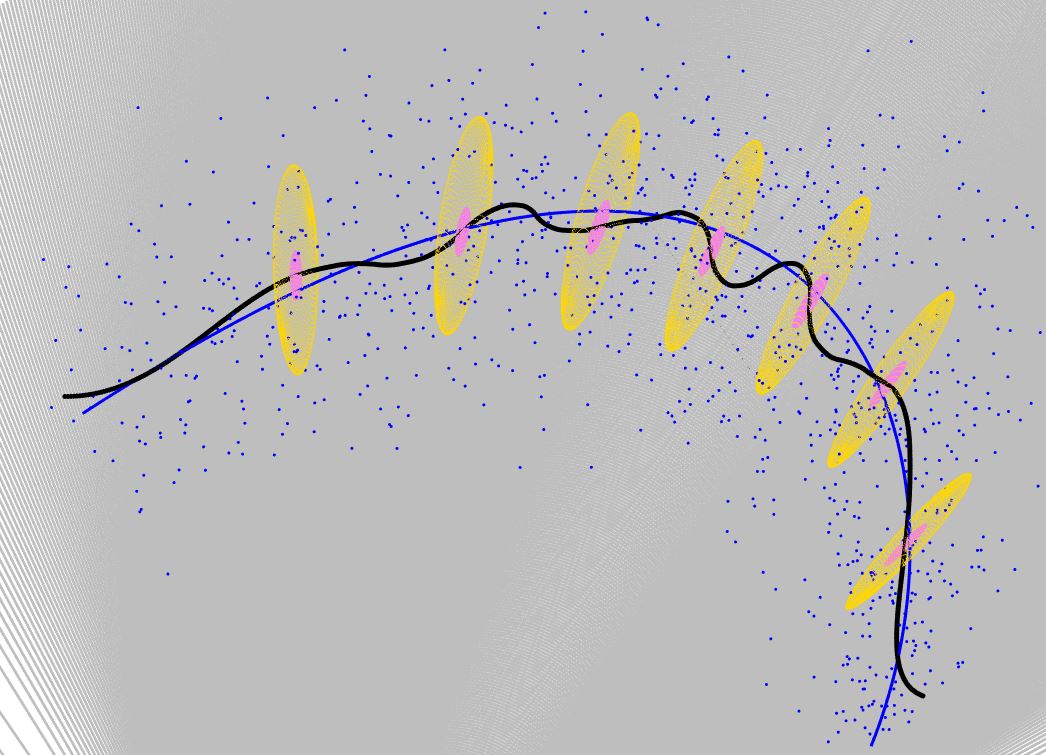}
\caption{AI-wavelet estimator (black), asymptotic (gold) and bootstrap (pink) $0.9$-confidence sets for curve $c_2$ and $J=10$, $J_0=5$. \label{overflow}}
\end{figure}

\begin{figure}[H]
\centering
\includegraphics[width=100mm, height=45mm]{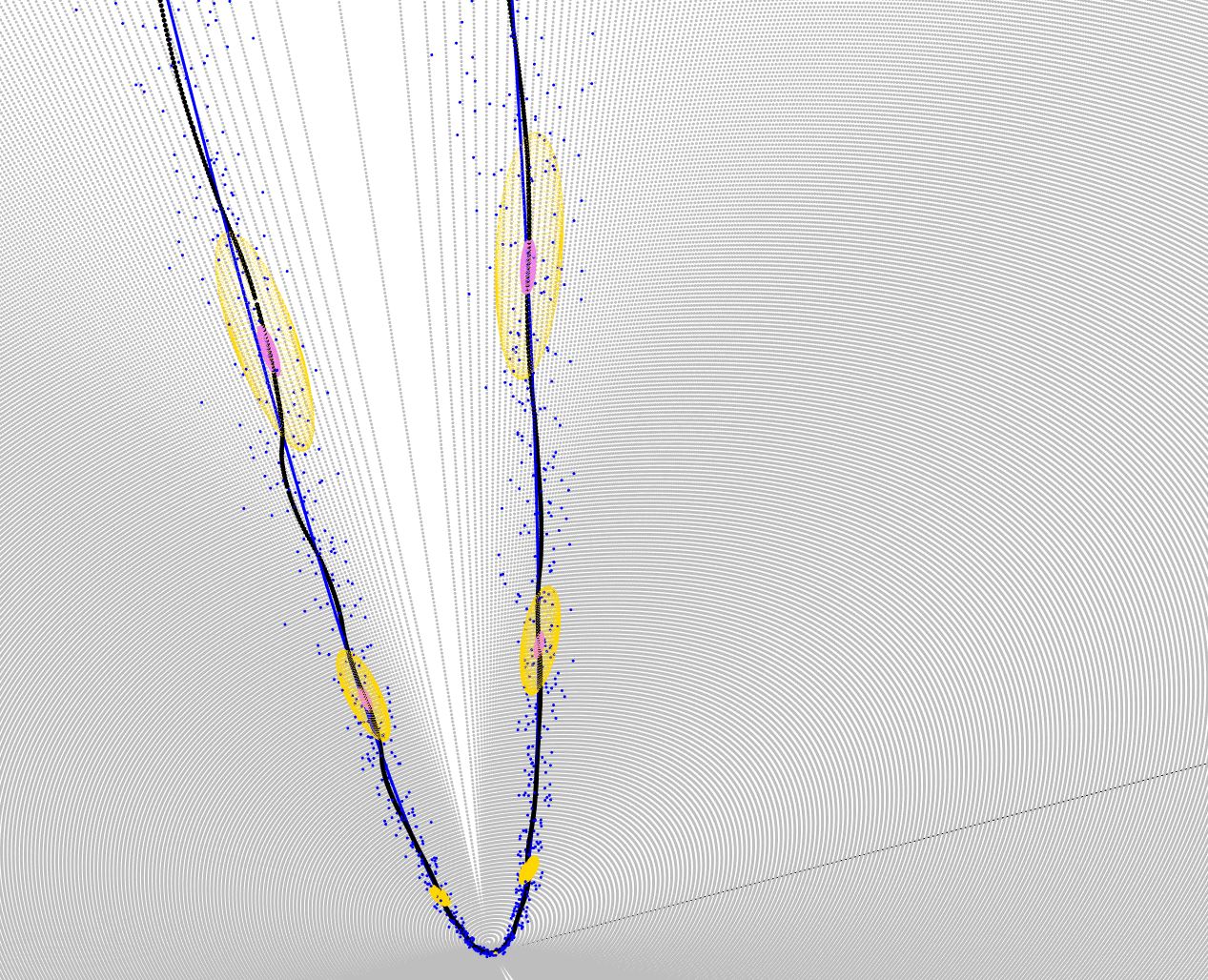}
\caption{AI-wavelet estimator (black), asymptotic (gold) and bootstrap (pink) $0.9$-confidence sets for curve $c_3$ and $J=10$, $J_0=6$. \label{Fig: plot_c3}}
\end{figure}

\begin{table}
\caption{Simulation results for curve $c_1$ and $J=10$, $J_0=7$. The volumes are expressed in the order of magnitude $10^{-4}$. }
\centering
\begin{tabular}[t]{lrrrrrr}
\toprule
\multicolumn{1}{c}{ } & \multicolumn{2}{c}{$\alpha=0.9$} & \multicolumn{2}{c}{$\alpha=0.95$} & \multicolumn{2}{c}{$\alpha=0.975$} \\
\cmidrule(l{3pt}r{3pt}){2-3} \cmidrule(l{3pt}r{3pt}){4-5} \cmidrule(l{3pt}r{3pt}){6-7}
  & Asym. & Boot. & Asym. & Boot. & Asym. & Boot.\\
\midrule
Emp. Cov. & 1.0000 & 0.8809 & 1.0000 & 0.9303 & 1.0000 & 0.9542\\
Scaled Ave. Vol. & 22.3151 & 0.6592 & 31.3509 & 1.0108 & 41.2271 & 1.4380\\
\bottomrule
\end{tabular}
\end{table}

\begin{table}
\caption{Simulation results for curve $c_2$ and $J=10$, $J_0=5$. The volumes are expressed in the order of magnitude $10^{-4}$. }
\centering
\begin{tabular}[t]{lrrrrrr}
\toprule
\multicolumn{1}{c}{ } & \multicolumn{2}{c}{$\alpha=0.9$} & \multicolumn{2}{c}{$\alpha=0.95$} & \multicolumn{2}{c}{$\alpha=0.975$} \\
\cmidrule(l{3pt}r{3pt}){2-3} \cmidrule(l{3pt}r{3pt}){4-5} \cmidrule(l{3pt}r{3pt}){6-7}
  & Asym. & Boot. & Asym. & Boot. & Asym. & Boot.\\
\midrule
Emp. Cov. & 1.0000 & 0.8781 & 1.000 & 0.9298 & 1.0000 & 0.9625\\
Scaled Ave. Vol. & 7.1849 & 0.2328 & 10.0577 & 0.3479 & 13.1792 & 0.5007\\
\bottomrule
\end{tabular}
\end{table}

\begin{table}
\caption{Simulation results for curve $c_3$ and $J=10$, $J_0=6$. The volumes are expressed in the order of magnitude $10^{-4}$.}
\centering
\begin{tabular}[t]{lrrrrrr}
\toprule
\multicolumn{1}{c}{ } & \multicolumn{2}{c}{$\alpha=0.9$} & \multicolumn{2}{c}{$\alpha=0.95$} & \multicolumn{2}{c}{$\alpha=0.975$} \\
\cmidrule(l{3pt}r{3pt}){2-3} \cmidrule(l{3pt}r{3pt}){4-5} \cmidrule(l{3pt}r{3pt}){6-7}
  & Asym. & Boot. & Asym. & Boot. & Asym. & Boot.\\
\midrule
Emp. Cov. & 1.0000 & 0.8197 & 1.0000 & 0.8730 & 1.0000 & 0.9055\\
Scaled Ave. Vol. & 33.2901 & 0.9431 & 46.7046 & 1.3011 & 61.3335 & 1.6766\\
\bottomrule
\end{tabular}\label{table:c3}
\end{table}

\begin{table}

\caption{Simulation results for curve $c_3$ and $J=10$, $J_0=6$. CS centred around the mean estimator. The volumes are expressed in the order of magnitude $10^{-4}$.}
\centering
\begin{tabular}[t]{lrrrrrr}
\toprule
\multicolumn{1}{c}{ } & \multicolumn{2}{c}{$\alpha=0.9$} & \multicolumn{2}{c}{$\alpha=0.95$} & \multicolumn{2}{c}{$\alpha=0.975$} \\
\cmidrule(l{3pt}r{3pt}){2-3} \cmidrule(l{3pt}r{3pt}){4-5} \cmidrule(l{3pt}r{3pt}){6-7}
  & Asym. & Boot. & Asym. & Boot. & Asym. & Boot.\\
\midrule
Emp. Cov. & 1.0000 & 0.8879 & 1.0000 & 0.9434 & 1.0000 & 0.9681\\
Scaled Ave. Vol. & 33.2936 & 1.035167 & 46.7095 & 1.5139 & 61.3399 & 2.0516\\
\bottomrule
\end{tabular}\label{table:c3-mean-centred}
\end{table}


 \begin{figure}[t]
 \centering
\includegraphics[width=120mm, height=30mm]{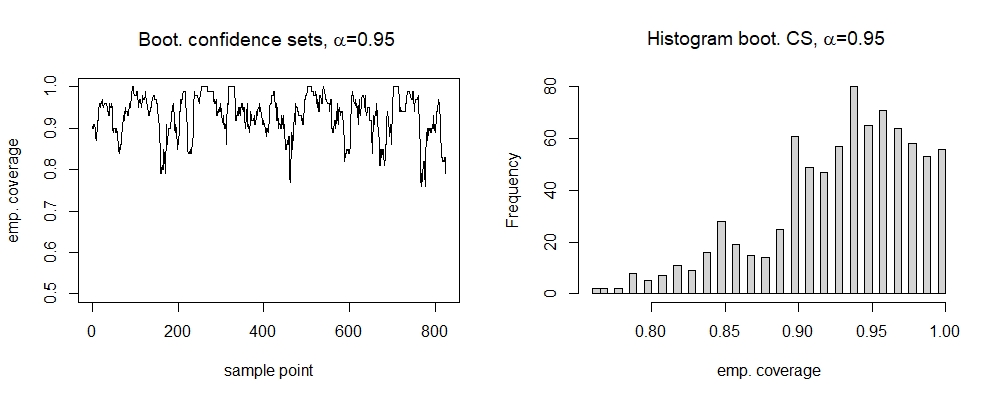}
 \caption{Empirical coverage and corresponding histogram for bootstrap confidence sets for curve $c_1$ and 
 $J=10$, $J_0=7$ \label{overflow}}
\end{figure}

 \begin{figure}[H]
 \centering
\includegraphics[width=120mm, height=30mm]{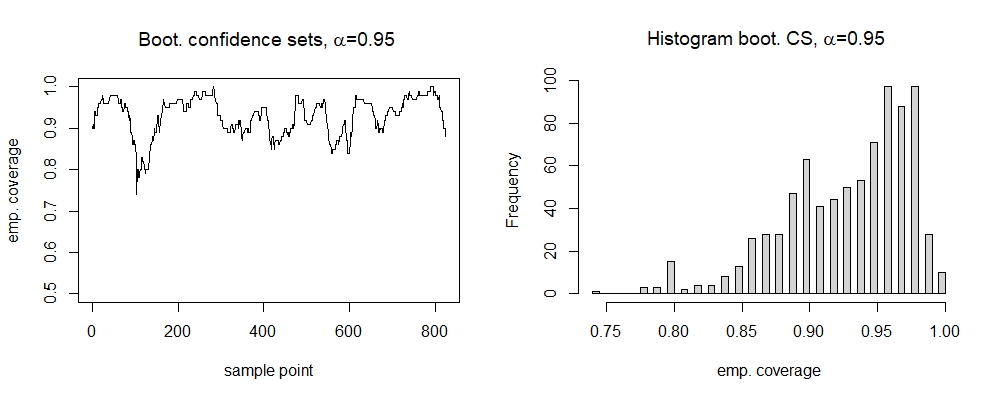}
 \caption{Empirical coverage and corresponding histogram for bootstrap confidence sets for curve $c_2$ and 
 $J=10$, $J_0=5$ \label{overflow}}
\end{figure}
    
 \begin{figure}[H]
 \centering
\includegraphics[width=120mm, height=30mm]{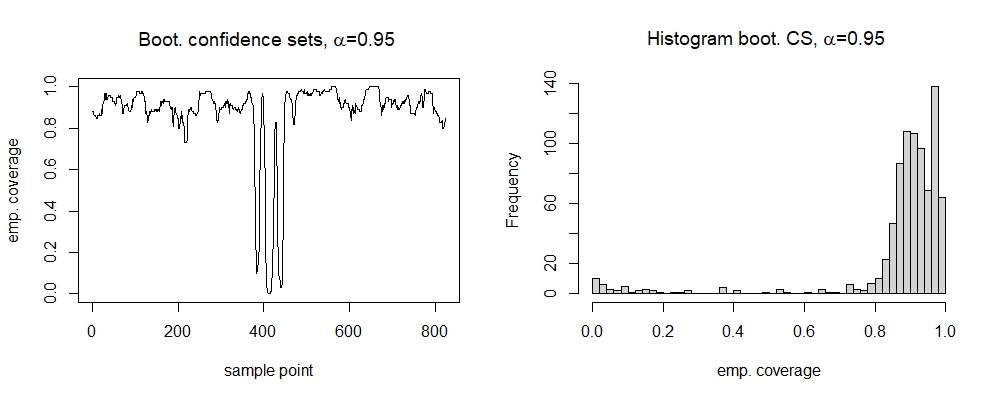}
 \caption{Empirical coverage and corresponding histogram for bootstrap confidence sets for curve $c_3$ and 
 $J=10$, $J_0=6$ \label{fig_C3}}
\end{figure}

 \begin{figure}[H]
 \centering
\includegraphics[width=120mm, height=30mm]{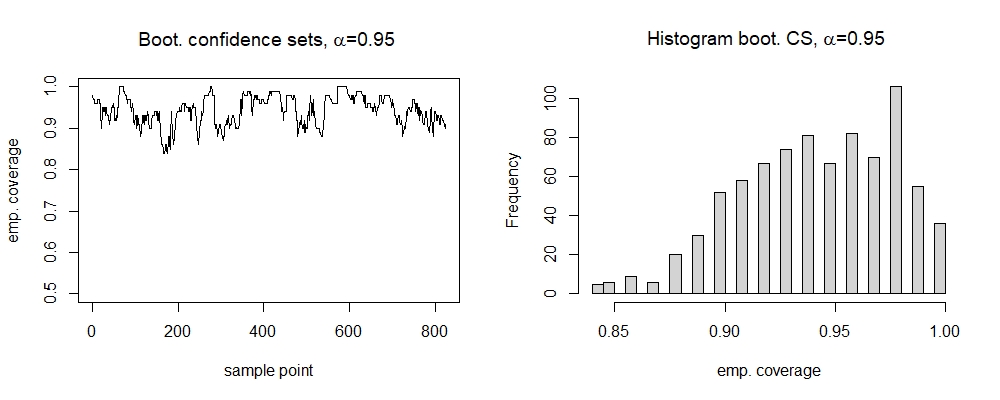}
 \caption{Empirical coverage and corresponding histogram for bootstrap confidence sets centred around 
 the mean estimator for curve $c_3$ and $J=10$, $J_0=6$ \label{fig_c3_centred}}
\end{figure}

Overall, we observe that for our bootstrap-based confidence sets not only our empirical coverages get close to the nominal ones on average (over all considered time points outside the boundaries) but also do we observe only a small proportion where this is not the case. This is likely to be the case due to the usual potential bias problem (as discussed above) which is inherent to any nonparametric curve estimation procedure - and not specific to our more challenging matrix-valued estimation problem. Quite satisfactorily, we also observe that the volumes of our bootstrap-based confidence sets are (on average) up to two orders of magnitude smaller than those that are given by the asymptotic approach. Recall that asymptotic confidence sets are known to be most often too conservative, which is also the reason for their empirical coverages observed here to be very close to $1$ (as already mentioned previously below equation~\eqref{E:ConfidenceSetLinearEstimator}).

\section{Conclusions}\label{Sec:Conclusions}

In this paper we have been proposing two constructions of (pointwise) confidence sets for the (linear) wavelet estimator of SPD-matrix valued curves. Whereas mean-square consistency of our estimator on a Riemannian manifold, constructed via Average Interpolation wavelet schemes, is known to hold since the work by \cite{chau2020intrinsic}, no such result on statistical inference has been existing before. In the first place, we have proven a Central Limit Theorem to hold for our linear wavelet estimator, which enables us to propose asymptotic confidence regions. These turned out to be rather conservative, as it is usually the case in nonparametric curve estimation problems, and in particular for curves living on curved, non-Euclidean manifolds. To circumvent this, we developed (wild) bootstrap confidence schemes which we showed to be theoretically valid, and empirically very satisfying: simulation studies confirmed these second sets to be far less conservative while keeping the nominal level. 

Quite naturally, the question arises about inference for the non-linear version of our wavelet estimator, as proposed again by \cite{chau2020intrinsic}, based on trace-thresholding of the matrix-valued empirical wavelet coefficients. Constructing confidence sets for this type of non-linear estimator is already a non-trivial task in the classical Euclidean set-up of first-generation wavelets for scalar curves, the related literature is sparse (e.g., \cite{chau2016functional}, using \cite{robins2006adaptive}). So although we believe that our approach based on the bootstrap sets can be successfully applied to the threshold wavelet estimator, we decided to leave this and further investigations on this question for future work. Note, however, that as a first theoretical step into this direction, we were able to derive a Central Limit Theorem, analogous to our Theorem~\ref{P:NormalityLogM}, for the (whitened) empirical wavelet coefficients \eqref{whitenedWaveletCoeff}.

For the construction of our wavelet estimators, we used the log-Euclidean metric, which seems nicely taylored to the problem of denoising curves of SPD-matrix values. 
The log-Euclidean metric is among the computationally easiest constructions of Riemannian metrics for the space of positive symmetric matrices, replacing Euclidean distances by ones that are appropriate for this curved manifold. Moreover, our used metric enjoys important properties such as invariance with respect to unitarian congruent transformations. This implies in particular permutation-equivariant estimators of covariance matrices of multivariate or time series data (which is not the case for the often used Cholesky approach). Finally, and importantly for the constructions of correct confidence regions, this metric avoids any undesirable swelling effect: any element of the given confidence region remains a positive-definite symmetric matrix. Future work could be interesting in order to study whether our treated programme for inference could be developed for other matrix-valued objects, such as curves of positive semi-definite matrices. 

\addcontentsline{toc}{section}{Acknowledgments}
\textbf{Acknowledgements.} Johannes Krebs gratefully acknowledges the support of the Deutsche Forschungsgemeinschaft (grants KR-4977/1-1, KR-4977/2-1) and the hospitality of ISBA/LIDAM (UCLouvain).

\bibliography{Bibliography}


\newpage

{\Large\textbf{Supplement}}

\section{Technical details on Section~\ref{Section_AI}}\label{Section_Proofs}

\subsection{Details on the AI refinement scheme of Section~\ref{Subsec:AI-RefinementScheme}}\label{S:DetailsMidpointPrediction}

\textit{Intrinsic polynomial interpolation.}
The fundamental tool for interpolation is Neville's algorithm (\cite{ma2012Neville}, chapter 9.2) which can immediately be applied to our setting in the log-Euclidean metric. We only briefly sketch the nature of this interpolation scheme, more illustrations can be found in, again, \cite{chau2020intrinsic}. For given tupels $(x_0,P_0),\ldots, (x_n,P_n) \in \R \times Sym^+(d)$ with $x_i < x_{i+1}$, let $p_{i,i}(x) = P_i$ for $x\in \R$ and $i=0,\ldots,n$. Now iteratively
define
\begin{align}\begin{split}\label{IntrinsicPolynomialInterpolation}
	p_{i,j}(x)
	&= \Exp_{p_{i,j-1}(x)} \left( \frac{x-x_i}{x_j - x_i} \Log_{p_{i,j-1}(x)}(p_{i+1,j(x)})\right)\\
	&= \exp\left\{\frac{x_j - x}{x_j-x_i} \log(p_{i,j-1}(x)) + \frac{x-x_i}{x_j - x_i} \log(p_{i+1,j}(x)) \right\}, 
\end{split}
\end{align}
$0\leq i<j\leq n$, $x\in\R$.
At the final iteration $p_{0,n}\colon \R \to Sym^+(d)$ is the \textit{intrinsic polynomial} (of degree $n$) interpolating $P_0,\ldots,P_n$ at $x_0,\ldots, x_n$.\\

\textit{Derivation of \eqref{Eq:AI-Refinement1}}:
\begin{align*}
    &\mathrm{Mean}(\rho; (k-L)2^{-j}, (2k+1)2^{-(j+1)})\\
    =& \exp\left\{ \frac{1}{(2L+1)2^{-(j+1)}} \left(\int_{(k-L)2^{-j}}^{k2^{-j}} \log(\rho(t)) \, \dd t
    + \int_{k2^{-j}}^{(2k+1)2^{-(j+1)}} \log(\rho(t)) \, \dd t  \right) \right\}\\
    =& \exp \Bigg\{ \frac{L2^{-j}}{(2L+1)2^{-(j+1)}} \log\circ\exp \left\{ \frac{1}{L2^{-j}}  \int_{(k-L)2^{-j}}^{k2^{-j}} \log(\rho(t)) \, \dd t \right\}\\
    &+ \frac{1}{2L+1} \log\circ\exp \left\{ \frac{1}{2^{-(j+1)}} \int_{k2^{-j}}^{(2k+1)2^{-(j+1)}} \log(\rho(t)) \, \dd t \right\} \Bigg\}\\
    =& \exp \left\{ \frac{2L}{2L+1}\log( \mathrm{Mean}(\rho; (k-L)2^{-j}, k2^{-j} )) + \frac{1}{2L+1}\log(\wt{M}_{j+1,2k}) \right\}\\
    =& \gamma\left(\frac{1}{2L+1}; \ol{M}_{j,L}, \wt{M}_{j+1,2k} \right).
\end{align*}

\textit{Derivation of \eqref{predictionEven}}:
Substituting the intrinsic interpolation polynomial into \eqref{Eq:AI-Refinement1} we have 
\begin{align*}
    \pi( (2k+1)2^{-(j+1)} ) = \exp\left\{ \frac{2L}{2L+1}\log( \ol{M}_{j,L}) + \frac{1}{2L+1}\log(\wt{M}_{j+1,2k})  \right\}
\end{align*}
which can be rearranged as 
\begin{align*}
    &\gamma( -2L; \pi((2k+1)2^{-(j+1)}), \ol{M}_{j,L})\\
    =& \exp\left\{ (2L+1)\log(\pi((2k+1)2^{-(j+1)}) - 2L \log(\ol{M}_{j,L})) \right\}\\
    =& \wt{M}_{j+1,2k}.
\end{align*}

\begin{proof}[Details on Equations~\eqref{E:FastPredictEx1} and \eqref{E:FastPredictEx2}]
First, using \eqref{IntrinsicPolynomialInterpolation}, we can write out the first few iterations explicitly as follows:
\begin{align}
    p_{i-1,i}(x) &= \exp\left\{ \log(P_{i-1}) + \frac{x-x_{i-1}}{x_i - x_{i-1}}\Big(  \log(P_i) - \log(P_{i-1}) \Big) \right\}, \quad  \nonumber
  \end{align}
  for $i\in\{1,\ldots, n\}$ and 
  \begin{align}
    \begin{split}\label{E:IntrinsicPolynomialInterpolationP02}
      &p_{i-2,i}(x) = \exp \Bigg\{ \log(P_{i-2})  + \Big( \frac{x-x_{i-2}}{x_{i-1} - x_{i-2}} - \frac{x-x_{i-2}}{x_i - x_{i-2}} \cdot   \frac{x-x_{i-2}}{x_{i-1}- x_{i-2}}  \\
      &\qquad\qquad\qquad\qquad  + \frac{x-x_{i-2}}{x_i - x_{i-2}} \Big) \Big(  \log(P_{i-1}) - \log(P_{i-2}) \Big) \\
	&\qquad\qquad\qquad\qquad+ \frac{x-x_{i-2}}{x_i - x_{i-2}}\cdot \frac{x-x_{i-1}}{x_i -x_{i-1}} \Big( \log(P_{i}) - \log(P_{i-1})  \Big)  \Bigg\},                 \end{split}  
\end{align}
for $i\in\{2,\ldots,n\}$.

Next, denote $\wh{M}((2k+1)2^j)$ the intrinsic polynomial interpolating the points $(k2^{-(j-1)}, \ol{M}_{j-1,1})$, $((k+1)2^{-(j-1)}, \ol{M}_{j-1,2})$ and
$((k+2)2^{-(j-1)}, \ol{M}_{j-1,3})$. Evoking the explicit formula for $p_{0,2}$ in \eqref{E:IntrinsicPolynomialInterpolationP02}, we get
\begin{align*}
    &\wh{M}((2k+1)2^j)\\
    =& \exp\Bigg\{ \log(\ol{M}_{j-1,1}) 
        + \left( \frac{1}{2} - \frac{1}{4}\cdot\frac{1}{2} + \frac{1}{4} \right)\Big(\log(\ol{M}_{j-1,2}) - \log(\ol{M}_{j-1,1})\Big)\\ 
        &\quad\qquad - \frac{1}{4}\cdot\frac{1}{2}\Big(\log(\ol{M}_{j-1,3}) - \log(\ol{M}_{j-1,2})\Big) \Bigg\}\\
    =& \exp\left\{ \frac{17}{24}\log(M_{j-1,k-1}) + \frac{1}{3}\log(M_{j-1,k}) - \frac{1}{24}\log(M_{j-1,k+1}) \right\}.
\end{align*}
Therefore,
\begin{align*}
    \wt{M}_{j,2k} 
    &= \exp\left\{ 3\log(\wh{M}((2k+1)2^j)) - 2\log(\ol{M}_{j-1,1}) \right\}\\
    &= \exp\left\{ \frac{1}{8}\log(M_{j-1,k-1}) + \log(M_{j-1,k}) - \frac{1}{8}\log(M_{j-1,k+1}) \right\}
\end{align*}
as well as
\begin{align*}
    \wt{M}_{j,2k+1}
    &= \exp\left\{ 2\log(M_{j-1,k}) - \log(\wt{M}_{j,2k}) \right\}\\
    &= \exp\left\{ -\frac{1}{8}\log(M_{j-1,k-1}) + \log(M_{j-1,k}) + \frac{1}{8}\log(M_{j-1,k+1}) \right\}.
\end{align*}
\end{proof}

\subsection{Fast prediction}

The operation $(\cdot)_{\dn d}$ commutes with multiplication.
\begin{Lem}\label{L:CommDiagExpansion}
Let $A\in\R^{m\times n}$ and $B\in\R^{n\times p}$ and $d\in\N_+$. Then $(AB)_{\dn d} = A_{\dn d} B_{\dn d}$. In addition, if $A$ is non singular, then $ (A_{\dn d})^{-1} = (A^{-1})_{\dn d} $.
\end{Lem}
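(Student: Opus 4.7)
The plan is to verify both identities by a direct block-matrix computation; both are essentially restatements of the mixed-product rule for Kronecker products, since $A_{\dn d}$ coincides with $A \otimes I_d$ under the standard identification of block matrices with Kronecker products.

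For the first identity, I would write $A_{\dn d}$ and $B_{\dn d}$ as $m\times n$ and $n\times p$ block matrices with blocks of common size $d\times d$, and compute the $(i,k)$-block of the product by ordinary block multiplication:
\[
    (A_{\dn d}\,B_{\dn d})_{i,k} \;=\; \sum_{j=1}^{n} (a_{i,j} I_d)(b_{j,k} I_d) \;=\; \Bigl(\sum_{j=1}^{n} a_{i,j}\, b_{j,k}\Bigr) I_d \;=\; (AB)_{i,k}\, I_d,
\]
which is precisely the $(i,k)$-block of $(AB)_{\dn d}$ by definition. Since all blocks are $d\times d$, block-matrix multiplication behaves exactly like scalar multiplication in the outer indices $i,j,k$, and the scalars $a_{i,j}$, $b_{j,k}$ commute with $I_d$, so no reordering issue arises.

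For the second identity, assume $A$ is square (so $m=n$) and invertible. Applying the first identity to the pair $(A, A^{-1})$ gives
\[
    A_{\dn d}\,(A^{-1})_{\dn d} \;=\; (A A^{-1})_{\dn d} \;=\; (I_n)_{\dn d} \;=\; I_{nd},
\]
and symmetrically $(A^{-1})_{\dn d}\,A_{\dn d} = I_{nd}$. Hence $(A_{\dn d})^{-1} = (A^{-1})_{\dn d}$, as claimed. I expect no genuine obstacle here: the lemma is a short algebraic verification, the only item worth spelling out being the block compatibility that makes the first computation legal.
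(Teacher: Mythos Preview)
Your proof is correct and essentially identical to the paper's: both compute the $(i,k)$-block of $A_{\dn d}B_{\dn d}$ directly via block-matrix multiplication, and then deduce the inverse formula by applying the first identity to the pair $(A,A^{-1})$. Your additional remark that $A_{\dn d}=A\otimes I_d$ is a nice contextual observation but not needed for the argument.
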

\begin{proof}
Using the multiplication rules for block matrices, we have
\begin{align*}
	A_{\dn d} B_{\dn d} &=  \Big(  \Big(  \sum_{k=1}^m ( a_{i,k} I_d ) (b_{k,j} I_d )  \Big)_{i,j}  \Big)_{i,j} =  \Big(  \Big( \sum_{k=1}^m a_{i,k}  b_{k,j}  I_d   \Big)_{i,j}  \Big) _{i,j} \\
	&= \Big[ \Big( \Big(\sum_{k=1}^m a_{i,k} b_{k,j}  \Big)_{i,j}  \Big)_{i,j}  \Big]_{\dn d}  = (AB)_{\dn d} .
\end{align*}
Moreover, if $A^{-1}$ exists, then
$$
	I_{nd} = (I_n)_{\dn d} = ( A A^{-1})_{\dn d} = A_{\dn d} (A^{-1})_{\dn d},
$$
i.e., $(A^{-1})_{\dn d}$ is the inverse of $A_{\dn d}$.
\end{proof}

This lemma implies the following structural result regarding the characteristic polynomial of a square matrix.
\begin{Cor}
Let $A\in\R^{n\times n}$ with Jordan decomposition $A=S J S^{-1}$ and characteristic polynomial $\chi_A(\lambda)= det( A - \lambda I_n)$. Then 
$$
	A_{\dn d} = S_{\dn d} J_{\dn d} (S_{\dn d})^{-1}
$$
and $\chi_{A_{\dn d}}(\lambda) = (\chi_A (\lambda))^d$.
\end{Cor}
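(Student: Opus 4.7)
The plan is to deduce both statements directly from Lemma~\ref{L:CommDiagExpansion} together with an elementary block-triangular determinant computation.

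First, for the identity $A_{\dn d} = S_{\dn d} J_{\dn d} (S_{\dn d})^{-1}$, I would simply apply Lemma~\ref{L:CommDiagExpansion} twice to the Jordan decomposition $A = SJS^{-1}$: one application gives $(SJS^{-1})_{\dn d} = S_{\dn d} (JS^{-1})_{\dn d}$, a second application gives $(JS^{-1})_{\dn d} = J_{\dn d} (S^{-1})_{\dn d}$, and the final part of the lemma identifies $(S^{-1})_{\dn d}$ with $(S_{\dn d})^{-1}$.

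For the characteristic polynomial, the first step is to observe that $(\cdot)_{\dn d}$ is additive on matrices of the same size (immediate from the entrywise definition) and sends $I_n$ to $I_{nd}$. Hence
\begin{equation*}
    A_{\dn d} - \lambda I_{nd} = A_{\dn d} - \lambda (I_n)_{\dn d} = (A - \lambda I_n)_{\dn d},
\end{equation*}
so $\chi_{A_{\dn d}}(\lambda) = \det((A - \lambda I_n)_{\dn d})$. The core task is therefore to show that $\det(M_{\dn d}) = (\det M)^d$ for any square matrix $M$, applied with $M = A - \lambda I_n$.

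To prove this determinant identity, I would pass again to a Jordan decomposition $M = \widetilde S \widetilde J \widetilde S^{-1}$. By the first part of the corollary, $M_{\dn d} = \widetilde S_{\dn d} \widetilde J_{\dn d} (\widetilde S_{\dn d})^{-1}$, whence $\det(M_{\dn d}) = \det(\widetilde J_{\dn d})$. Since $\widetilde J$ is upper triangular with diagonal entries $\mu_1,\ldots,\mu_n$, the expanded matrix $\widetilde J_{\dn d}$ is block upper triangular with diagonal blocks $\mu_i I_d$; therefore
\begin{equation*}
    \det(\widetilde J_{\dn d}) = \prod_{i=1}^n \det(\mu_i I_d) = \prod_{i=1}^n \mu_i^d = \Big(\prod_{i=1}^n \mu_i\Big)^d = (\det \widetilde J)^d = (\det M)^d.
\end{equation*}
Combining this with the previous display yields $\chi_{A_{\dn d}}(\lambda) = (\chi_A(\lambda))^d$.

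There is no real obstacle here; the only point requiring a moment's care is the verification that $\widetilde J_{\dn d}$ inherits block-upper-triangularity from $\widetilde J$, which is immediate from the entrywise replacement $a_{i,j} \mapsto a_{i,j} I_d$, and the observation that additivity of $(\cdot)_{\dn d}$ lets us pull $-\lambda I_n$ inside the expansion. Everything else is a direct invocation of Lemma~\ref{L:CommDiagExpansion}.
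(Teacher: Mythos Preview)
Your proof is correct and follows essentially the same strategy as the paper: both parts come from Lemma~\ref{L:CommDiagExpansion} together with the block-upper-triangular structure inherited from a Jordan form. The only difference is that for the characteristic polynomial the paper reuses the Jordan decomposition of $A$ itself---writing $A_{\dn d} - \lambda I_{nd} = S_{\dn d}(J_{\dn d} - \lambda I_{nd})(S_{\dn d})^{-1}$ and computing $\det(J_{\dn d} - \lambda I_{nd}) = \det(J - \lambda I_n)^d$ directly from the triangularity of $J - \lambda I_n$---whereas you introduce a second, $\lambda$-dependent Jordan decomposition of $M = A - \lambda I_n$. Your detour is harmless (interpreting the identity pointwise in $\lambda$), but the paper's route is slightly more economical since $J - \lambda I_n$ is already triangular and no new decomposition is needed.
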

\begin{proof}
First using Lemma~\ref{L:CommDiagExpansion}, we see
$$
	A_{\dn d} = (S J S^{-1})_{\dn d} = S_{\dn d} J_{\dn d} (S^{-1})_{\dn d} = S_{\dn d} J_{\dn d} (S_{\dn d} )^{-1}.
$$
Consequently, using Lemma~\ref{L:CommDiagExpansion} another time, the characteristic polynomial of $A_{\dn d}$ satisfies
\begin{align*}
	\chi_{A_{\dn d}}(\lambda) &= \det( (SJ S^{-1})_{\dn d} - \lambda I_{nd} ) = \det( S_{\dn d} (J_{\dn d} - \lambda I_{nd}) (S_{\dn d})^{-1} ) \\
	&= \det(J_{\dn d} - \lambda I_{nd} ) = \det( J - \lambda I_n )^d = \chi_{A}(\lambda)^d.
\end{align*}
This completes the proof.
\end{proof}

\begin{proof}[Proof of Lemma~\ref{L:j-stepPrediction}]
The assertion obviously holds for $m=1$.\\
$m \to m+1:$ Let $p\in \{ 0,\ldots, 2^{m+1}-1 \}$. Suppose $p$ is even. Then $p=2q$ for a $q\in \{ 0,\ldots, 2^m-1 \}$. By (\ref{EvenTransition}) we get
\begin{align*}
    \begin{bmatrix}
    \log \wt{M}_{j+m+1,2^{m+1}k-2L+p}\\
    \vdots\\
    \log \wt{M}_{j+m+1,2^{m+1}k+2L+p}
    \end{bmatrix}
   & =
    \begin{bmatrix}
    \log \wt{M}_{j+m+1,2(2^{m}k+q)-2L}\\
    \vdots\\
    \log \wt{M}_{j+m+1,2(2^{m}k+q)+2L}
    \end{bmatrix}
    \\
    &= 
    (E_N)_{\dn d}
    \begin{bmatrix}
    \log \wt{M}_{j+m,2^{m}k+q-2L}\\
    \vdots\\
    \log \wt{M}_{j+m,2^{m}k+q+2L}
    \end{bmatrix}
    \\
    &=
   ( E_N )_{\dn d} \cdot X_m \cdot\ldots\cdot X_1 
    \begin{bmatrix}
    \log M_{j,k-2L}\\
    \vdots\\
    \log M_{j,k+2L}
    \end{bmatrix}.
\end{align*}
Suppose $p$ is odd, than $p=2q+1$ for $q\in \{ 0,\ldots, 2^m-1 \}$. Using (\ref{OddTransition}) a similar calculation yields the assertion.
\end{proof}

\section{Technical details on wavelet regression in Section~\ref{Section_WaveletRegression}} \label{Sec: Proofs_Wav_Reg}

\begin{Lem}\label{L:ExBochnerInt}
Let $u = k 2^{-J}$ be a dyadic rational in $[0,1]$, then
\[
		\| {\cal C}(u) \|_{\cal L} \le \E[ \| \log M_{J,k,n} \|_F^2 ].
\]
In particular, $\sup_{u\in [0,1]} \| {\cal C}(u) \|_{\cal L} \le \sup \E[ \| \log M_{J,k,n} \|_F^2 ]$, where the supremum on the right-hand side is taken over all dyadic rationals.
\end{Lem}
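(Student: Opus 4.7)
The plan is to unfold the definition of the covariance operator on the Hilbert space $Sym(d)$ and apply Cauchy–Schwarz in the Frobenius inner product. Writing $X = \log M_{J,k,n}$ and $\mu = \E[X] \in Sym(d)$, the covariance operator ${\cal C}(u)\colon Sym(d)\to Sym(d)$ is characterized by
\[
    \ls {\cal C}(u) A, B\rs_F = \Cov(\ls X,A\rs_F, \ls X,B\rs_F ) = \E\big[\ls X-\mu, A\rs_F\ \ls X-\mu, B\rs_F\big]
\]
for all $A,B\in Sym(d)$, which (viewing the right-hand side as a Bochner integral in $Sym(d)$) gives the explicit representation ${\cal C}(u)A = \E[\,\ls X-\mu,A\rs_F\,(X-\mu)\,]$.

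Starting from this representation, I would estimate for $A \in Sym(d)$ with $\|A\|_F\le 1$:
\[
    \|{\cal C}(u)A\|_F \le \E\big[\,|\ls X-\mu,A\rs_F|\cdot \|X-\mu\|_F\,\big] \le \|A\|_F\ \E\big[\|X-\mu\|_F^{2}\big] \le \E\big[\|X-\mu\|_F^{2}\big],
\]
where the first inequality is the (Bochner) triangle inequality together with Cauchy–Schwarz inside the expectation, and the second is Cauchy–Schwarz in $Sym(d)$. Taking the supremum over such $A$ yields $\|{\cal C}(u)\|_{\cal L} \le \E[\|X-\mu\|_F^{2}]$. Finally, the elementary bias–variance identity $\E[\|X-\mu\|_F^2] = \E[\|X\|_F^2] - \|\mu\|_F^2 \le \E[\|X\|_F^2]$ gives the claimed inequality. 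The $\sup_{u\in[0,1]}$ statement then follows by taking the supremum of the bound over all dyadic rationals and invoking the càdlàg extension of $({\cal C}(u):u\in[0,1])$ together with the uniform bound \eqref{E:MomentConditionLogM} (where, in particular, the $p$-th moment dominates the second moment).

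There is no real obstacle: once one interprets ${\cal C}(u)$ as the above Bochner-integral representation on the finite-dimensional Hilbert space $(Sym(d),\ls\cdot,\cdot\rs_F)$, the inequality is the standard fact that the operator norm of a covariance operator on a Hilbert space is bounded by the variance (hence by the second moment) of the underlying random element. The only point that deserves attention is to justify the Bochner representation, which is immediate in our setting since $Sym(d)$ is finite-dimensional and $\E[\|X\|_F^2]<\infty$ by \eqref{E:MomentConditionLogM}.
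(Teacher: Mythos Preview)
Your argument is correct and is in fact the standard, clean way to bound the operator norm of a covariance operator on a Hilbert space: once you write ${\cal C}(u)A=\E[\ls X-\mu,A\rs_F\,(X-\mu)]$ (which is unproblematic in the finite-dimensional space $Sym(d)$), a single application of the Bochner triangle inequality and Cauchy--Schwarz gives $\|{\cal C}(u)A\|_F\le\E[\|X-\mu\|_F^2]\le\E[\|X\|_F^2]$.

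The paper takes a noticeably different and more roundabout route. Instead of using the explicit representation of ${\cal C}(u)A$, it expands $\|{\cal C}(u)A_u\|_F^2=\ls {\cal C}(u)A_u,{\cal C}(u)A_u\rs_F$ via the defining identity of the covariance operator, so that one of the two ``test directions'' is ${\cal C}(u)A_u$ itself. This produces $\Cov(\ls {\cal C}(u)A_u,X\rs,\ls A_u,X\rs)$, which is then split by Cauchy--Schwarz for covariances into two variance factors; bounding the factor $\Var(\ls {\cal C}(u)A_u,X\rs)$ requires an additional nested-expectation argument yielding $\E[\|X\|_F^2]^3$, and only after taking square roots does one recover $\|{\cal C}(u)A_u\|_F\le\E[\|X\|_F^2]$. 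Your approach avoids this self-referential step entirely and is both shorter and more transparent; the paper's version, by contrast, never writes ${\cal C}(u)A$ as an expectation of a random matrix and works purely through the bilinear characterization, which is why it needs the extra layer.
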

\begin{proof}
Let $\epsilon>0$. For each $u\in [0,1]$, there is an $A_u \in Sym(d)$, $\|A_u\|_F =1$, such that 
$$\| {\cal C}(u) \|_{\cal L} - \epsilon \le \|{\cal C}(u) A_u \|_F = \ls {\cal C}(u) A_u, {\cal C}(u) A_u \rs^{1/2}.  $$
Let $u = k 2^{-J}$ be a dyadic rational for $n= 2^J$. The square of the right-hand side of the last inequality equals by definition
\begin{align*}
		&\Cov( \ls {\cal C}(u) A_u, \log M_{J,k,n} \rs, \ls A_u, \log M_{J,k,n} \rs ) \\
		&\le \Var(\ls {\cal C}(u) A_u, \log M_{J,k,n} \rs)^{1/2} \Var( \ls A_u, \log M_{J,k,n} \rs )^{1/2}.
\end{align*}
Obviously, $\Var( \ls A_u, \log M_{J,k,n} \rs ) \le \E[ \| \log M_{J,k,n} \|_F^2 ]$. Also,
\begin{align*}
		&\Var(\ls {\cal C}(u) A_u, \log M_{J,k,n} \rs) \\
		&\le \E[ \ls {\cal C}(u) A_u, \log M_{J,k,n} \rs^2 ] \\
		&= \E[ \Cov( \ls A_u, \log M_{J,k,n} \rs, \ls B, \log M_{J,k,n} \rs)^2 |_{B = \log M_{J,k,n} } ] \\
		&\le \E[ \Var(  \ls A_u, \log M_{J,k,n} \rs ) \ \Var( \ls B, \log M_{J,k,n} \rs ) |_{B = \log M_{J,k,n} }  ] \\
		&\le  \Var(  \ls A_u, \log M_{J,k,n} \rs ) \ \E[ \E[\ls B, \log M_{J,k,n} \rs^2]|_{B = \log M_{J,k,n} }   ] \\
		&\le \E[ \| \log M_{J,k,n} \|_F^2 ] \ \E[ \E[ \|B\|_F^2 \|\log M_{J,k,n}\|_F^2 ] |_{B = \log M_{J,k,n} }   ] \\
		&= \E[ \| \log M_{J,k,n} \|_F^2 ]^3.
\end{align*}
Thus, $\| {\cal C}(u) A_u \|_F \le \E[ \| \log M_{J,k,n} \|_F^2 ]$. 

The amendment follows from the c{\` a}dl{\` a}g property of ${\cal C}$.
\end{proof}

In order to prove the asymptotic normality, we rely on Lyapunov's condition.
\begin{Thrm}[Lyapunov condition]\label{Lyapunov}
Let $(\xi_{n,i})_{i=1}^{\ell_n}$ be a sequence of independent real-valued random variables for each $n\in\N$ such that $\ell_n\le cn$ for some $c>0$. Assume $\lim_{n\to\infty} n^{-1} \sum_{i=1}^{\ell_n} Var(\xi_{n,i}) = \sigma^2 < \infty$ and $\sup_n \sup_i \E[ |\xi_{n,i}|^{2+\delta}]< \infty$ for some $\delta>0$. Then $n^{-1/2} \sum_{i=1}^{\ell_n}(\xi_{n,i}-\E[\xi_{n,i}]) \Rightarrow {\cal N}(0,\sigma^2)$ as $n\to\infty$.
\end{Thrm}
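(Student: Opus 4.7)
The plan is to verify Lyapunov's classical moment condition for the triangular array $Y_{n,i} := \xi_{n,i} - \E[\xi_{n,i}]$, $1 \le i \le \ell_n$, and then invoke the standard Lyapunov CLT together with Slutsky's lemma to land at the prescribed limiting variance $\sigma^2$. The only thing to be slightly careful about is splitting the degenerate case $\sigma^2 = 0$ from the non-degenerate one, and making sure the uniform moment bound transfers from the $\xi_{n,i}$ to the centered $Y_{n,i}$.

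First, I would transfer the moment bound. By the $c_r$-inequality (equivalently, Minkowski for $L^{2+\delta}$) and Jensen's inequality,
\[
    \E[|Y_{n,i}|^{2+\delta}] \le 2^{1+\delta}\bigl( \E[|\xi_{n,i}|^{2+\delta}] + |\E \xi_{n,i}|^{2+\delta} \bigr) \le 2^{2+\delta}\, \E[|\xi_{n,i}|^{2+\delta}],
\]
so $K := \sup_{n,i} \E[|Y_{n,i}|^{2+\delta}] < \infty$ follows from the assumed uniform bound on $\E[|\xi_{n,i}|^{2+\delta}]$.

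Next, consider the non-degenerate case $\sigma^2 > 0$ and set $s_n^2 := \sum_{i=1}^{\ell_n} \Var(Y_{n,i}) = \sum_{i=1}^{\ell_n} \Var(\xi_{n,i})$, so that $n^{-1} s_n^2 \to \sigma^2$ by hypothesis. Using $\ell_n \le cn$ and the moment bound $K$,
\[
    \frac{1}{s_n^{2+\delta}}\sum_{i=1}^{\ell_n} \E[|Y_{n,i}|^{2+\delta}] \le \frac{\ell_n K}{s_n^{2+\delta}} \le \frac{c n K}{(s_n^2)^{1+\delta/2}} = O\bigl(n^{-\delta/2}\bigr) \longrightarrow 0,
\]
since $s_n^2 \sim n\sigma^2$. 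This is exactly Lyapunov's condition, so the classical Lyapunov CLT for independent triangular arrays gives $s_n^{-1} \sum_{i=1}^{\ell_n} Y_{n,i} \Rightarrow \mathcal{N}(0,1)$. Because $n^{1/2}/s_n \to \sigma^{-1}$, Slutsky's lemma upgrades this to $n^{-1/2} \sum_{i=1}^{\ell_n} Y_{n,i} \Rightarrow \mathcal{N}(0,\sigma^2)$, which is the claim.

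It remains to handle $\sigma^2 = 0$. Here $\Var\bigl(n^{-1/2}\sum_{i=1}^{\ell_n} Y_{n,i}\bigr) = n^{-1} s_n^2 \to 0$, so Chebyshev gives $n^{-1/2}\sum_i Y_{n,i} \to 0$ in $L^2$, a fortiori in distribution, which coincides with weak convergence to $\mathcal{N}(0,0) = \delta_0$. There is no real obstacle here; the entire argument is a textbook verification of Lyapunov's condition, and the only step that requires a moment of thought is confirming that the uniform $(2+\delta)$-moment on the uncentered $\xi_{n,i}$ suffices to control the centered ones, which we dispatched via the $c_r$-inequality above.
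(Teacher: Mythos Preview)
Your proof is correct and entirely standard. Note, however, that the paper does not actually prove this theorem: it is stated as a classical result (the Lyapunov CLT for triangular arrays) and then invoked in the subsequent proofs of Proposition~\ref{P:NormalityLogM} and Theorem~\ref{T:NormalityLinEst}. So there is no paper proof to compare against; your argument---transferring the $(2+\delta)$-moment bound to the centered variables via the $c_r$-inequality, verifying the Lyapunov ratio is $O(n^{-\delta/2})$ using $\ell_n \le cn$ and $s_n^2 \sim n\sigma^2$, then applying Slutsky, with the degenerate case $\sigma^2=0$ handled by Chebyshev---is exactly the textbook route.
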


\begin{proof}[Proof of Proposition~\ref{P:NormalityLogM}]
To keep the notation simple, we omit the dependence of the quantities $J,j,J_0,k$ on $n$.
We split the proof in two parts, the first part covering the statement for the fixed dyadic number $k 2^{-j}$, the second the statement for $k 2^{-j}\to x$.

\textit{(a)} We rely on the Cramér-Wold-device. Let $A\in Sym(d)$ be arbitrary but fixed. In the first step, we derive the limiting expression of the covariance operator.  A representation of $\log M_{j,k,n}$ in terms of the $\log M_{J,u,n}$ follows from the midpoint pyramid algorithm because
\begin{align}\label{E:NormalityLogM1}
	\log M_{j,k,n} = 2^{-(J-j)} \sum_{u=0}^{2^{(J-j)}-1} \log M_{J,u+k 2^{(J-j)} , n}.
\end{align}
In particular, using the independence of the $(M_{J,k,n})_k$, we obtain from \eqref{E:NormalityLogM1}
\begin{align}
	2^{-j} n \  \Var( \ls A, \log M_{j,k,n} \rs )
 &= 2^{-(J-j)} \sum_{u=0}^{2^{(J-j)}-1} \Var( \ls A, \log M_{J,u + k 2^{(J-j)},n } \rs )\nonumber \\
	&= 2^{-(J-j)} \sum_{u=0}^{2^{(J-j)}-1} \ls A, {\cal C}( u2^{-J} + k2^{-j}) A \rs , \label{E:NormalityLogM2}
	\end{align} 
where ${\cal C}( u 2^{-J} + k2^{-j}) $ is the covariance operator of $\log M_{J,u + k 2^{(J-j)},n }$. Using the convergence result from \eqref{E:ConvergenceCovOp}, the right-hand side of \eqref{E:NormalityLogM2} converges to 
$$
	c^* = 2^j \int_{k 2^{-j} }^{ (k+1) 2^{-j} } \ls A, {\cal C}(u) A \rs \diff u > 0.
$$	

The Lyapunov condition from Theorem~\ref{Lyapunov} can now be established with the help of this last convergence result and the uniform bounded moments condition of the curve from \eqref{E:MomentConditionLogM}. The latter implies
\begin{align}\begin{split}\label{E:NormalityLogM3}
		 \E[ | \ls A, \log M_{J,u,n} \rs |^{2+\delta} ]	&\le \| A \|_F^{2+\delta} \ \sup \E[ \| \log M_{J,u,n} \|_F^{2+\delta} ]  < \infty, 
\end{split}\end{align}
where the last supremum is taken over $u\in \{0,\ldots,2^J-1 \}$ and $J\in \N_+$. So, the requirements of Theorem~\ref{Lyapunov} for the array $(\ls A, \log M_{J,u + k 2^{J-j},n} \rs)_u$ are satisfied and $2^{-j/2} n^{1/2} (\log M_{j,k,n} - \E[\log M_{j,k,n}])\Rightarrow {\cal N}(0,c^*)$ as $n\to\infty$. This shows (a).

\textit{(b)} Let $k 2^{-J_0}\to x$. We verify the limit of the variance using  \eqref{E:NormalityLogM2}. We have
\begin{align*}
	| \ls A, {\cal C}(k2^{-J_0} + u2^{-J}) A \rs  -  \ls A, {\cal C}(x) A \rs | \le \| A \|_F^2 \  \| {\cal C}(k2^{-J_0} + u2^{-J}) -   {\cal C}(x) \|_{\cal L}.
\end{align*}
Plainly, $| k2^{-J_0} + u2^{-J} - x | \le | k2^{-J_0} - x | + 2^{-J_0} \rightarrow 0$. Using the continuity of $\cal C$ in the point $x$, 
$$
	\lim_{n\to\infty} 2^{-J_0} n \  \Var( \ls A, \log M_{J_0,k,n} \rs ) = \langle A, {\cal C}(x) A \rangle > 0.
$$
The Lyaponov condition for the sequence $(J_{0,n},k_n)_n$ can be verified in a similar spirit as in (a). This completes the proof.
\end{proof}

For proving Theorem~\ref{T:NormalityLinEst}, in order to study the AI wavelet estimator at a specific point $x\in [0,1]$, we introduce some terminology regarding the representation of dyadic numbers in the unit interval. Let $x\in[0,1]$ have the dyadic representation $(d_0,d_1,\ldots)$, that is
$
		x = \sum_{i=0}^\infty 2^{-i} d_i.
$
We agree to choose the shortened form for dyadic rationals, e.g., we choose the representation 
$
(d_0,d_1,\ldots,d_{j},1,0,0,\ldots,)
$
 for the number $x = \sum_{i=0}^j 2^{-i} d_i + 2^{j+1}$. (We could use the representation 
 $
 	(d_0,d_1,\ldots,d_j,0,1,1,\ldots)
$ instead, this does not change the arguments much).

 Define for a scale $j\in\N_+$ the dyadic approximation of $x$ by the first $j$ entries $x^{(j)} = (d_0,d_1,\ldots,d_j)$. This means, the dyadic rational $x^{(j)}$ is a ratio of the form $k^{(j)} 2^{-j}$, where
 \begin{align}
 		k^{(j)} &= \sum_{i=0}^j  2^i d_{j-i} = 2^0 d_j + 2 d_{j-1} + \ldots +2^{j-j'-1}  d_{j'+1}  + 2^{j-j'} \sum_{i=0}^{j'} 2^{i} d_{j'-i} \nonumber \\
		&= 2^0 d_j + 2 d_{j-1} + \ldots +2^{j-j'-1}  d_{j'+1}  + 2^{j-j'} k'_{j,j'}, \label{E:DyadicRepr1}
 \end{align}
for a smaller scale $j'\le j$ and the residual integer $k'_{j,j'} = \sum_{i=0}^{j'} 2^{i} d_{j'-i} $.

\begin{proof}[Proof of Theorem~\ref{T:NormalityLinEst}]
To keep the notation simple, we omit the dependence of the quantities $J,J_0,k$ on $n$.
Let $N+1 \le k \le 2^J-1-N-1$, so that $(J,k)$ is sufficiently far away from the boundary. Since $(J,k)\to x\in (0,1)$ this is the case if $n$ is large enough. Let $\ell = J - J_0 \to \infty$ for a base level $J_0$ which tends to $\infty$. Substituting $J$ for $j$ and $J_0$ for $j'$ in \eqref{E:DyadicRepr1} we have the following representation for the integer $k=k_n$
\begin{align}\label{E:NormalityLinEst1}
		k =  2^0 d_J + 2^1 d_{J-1} + \ldots +2^{\ell-1} d_{J_0+1} + 2^\ell k_0,
\end{align}
where $k_0 = k'_{J,J'}$ is the approximating integer on the scale $J_0$.

Using the matrices $E_N,O_N$ from equations \eqref{E:EN} and \eqref{E:ON} and the corresponding transformation matrices $(E_N)_{\dn d}, (O_N)_{\dn d}$, we obtain the representation of the wavelet
estimator w.r.t.\ the base level $J_0$
\begin{align}\begin{split}\label{E:NormalityLinEst2} 
		&\begin{pmatrix}
			\log \wh{M}_{J,k-N-1,n} \\
			\vdots \\
			\log \wh{M}_{J,k+N+1,n}
		\end{pmatrix} = \Big[ (E_N \1\{ d_J = 0\} + O_N \1\{ d_{J} = 1\} ) \cdot \ldots \\
		& \quad \ldots \cdot ( E_N \1\{ d_{J_0+1} = 0\} + O_N \1\{ d_{J_0+1} = 1\} ) \Big]_{\dn d}
		\begin{pmatrix}
			\log M_{J_0,k_0-N-1,n} \\
			\vdots \\
			\log M_{J_0,k_0+N+1,n}
		\end{pmatrix}.\end{split}	
\end{align}
In particular, for $i\in [-(N+1)..(N+1)]$, using the representation from \eqref{E:NormalityLogM1}
\begin{align}
\begin{split}\label{E:ConsistenyLinEst1}
	\log \wh{M}_{J,k + i,n} &= \sum_{v=-(N+1)}^{N+1} \alpha_{v} \log M_{J_0,k_0+v,n} \\
	&= \sum_{v=-(N+1)}^{N+1} \alpha_{v} 2^{-\ell} \sum_{u=0}^{2^\ell -1} \log M_{J, u + (k_0+v)2^\ell, n}
\end{split}
\end{align}
for certain coefficients $\alpha_v$, which sum to 1 and depend on $J_0$ and $J$. So, for each coefficient $\alpha_v$, there is a corresponding finite matrix product consisting of $E_N$ and $O_N$. Thus, the $\alpha_v$ depend on $n$, we omit this dependence in the notation in the following. Note that this implies immediately that $\sum_{v} \alpha_v^2 \ge 1/(2N+3)$ uniformly which follows from minimization problem 
$$
	\min_{w_1, \ldots, w_L,\lambda} \ \Big[\sum_v w_v^2 + \lambda(\sum_{v} w_v - 1) \Big].
$$
Consider the variance of $\log {\wh M}_{J,k,n}$. Then we find for a matrix $0\neq A\in Sym(d)$
\begin{align}\begin{split}\label{E:NormalityLinEst3}
	&Var( 2^{\ell/2} \ls A, \log {\wh M}_{J,k,n} \rs )\\
	& = \sum_{v=-(N+1)}^{N+1} \alpha_{v}^2 \sum_{u=0}^{2^\ell - 1}  2^{-\ell} \ls A, {\cal C}(2^{-J} (u+(k_0+v)2^\ell) ) A \rs.
\end{split}\end{align}	
We can use the convergence of $x^{(J_n)}$ to $x$ and continuity of the covariance operator in $x$ to obtain upper and lower bounds on \eqref{E:NormalityLinEst3}.

Once more, $	| 2^{-J} (u+(k_0+v)2^\ell)  - x |\le | k_0 2^{-J_0} - x | + 2^{-J_0} + (N+1) 2^{-(J-J_0)}$, which tends to 0 as $n\to\infty$. Hence,
\begin{align}
		&\liminf_{n\to\infty} Var( 2^{\ell/2} \ls A, \log {\wh M}_{J,k,n} \rs ) = \ls A, {\cal C}(x) A \rs \ \liminf_{n\to\infty} \ \Big(\sum_{v=-(N+1)}^{N+1}  \alpha_{v}^2 \Big)   \nonumber \\
		& \ge (2N+3)^{-1} \ls A, {\cal C}(x) A \rs > 0, \label{E:NormalityLinEst4}
\end{align}
because $A\neq 0$. With the same reasoning, an upper bound is given by
\begin{align}
		&\limsup_{n\to\infty} Var( 2^{\ell/2} \ls A, \log {\wh M}_{J,k,n} \rs )= \ls A, {\cal C}(x) A \rs  \ \limsup_{n\to\infty} \ \Big( \sum_{v=-(N+1)}^{N+1} \alpha_{v}^2 \Big) \nonumber \\
		 &\le  \| {\cal C}(x) \|_{\cal L}   \| A \|_F^2 \  \limsup_{n\to\infty} \ \Big(\sum_{v=-(N+1)d}^{(N+1)d}\alpha_{v}^2 \Big). \label{E:NormalityLinEst5}
\end{align}
So, the left-hand side of \eqref{E:NormalityLinEst4} and \eqref{E:NormalityLinEst5} agree if and only if $ \sum_{v=-(N+1)d}^{(N+1)d} \alpha_{v}^2$ converges to a number in $\R_+$ as $n\to\infty$.

If $x = 2^{-j} k$ is a dyadic rational, then the iteration in \eqref{E:NormalityLinEst2} is ultimately carried out with the matrix $E_N$ only because $J \ge J_0 > j$ if $n$ is sufficiently large and $d_J = ... = d_{J_0+1} = 0$ is the representation of $x$. In particular, in this case the limit 
\begin{align}\label{E:NormalityLinEst6}
	\lim_{n\to\infty} \sum_{v=-(N+1)}^{N+1} \alpha_{v}^2 = 	 \sum_{v=-(N+1)}^{N+1} \lim_{n\to\infty} \alpha_{v}^2 = \sum_{i=1}^{2N-1} (E_{N,\infty})^2_{N,i}
\end{align}
exists in $\R_+$ by the considerations which lead to \eqref{E:LimitMatrix}. (Note that if we use the infinite representation for dyadic rationals, we multiply with $O_N$ instead.) Consequently, in this case
\begin{align*}
		\lim_{n\to\infty} Var( 2^{\ell/2} \ls A, \log {\wh M}_{J,k,n} \rs ) = \Big(  \sum_{i=1}^{2N-1} (E_{N,\infty})^2_{N,i} \Big ) \ \ls A, {\cal C}(x) A \rs \in \R_+.
\end{align*}
This completes the proof.
 \end{proof}

\begin{proof}[Proof of Proposition~\ref{P:Cor_Thm_AN_LE}]
The result follows in a similar spirit as Theorem~\ref{T:NormalityLinEst}. Let $\delta>0$ such that the uniform bounded moments condition in  \eqref{E:MomentConditionLogM} is satisfied for $2+\delta$. The Lyapunov condition for the normalized estimator 
\begin{align}\begin{split}\label{E:NormalityLinEst7}
	& Var( \ls A, \log \wh M_{J,k,n} - \E[\log \wh M_{J,k,n} ]  \rs ) ^{-1/2}   \cdot  \ls A, \log \wh M_{J,k,n} - \E[\log \wh M_{J,k,n} ] \rs 
\end{split}\end{align}
can be established with the representation from \eqref{E:ConsistenyLinEst1} and reads as follows
\begin{align*}
    \frac{\sum_{v=-(N+1)}^{N+1} (|\alpha_{v}| \ 2^{-\ell})^{2+\delta} \sum_{u=0}^{2^\ell -1} \E[ |  \langle A, \log M_{J, u + (k_0+v)2^\ell, n} - \E[ \log M_{J, u + (k_0+v)2^\ell, n}] \rangle |^{2+\delta} ] }
    { \Big( \sum_{v=-(N+1)}^{N+1} \sum_{u=0}^{2^\ell -1}  (\alpha_{v} 2^{-\ell} )^2  \E[ |\langle A, \log M_{J, u + (k_0+v)2^\ell, n} - \E[ \log M_{J, u + (k_0+v)2^\ell, n}] \rangle |^2 ] \Big)^{(2+\delta)/2} }.
\end{align*}
In order to give an upper bound on this term, we use the uniform bounded moments condition in \eqref{E:MomentConditionLogM}, the lower bound $\sum_v \alpha_v^2 \ge 1/(2N+3)$ as well as the upper bound
\begin{align*}
    \limsup_{n\to\infty} \max_{ -(N+1)  \le v \le N+1 } |\alpha_v|
    &= \limsup_{J_0 \to\infty} \lim_{J\to\infty} \max_{ -(N+1)  \le v \le N+1 } |\alpha_v|\\
    &\leq \sup_{x\in[0,1]} \max_{i,j}~ \Phi_L(x)_{i,j}  < \infty,
\end{align*}
where $\Phi_L(x)$ is the limit matrix in \eqref{E:LimitMatrix}; notice that the entries of $\Phi_L(x)$ are uniformly 
bounded due to the H\"older continuity of the fundamental solution $\phi_L$, cf. \cite{donoho1993smooth}. 

Combining these estimates, the term in question is of order $2^{-\delta \ell/2}$. Hence, the Lyapunov condition is satisfied.
 \end{proof}

\section{Technical results on the wild bootstrap in Section~\ref{Section_ConfidenceSets}}\label{Sec:Proofs_Bootstrap}

\begin{proof}[Proof of Theorem~\ref{T:ValidityBootstrap}]
The proof is split in three parts. In the first part we study the limiting behavior of the conditional variance of the bootstrap estimator. In the second part, we verify then the Lindeberg condition. In the third part, we show the amendment. To facilitate the notation, we use the following abbreviations. Set $\ell \coloneqq J-J_0$. Let $A\in Sym(d)$ and define
\begin{align*}
	\wh Z_{u,n} =	\ls A, \ \log \wh{M}_{J^*_0,J,2^\ell k + u, n} \rs, \
	Z_{u,n} =	\ls A, \log M_{J,2^\ell k + u, n} \rs, 	Z_u = \ls A, \log M_{J,2^\ell k + u} \rs
\end{align*}
for $u\in\{0,\ldots,2^\ell - 1	\}$. We choose $\delta,\gamma>0$ such that $p=\gamma(2+\delta)$, $\gamma\ge 2 $ and such that $\delta$ is sufficiently small; see below for the admissible choices.

\textit{Part 1.} The conditional variance of $\log M_{J_0,k,n}^{boot} $ is
\begin{align*}
	Var^*( 2^{-\ell/2} \ls A, \log M_{J_0,k,n}^{boot} \rs )  &=  2^{-\ell} \sum_{u=0}^{2^\ell-1}  (\wh Z_{u,n} - Z_{u,n} )^2.
\end{align*}
We split each summand in a main term and remainder terms as follows
\begin{align}\label{E:ValidityBootstrap1}
	 \wh Z_{u,n} - Z_{u,n} = (\wh Z_{u,n} - \E[ \wh Z_{u,n} ] ) +  ( \E[ \wh Z_{u,n} ] - Z_{u} ) + (Z_u - Z_{u,n}).
\end{align}
We begin with the term $2^{-\ell} \sum_{u=0}^{2^\ell-1}  (\wh Z_{u,n} - \E[ \wh Z_{u,n} ] )^2$ on which we apply the following simple consequence of the Burkholder inequality; we state this consequence as a lemma:
\begin{Lem}\label{L:Burkholder}
Let $W_1,\ldots,W_n$ be independent, real-valued, centered random variables such that $\max_i \E[ |W_i |^q]^{1/q} \le m$, for some $q\ge 2$ and an $m\in\R_+$. Then $\E[ | \sum_i W_i|^q ] \le C_q^q \ m^q \ n^{q/2} $ for a certain $C_q \in \R_+$.
\end{Lem}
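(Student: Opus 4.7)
The plan is to combine Burkholder's martingale moment inequality with a convexity argument (Jensen's inequality) that converts a sum of squares into a sum of $q$th powers. Since $W_1,\ldots,W_n$ are independent and centered, the partial sums $S_k = \sum_{i\le k} W_i$ form a martingale with respect to the natural filtration $\mathcal{F}_k = \sigma(W_1,\ldots,W_k)$, so that Burkholder's inequality is applicable to $S_n = \sum_i W_i$.

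First, I would invoke Burkholder's inequality in the form: for every $q\ge 2$ there exists a finite constant $B_q$, depending only on $q$, such that
\[
\E\bigl[ |S_n|^q \bigr] \le B_q\, \E\Bigl[ \bigl(\textstyle\sum_{i=1}^n W_i^2\bigr)^{q/2}\Bigr].
\]
This is the standard discrete-time Burkholder (or Burkholder--Davis--Gundy) inequality; the centered independent case is in fact the simplest setting and can alternatively be cited as the Marcinkiewicz--Zygmund inequality.

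Next, since $q\ge 2$ the function $x\mapsto x^{q/2}$ is convex on $[0,\infty)$, and Jensen's inequality applied to the uniform average over $\{1,\ldots,n\}$ gives
\[
\bigl(\textstyle\sum_{i=1}^n W_i^2\bigr)^{q/2} = n^{q/2}\Bigl( \tfrac{1}{n}\sum_{i=1}^n W_i^2 \Bigr)^{q/2} \le n^{q/2-1} \sum_{i=1}^n |W_i|^q.
\]
Taking expectations, using the uniform moment bound $\E[|W_i|^q]\le m^q$, and combining with the previous display yields
\[
\E\bigl[|S_n|^q\bigr] \le B_q\, n^{q/2-1} \sum_{i=1}^n \E[|W_i|^q] \le B_q\, n^{q/2}\, m^q.
\]
Setting $C_q := B_q^{1/q}$ gives the claimed bound.

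There is no real obstacle here; the only mild point is to make sure we quote Burkholder's inequality in a form valid for $q\ge 2$ with a dimension-free constant $B_q$, which is classical. A completely elementary alternative, if one wishes to avoid citing Burkholder, is to note that for $q=2$ the bound follows from direct variance computation (with $C_2 = 1$), while for general $q\ge 2$ one can use Rosenthal's inequality, which yields the same structure up to redefining $C_q$.
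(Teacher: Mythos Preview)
Your proof is correct and follows essentially the same route as the paper: first apply Burkholder's inequality to bound $\E[|\sum_i W_i|^q]$ by a constant times $\E[(\sum_i W_i^2)^{q/2}]$, then use a convexity argument to control the quadratic variation term. The only cosmetic difference is in the second step: the paper invokes the Minkowski inequality in $L^{q/2}$ (valid since $q/2\ge 1$) to get $\|\sum_i W_i^2\|_{q/2}\le \sum_i \|W_i\|_q^2 \le n m^2$, whereas you use Jensen's inequality for the convex map $x\mapsto x^{q/2}$; both yield the identical bound $n^{q/2} m^q$.
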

\begin{proof}[Proof of Lemma~\ref{L:Burkholder}]
Using the Burkholder inequality, $\E[ |W_i|^q] \le C_q^q \E[ (\sum_i W_i^2)^{q/2} ]$ for some $C_q\in\R_+$. As $q\ge 2$, we apply the Minkowski inequality to obtain the result.
\end{proof}
Utilizing \eqref{E:LimitMatrix}, we find with Lemma~\ref{L:Burkholder} and the moment condition on $Z_{u,n}$ from \eqref{E:MomentConditionLogM} as well as the representation of the wavelet estimator from \eqref{E:ConsistenyLinEst1} that for each $\gamma,\epsilon>0$ 
\begin{align*}
		&\p\Big( 2^{-\ell} \sum_{u=0}^{2^\ell-1}  (\wh Z_{u,n} - \E[ \wh Z_{u,n} ] )^2 \ge \epsilon \Big) \le \sum_{u=0}^{2^\ell-1} \p(  (\wh Z_{u,n} - \E[ \wh Z_{u,n} ] )^2 \ge \epsilon ) \\
		&\le \epsilon^{-\gamma} \ \sum_{u=0}^{2^\ell-1} \E[  | \wh Z_{u,n} - \E[ \wh Z_{u,n} ] |^{2\gamma} ] \le C_1  \epsilon^{-\gamma} \ 2^{(J-J_0) -(J-J_0^*)\gamma },
\end{align*}
for a constant $C_1$, which does not depend on $n$. (Note that in the first inequality, we can use this quite rough estimate because the $(\wh Z_{u,n})_u$ are dependent.)
So, provided $\delta>0$ is sufficiently small, $\sum_{n} 2^{(J-J_0) -(J-J_0^*)p/(2+\delta) } < \infty$ and we can apply the Borel-Cantelli Lemma to conclude that the partial sum
\[
	2^{-\ell} \sum_{u=0}^{2^\ell-1}  (\wh Z_{u,n} - \E[ \wh Z_{u,n} ] )^2 \to 0 \quad a.s., \quad n\to\infty.
	\]
	
Moreover, there is a constant $C_2\in\R_+$ (independent of $n$) such that 
$$
    (\E[ \wh Z_{u,n} ] - Z_u )^2 \le C_2 2^{-2N J^*_0}
    $$
uniformly in $u\in \{0,\ldots,2^\ell -1 \}$ by the approximation result from \cite[Appendix A]{chau2018}. Thus, the deterministic sum $2^{-\ell} \sum_{u=0}^{2^\ell -1} (\E[ \wh Z_{u,n} ] - Z_u )^2$ converges to 0 as $n\to \infty$.

Finally, consider $2^{-\ell} \sum_{u=0}^{2^\ell-1} (Z_{u,n}-Z_u)^2$, which has expectation
\[
	2^{-\ell} \sum_{u=0}^{2^\ell-1} \ls A, {\cal C}(2^{-J_0}k + 2^{-J} u ) A \rs.
\]
Hence, using Lemma~\ref{L:Burkholder} a second time, for each $\gamma>0$ there is a $C_3\in\R_+$ (independent of $n$) such that
\begin{align*}
	&\p\Big( 2^{-\ell} \Big| \sum_{u=0}^{2^\ell-1} (Z_{u,n}-Z_u)^2 - \E[ (Z_{u,n}-Z_u)^2 ] \Big| \ge \epsilon \Big) \\
	&\le 2^{-\ell \gamma} \epsilon^{-\gamma} \ \E\Big[ \Big| \sum_{u=0}^{2^\ell-1} (Z_{u,n}-Z_u)^2 - \E[ (Z_{u,n}-Z_u)^2 ] \Big|^\gamma  \Big] \le C_3 \epsilon^{-\gamma} 2^{-(J-J_0)\gamma/2 }.
\end{align*}
Consequently, provided $\sum_n 2^{-(J-J_0)p/(2(2+\delta)) } < \infty$, which is satisfied if we choose $\delta$ sufficiently small, we have
\[
	2^{-\ell}  \sum_{u=0}^{2^\ell-1} (Z_{u,n}-Z_u)^2 - \E[ (Z_{u,n}-Z_u)^2 ]  \to 0 \quad a.s. \quad (n\to\infty).
	\]
Thus, using the convergence result of the covariance operator from \eqref{E:NormalityLogM2}
$$
	2^{-\ell} \sum_{u=0}^{2^\ell-1} (Z_{u,n}-Z_u)^2 \to  2^{J_0} \int_{k 2^{-J_0}}^{ (k+1) 2^{-J_0}} \ls A, {\cal C}(u) A \rs \diff{u} > 0 \quad a.s.
$$
In particular, $Var^*( 2^{-\ell/2} \ls A, \log M_{J_0,k,n}^{boot} \rs ) / Var( 2^{-\ell/2} \ls A, \log  M_{J_0,k,n} \rs ) \to 1$ with probability 1 as $n\to \infty$.

Consequently, we have 
\[
	Var^*( 2^{-\ell/2} \ls A, \log \wh M_{J_0,J,k,n}^{boot} \rs ) / Var( 2^{-\ell/2} \ls A, \log \wh M_{J,k,n} \rs ) \to 1
	\]
with probability 1 as $n\to \infty$.

\textit{Part 2.} For simplicity, we verify the Lindeberg condition only for $2^{\ell/2} \log M^{boot}_{J_0,k,n} = 2^{-\ell/2} \sum_{u=0}^{2^\ell-1} \log M^{boot}_{J,u+2^\ell k,n}$; the actual verification for the wavelet estimator works in the same fashion but involves a little more notation. 

Note that $\E^*[\log M^{boot}_{J,k,n}]=\log \wh M_{J^*_0,J,k,n}$ and that $\log M^{boot}_{J, k,n}  - \log \wh M_{J^*_0,J,k,n} = \epsilon^{boot}_{J,k,n}$.  Moreover, as the conditional variance of this random variable converges $a.s.$ by the previous arguments, it is sufficient to prove the following Lindeberg condition in order to verify the conditional CLT
\begin{align*}
		L^{boot}_n(\mu) &= \sum_{u=0}^{2^\ell-1}  \E^*\Big[ | 2^{-\ell/2}  \ls A,  \epsilon^{boot}_{J,u+2^\ell k,n}  \rs |^2  \1\{| 2^{-\ell/2}  \ls A, \epsilon^{boot}_{J,u+2^\ell k,n}  \rs | > \mu \} \Big] \to 0 
\end{align*}
for each $\mu>0$ with probability 1. Let $\delta>0$, then we have
\[
	L^{boot}_n(\mu) \le \mu^{-\delta} \ \E^*[ |V_{0,0,n}|^{2+\delta}] \ 2^{-\ell (1+\delta/2 )} \sum_{u=0}^{2^\ell-1} | \ls A,  \hat \epsilon^{*}_{J,u+2^\ell k,n} \rs |^{2 + \delta}.
\]
Since $\ls A, \hat \epsilon^{*}_{J,u+2^\ell k,n} \rs = \wh Z_{u,n} - Z_{u,n}$, we rely on the same decomposition as in \eqref{E:ValidityBootstrap1}.
It is a routine to show similarly as in the first part of the proof but this time with the exponent $2+\delta$ instead of 2 that
\[
	2^{-\ell} \sum_{u=0}^{2^\ell -1} |\wh Z_{u,n} - \E[ \wh Z_{u,n} ] |^{2+ \delta } = o_{a.s.}(1) 
\]
for the choices of $\delta$ and $\gamma$. Furthermore, for the choices of $\delta$ and $\gamma$,
\[
	2^{-\ell} \sum_{u=0}^{2^\ell -1} |  \E[ \wh Z_{u,n} ] - Z_{u}  |^{2+ \delta } = o(1) \text{ and }
	2^{-\ell} \sum_{u=0}^{2^\ell -1} |Z_{u,n} - Z_u  |^{2+ \delta } = O_{a.s.}(1).
	\]

Using the monotonicity of $L^{boot}_n(\mu)$ in $\mu$, this shows then that $L^{boot}_n(\mu)\to 0$ for $n\to \infty$ for all $\mu>0$ with probability 1.

\textit{Part 3.} The convergence in the Kolmogorov distance follows from a standard argument as the limiting Gaussian distribution of $\fM_{J,k,n}$ and $\fM_{J,k,n}^{boot}$ has a continuous density on $Sym(d) \cong \R^{ (d+1)d/2}$. Let $\cal N$ be a random variable having this Gaussian distribution and let $\mu>0$. Let $\Gamma$ be a finite grid on $Sym(d)$ with minimal element $\ul x$ and maximal element $\ol x$, i.e., $\ul x\le x\le \ol x$ for all $x\in \Gamma$. (Here $x\le y$ for $x,y\in Sym(d)$ if for each position $(i,j)$, $x_{i,j} \le y_{i,j}$.) For $x\in Sym(d)$ with $\ul x\le x \le \ol x$, denote $\floor{x}$ (resp. $\ceil{x}$) the elements in $\Gamma$ which are closest to $x$ in the maximum-norm and satisfy $\floor{x}\le x$ (resp. $x\le \ceil{x}$). We choose $\Gamma$ sufficiently dense in the sense that $\p({\cal N} \le \ul x ) \le \mu$, $1 - \p({\cal N} \le \ol x) \le \mu$ and $\p( {\cal N} \le \ceil{x} ) - \p({\cal N} \le \floor{x}) \le \mu$ for $\ul x \le x \le \ol x$.

Next, choose $N\in\N$ large enough such that $|\p( \fM_{J,k,n} \le x) - \p( {\cal N} \le x) | \le \mu$ and $|\p( \fM_{J,k,n}^{boot} \le x) - \p( {\cal N} \le x) | \le \mu$ for all $x\in \Gamma$ for all $n\ge N$. 

Then $|\p( \fM_{J,k,n} \le x) - \p( \fM_{J,k,n}^{boot} \le x)| \le 4\mu$ for all $x\in Sym(d)$ for all $n\ge N$.
\end{proof}


\section{The log-Euclidean Metric}\label{AppendixA}


For the readers ease and clarity in notation, we summarize the results of \cite{arsigny2007geometric} necessary for this work.  
We also add the formula for parallel transport under the log-Euclidean metric. For generel treatments of modern differential
and riemannian geometry we refer to \cite{Lee2013SmoothManifolds}, \cite{Lee2013RiemannianManifolds} as well as 
\cite{Tu2013DifferentialGeometry}.

\subsection{Preliminaries}
\label{Subsec_PreliminariesLogEuclidean}

\begin{Thrm}[cf. \cite{arsigny2007geometric}, Theorem 2.2]
The matrix exponential $\exp\colon M(d) \to GL(d)$ is a $C^{\infty}$-mapping and its differential $d\exp\colon TM(d) \cong M(d) \to M(d) \cong T GL(d)$ is given by 
\begin{align*}
	d_A\exp\colon &T_A M(d) \cong M(d) \to M(d) \cong T_{\exp(A)} GL(d),\\
	& \quad B \mapsto \sum_{k=1}^{\infty} \frac{1}{k!} \sum_{l=0}^{k-1} A^lBA^{k-l-1}.
\end{align*}
\end{Thrm}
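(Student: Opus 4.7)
The plan is to derive both the smoothness and the explicit formula for the differential directly from the defining power series $\exp(A) = \sum_{k=0}^\infty A^k/k!$. The key observation is that on any bounded set $\{ A \in M(d) : \|A\|\le R\}$ (with $\|\cdot\|$ denoting the operator or Frobenius norm, which are equivalent on $M(d)$), the estimate $\|A^k\|/k! \le R^k/k!$ combined with the Weierstrass $M$-test shows that the series converges absolutely and uniformly. Hence $\exp$ is continuous, and if term-by-term differentiation can be justified, smoothness of $\exp$ will follow from the smoothness of each monomial $A \mapsto A^k/k!$.

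First, I would compute the differential of each polynomial term. For fixed $k\ge 1$, a straightforward application of the product rule (more precisely, an expansion of $(A+tB)^k$ and extraction of the linear-in-$t$ coefficient) yields
\begin{align*}
    d_A\Big(\tfrac{A^k}{k!}\Big)(B) \;=\; \tfrac{1}{k!}\sum_{l=0}^{k-1} A^l B A^{k-l-1}.
\end{align*}
Combined with $d_A(A^0/0!)(B) = 0$, summing over $k$ gives the proposed formula, \emph{provided} summation and differentiation may be interchanged.

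To justify the interchange, I would invoke the standard criterion that if a series of $C^1$ maps between Banach spaces converges pointwise and the series of derivatives converges uniformly on bounded sets, then the sum is $C^1$ with derivative equal to the termwise sum. The required uniform bound on the derivative terms is immediate: for $\|A\|\le R$ and $\|B\|\le 1$,
\begin{align*}
    \Big\| \tfrac{1}{k!}\sum_{l=0}^{k-1} A^l B A^{k-l-1} \Big\| \;\le\; \tfrac{k}{k!}\, R^{k-1} \;=\; \tfrac{R^{k-1}}{(k-1)!},
\end{align*}
which is summable in $k$. This establishes $C^1$ regularity and the claimed formula simultaneously. To upgrade to $C^\infty$, one iterates: the $n$th derivative of $A\mapsto A^k/k!$ at $A$ is a sum of $O(k^n)$ monomials of degree $k-n$ in $A$, each with coefficient $1/k!$, so it is bounded by $k^n R^{k-n}/k!$, and these bounds are again summable. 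An induction on $n$ then yields smoothness of $\exp$.

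The main technical point — and what I would pay closest attention to in a rigorous write-up — is the uniform convergence of the derivative series on bounded sets, since this is what legitimates the termwise differentiation. Once that is in place, no delicate analysis is required: everything reduces to polynomial calculus and elementary comparison with the scalar exponential series $\sum R^k/k! = e^R$. No manifold-theoretic subtleties arise because $M(d)$ is a finite-dimensional vector space, so tangent spaces are canonically identified with $M(d)$ itself and the Fréchet differential above coincides with $d_A\exp$ in the manifold sense.
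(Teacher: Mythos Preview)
Your proposal is correct and follows essentially the same route as the paper: termwise differentiation of the power series $\exp(A+tB)$ at $t=0$, leading to the product-rule expansion $\sum_{l=0}^{k-1} A^l B A^{k-l-1}$ in each degree. The paper's proof is terser---it asserts smoothness ``simply from the uniform absolute convergence of the series'' and interchanges sum and derivative without further comment---so your explicit Weierstrass-type bounds and the induction to $C^\infty$ supply exactly the justifications the paper leaves implicit.
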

\begin{proof}
Smoothness of the matrix exponential follows simply from the uniform absolute convergence of the series. The differential of $\exp$ is given by
	\begin{align*}
	d_A\exp(B) 
	&= \frac{d}{dt} \left.\exp(A+tB)\right|_{t=0}
	= \sum_{k=0}^{\infty} \frac{1}{k!} \frac{d}{dt} \left. (A+tB)^k \right|_{t=0}\\
	&= \sum_{k=0}^{\infty} \frac{1}{k!} \left. \sum_{l=0}^{k-1} (A+tB)^l B (A+tB)^{k-l-1} \right|_{t=0}\\
	&= \sum_{k=1}^{\infty} \frac{1}{k!} \sum_{l=0}^{k-1} A^lBA^{k-l-1}.
	\end{align*}
\end{proof}

\begin{Cor}[cf. \cite{arsigny2007geometric}, Cor. 2.3]
In particular $d_0 \exp = \Id\colon M(d) \to M(d)$ and
\begin{align*}
\trace(d_A\exp(B)) = \trace(\exp(A)B),\quad A,B \in M(d).
\end{align*}
\end{Cor}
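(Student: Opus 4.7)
My plan is to derive both claims directly from the explicit series representation of $d_A\exp(B)$ provided by the preceding theorem, using elementary manipulations (substitution for the first claim, cyclicity of the trace for the second).

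For the first claim $d_0\exp = \Id$, I would simply substitute $A=0$ into
\[
	d_A\exp(B) = \sum_{k=1}^{\infty} \frac{1}{k!} \sum_{l=0}^{k-1} A^l B A^{k-l-1}.
\]
For every term with $k\geq 2$ at least one of the exponents $l$ or $k-l-1$ is strictly positive, so the corresponding product $A^l B A^{k-l-1}$ vanishes. Only the single summand with $k=1$, $l=0$ survives, and it equals $B$. This gives $d_0\exp(B)=B$ for all $B\in M(d)$, i.e.\ $d_0\exp=\Id$.

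For the trace identity I would apply $\trace$ termwise (justified by absolute convergence of the exponential series in any submultiplicative matrix norm) and then invoke the cyclic property $\trace(XYZ)=\trace(ZXY)$. Concretely, for each fixed $k\geq 1$ and $l\in\{0,\dots,k-1\}$,
\[
	\trace\bigl(A^l B A^{k-l-1}\bigr) = \trace\bigl(A^{k-l-1} A^l B\bigr) = \trace\bigl(A^{k-1} B\bigr),
\]
so the inner sum collapses to $k\,\trace(A^{k-1}B)$. Consequently
\[
	\trace\bigl(d_A\exp(B)\bigr) = \sum_{k=1}^\infty \frac{k}{k!} \trace(A^{k-1} B) = \sum_{k=1}^\infty \frac{1}{(k-1)!} \trace(A^{k-1} B) = \trace(\exp(A) B),
\]
where in the last step I pull the trace back out of the (absolutely convergent) sum and recognize the series for $\exp(A)$.

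There is no substantive obstacle here: both statements are formal consequences of the series formula, and the only mild technical point is the exchange of $\trace$ with the infinite sum, which is immediate from absolute convergence in, say, the Frobenius norm together with the continuity of $\trace$ on $M(d)$.
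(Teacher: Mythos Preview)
Your proof is correct and follows essentially the same route as the paper. The trace computation is identical step for step (termwise trace, cyclicity, collapse of the inner sum, recognition of the exponential series); for $d_0\exp=\Id$ your explicit substitution $A=0$ into the series is in fact more detailed than the paper, which simply remarks that $T_{\Id}GL(d)=M(d)$.
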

\begin{proof}
A direct computation shows
\begin{align*}
	\trace(d_A\exp(B)) &= \sum_{k=1}^{\infty} \frac{1}{k!} \sum_{l=0}^{k-1} \trace(A^lBA^{k-l-1}) 
	= \sum_{k=1}^{\infty} \frac{1}{k!} \sum_{l=0}^{k-1} \trace(A^{k-1}B)\\
	&= \trace\left( \sum_{k=1}^{\infty} \frac{A^{k-1}}{(k-1)!} B\right) = \trace(\exp(A)B).
\end{align*}
For the other assertion notice that $T_{\Id}GL(d) = M(d)$ (Lie algebra of the Lie group $GL(d)$).
\end{proof}

\begin{Def}[matrix logarithm]
A matrix $B\in M(d)$ is called a logarithm of a matrix $A\in GL(d)$ if $\exp(B) = A$. 
\end{Def}

\begin{Rem}[cf. \cite{arsigny2007geometric}]
Since the mapping $\exp \colon M(d) \to GL(d)$ is \emph{not surjective} the logarithm of a matrix may not exist in general. However, if $A\in GL(d)$ has no (complex)
eigenvalues
on the (closed) negative real line, then $A$ has a \emph{unique} real logarithm whose (complex) eigenvalues have an imaginary part in $(-\pi,\pi)$. This particular 
logarithm is called
\emph{principal logarithm} and will be denoted $\log(A)$ whenever it is defined. Especially $\log(A)$ is defined for any $A\in Sym^+(d)$ and is symmetric.
\end{Rem}

\subsection{Log-Euclidean Metric(es) on \texorpdfstring{$Sym^+(d)$}{Sym+(d)}}
\label{Subsec_LogEuclideanMetrices}

The following theorem summarizes \cite{arsigny2007geometric} Theorem 2.6, Proposition 2.7 and Theorem 2.8.

\begin{Thrm}
The mappings $\exp\colon Sym(d) \to Sym^+(d)$ and $\log \colon Sym^+(d) \to Sym(d)$ are $C^{\infty}$ and one-to-one, i.e., the spaces are diffeomorphic and 
$\exp^{-1} = \log$. Also 
\begin{align*}
	d_A\exp \colon T_A Sym(d) \cong Sym(d) \to Sym(d) \cong T_{exp(A)} Sym^+(d)
\end{align*}
is invertible for all $A \in Sym(d)$.
Topologically $Sym^+(d)$ is an open convex half-cone of $Sym(d)$ and therefore a submanifold of $Sym(d)$.  
\end{Thrm}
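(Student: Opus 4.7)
My plan is to reduce each of the four claims -- smoothness, bijectivity with $\log$ as inverse, invertibility of $d_A\exp$, and the topology of $Sym^+(d)$ -- to the spectral theorem for real symmetric matrices combined with the scalar correspondence $\R \leftrightarrow (0,\infty)$.

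First I would note that for $A \in Sym(d)$ with spectral decomposition $A = P D P^T$ (orthogonal $P$, real diagonal $D = \mathrm{diag}(d_1,\ldots,d_d)$), the exponential series collapses to $\exp(A) = P\,\mathrm{diag}(e^{d_1},\ldots,e^{d_d})\,P^T \in Sym^+(d)$, so $\exp$ indeed lands in $Sym^+(d)$; conversely, any $S = P \Lambda P^T \in Sym^+(d)$ with positive diagonal $\Lambda$ is the image of $P\log(\Lambda)P^T \in Sym(d)$, which gives surjectivity. Smoothness of $\exp|_{Sym(d)}$ is inherited from the smoothness on the ambient $M(d)$ established in the previous theorem. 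For injectivity and the identification $\exp^{-1} = \log$, I would suppose $\exp(A) = \exp(B)$ with $A,B \in Sym(d)$: then $\exp(A) \in Sym^+(d)$ has only positive eigenvalues, so the principal logarithm is well defined, and both $A$ and $B$ are real symmetric logarithms of $\exp(A)$ whose eigenvalues are real (hence have imaginary part trivially in $(-\pi,\pi)$), so the uniqueness clause of the principal logarithm forces $A = B$. Smoothness of $\log|_{Sym^+(d)}$ is inherited from the smoothness of the principal logarithm on the open set of matrices in $GL(d)$ with no eigenvalues on the closed negative real half-line.

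For invertibility of $d_A\exp$ I would bypass a direct divided-difference calculation and simply differentiate the identity $\log \circ \exp = \mathrm{Id}_{Sym(d)}$ at $A$: the chain rule yields $d_{\exp(A)}\log \circ d_A\exp = \mathrm{Id}_{Sym(d)}$, and symmetrically in the other order, so $d_A\exp$ is a linear isomorphism with inverse $d_{\exp(A)}\log$.

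For the topology, openness of $Sym^+(d) \subset Sym(d)$ follows from Sylvester's criterion, which characterizes positive definiteness by finitely many strict polynomial inequalities in the matrix entries. The half-cone property reduces to $\langle \lambda A v, v\rangle = \lambda \langle A v,v\rangle > 0$ for $\lambda > 0$, $v \neq 0$, and convexity follows from linearity of $A \mapsto \langle Av,v\rangle$. Being an open subset of the finite-dimensional vector space $Sym(d)$, $Sym^+(d)$ automatically carries the structure of an open embedded submanifold of the same dimension. The only step requiring any real care is the injectivity argument via uniqueness of the principal logarithm; everything else is routine once the spectral decomposition is in place, which is why I view that invocation as the main (minor) obstacle.
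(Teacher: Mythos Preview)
Your argument is correct in all four parts. The paper itself does not supply a proof of this theorem: it merely records that the statement summarizes Theorem~2.6, Proposition~2.7 and Theorem~2.8 of \cite{arsigny2007geometric} and moves on. So there is no competing argument to compare against; you have simply filled in what the paper delegates to the reference.

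One small remark on logical ordering: you establish smoothness of $\log$ by appealing to the known smoothness of the principal logarithm on the open set of matrices with no eigenvalues on $(-\infty,0]$, and then use this to deduce invertibility of $d_A\exp$ via the chain rule. This is perfectly valid, but note that the dependence could equally well be reversed: one can prove invertibility of $d_A\exp$ directly (for instance via the explicit series formula given earlier in the paper, or via the divided-difference eigenvalue description), and then obtain smoothness of $\log$ from the inverse function theorem without invoking any outside result on the principal logarithm. Either route is fine; yours is arguably cleaner since the smoothness of the principal logarithm is standard.
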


The idea for the construction of log-Euclidean metrices is to use the matrix exponential to transport the additive group structure of
$Sym(d)$ to $Sym^+(d)$. To this end define the \emph{logarithmic product}
$\odot\colon Sym^+(d) \times Sym^+(d) \to Sym^+(d)$ by
\begin{align*}
	S_1 \odot S_2 := \exp(\log(S_1) + \log(S_2)).
\end{align*}

\begin{Thrm}\label{ThrmLieGroup}
$(Sym^+(d), \odot)$ is a commutative Lie group. The neutral element is the identity matrix and the inverse element is simply the matrix inverse. Furthermore the 
matrix exponential
\begin{align*}
	\exp \colon (Sym(d), +) \to (Sym^+(d), \odot)
\end{align*}
is a Lie group isomorphism.\\

In particular one-parameter subgroups of $(Sym^+(d),\odot)$ are of the form $\exp(tV)_t$, where $(tV)_t$, $V\in Sym(d)$ are simply the
one-parameter subgroups of $(Sym(d), +)$. Also the Lie group exponential 
\begin{align*}
	\exp\colon T_{\Id} Sym^+(d) = Sym(d) \to Sym^+(d) 
\end{align*}
is just the ordinary matrix exponential.
\end{Thrm}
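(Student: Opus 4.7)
The plan is to exploit that the previous theorem already supplies a smooth bijection $\exp\colon Sym(d) \to Sym^+(d)$ with smooth inverse $\log$, and that $\odot$ was defined precisely to make $\exp$ intertwine $+$ and $\odot$. Concretely, I would first verify that for all $A,B\in Sym(d)$
\begin{align*}
\exp(A)\odot\exp(B) = \exp(\log(\exp(A))+\log(\exp(B))) = \exp(A+B),
\end{align*}
which shows that $\exp$ is a homomorphism from $(Sym(d),+)$ onto $(Sym^+(d),\odot)$. Combined with the diffeomorphism statement of the preceding theorem, this already makes $\exp$ a Lie group isomorphism as soon as the target is shown to be a Lie group, which I will do next.

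To establish the Lie group structure on $(Sym^+(d),\odot)$, I transport it along $\exp$. Associativity and commutativity of $\odot$ follow immediately from the corresponding properties of $+$ via the homomorphism identity above. The identity matrix is the neutral element since $\log(\Id)=0$ and so $\Id\odot S = \exp(0+\log(S)) = S$. For inverses, I would use the spectral decomposition $S=U\Lambda U^T$ with $\Lambda$ diagonal and positive, which gives $\log(S^{-1})=U\log(\Lambda^{-1})U^T=-\log(S)$; hence $S\odot S^{-1}=\exp(0)=\Id$. Smoothness of $\odot$ and of inversion is then inherited from the smoothness of $+$, negation, $\exp$ and $\log$. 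At this point $(Sym^+(d),\odot)$ is a commutative Lie group and $\exp$ is, by construction, a Lie group isomorphism.

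For the amendment, I would invoke the general fact that a Lie group isomorphism maps one-parameter subgroups to one-parameter subgroups and commutes with the Lie group exponential. Since $(Sym(d),+)$ is an abelian vector Lie group, its one-parameter subgroups are exactly the straight lines $t\mapsto tV$ with $V\in Sym(d)$, and its Lie group exponential is the identity under the canonical identification $T_0 Sym(d)=Sym(d)$. Transporting via $\exp$ yields that the one-parameter subgroups of $(Sym^+(d),\odot)$ are exactly $t\mapsto \exp(tV)$, and that the Lie group exponential at $\Id$ sends $V\in T_{\Id}Sym^+(d)=Sym(d)$ to $\exp(V)$, i.e., coincides with the ordinary matrix exponential.

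The arguments are essentially a bookkeeping exercise in Lie theory once the diffeomorphism statement is in hand, so I do not expect a real obstacle. The only slightly delicate point is the identification $T_{\Id}Sym^+(d)=Sym(d)$ together with the agreement of the derivative $d_0\exp=\Id$ used to match the Lie algebra structures; this however follows from the corollary preceding the theorem, and ensures that the Lie group exponential really is the matrix exponential rather than just an abstract conjugate of it.
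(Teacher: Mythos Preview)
Your proposal is correct and self-contained. The paper itself does not give a proof here: it simply cites \cite{arsigny2007geometric}, Proposition~3.2, Theorem~3.3 and Proposition~3.4. Your argument reconstructs precisely what those results establish, by transporting the vector-group structure of $(Sym(d),+)$ along the diffeomorphism $\exp$; the verification of neutral element, inverses via $\log(S^{-1})=-\log(S)$, smoothness of $\odot$ and inversion, and the identification of one-parameter subgroups and the Lie group exponential are all standard and correctly carried out. In short, you supply the details the paper delegates to the reference.
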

\begin{proof}
\cite{arsigny2007geometric} Proposition 3.2, Theorem 3.3 and Proposition 3.4 
\end{proof}

Now any metric (inner product) $\ls\cdot,\cdot\rs_{\Id}$ on $Sym(d) = T_{\Id}Sym^+(d)$ can be extended to a \emph{Riemannian metric} $g$ on $(Sym^+(d), \odot)$ by
\begin{align}\label{metricextension}
	g_S(U,V) = \ls U,V\rs_{S} :=& \ls (d_{\Id}l_S)^{-1}U, (d_{\Id}l_S)^{-1}V \rs_{\Id}, \\
	=& \ls (d_{\Id}r_S)^{-1}U, (d_{\Id}r_S)^{-1}V \rs_{\Id},\nonumber
\end{align}
because $U,V \in T_SSym^+(d) \cong Sym(d),~~S\in Sym^+(d)$ and where 
\begin{align*}
	l_S&\colon Sym^+(d) \to Sym^+(d),~~ W \mapsto S\odot W \quad \text{(\emph{left translation})}\\
	d_{\Id}l_S&\colon T_{\Id}Sym^+(d) \to T_S Sym^+(d).
\end{align*}
Since the logarithmic product $\odot$ is commutative, we have $l_S = r_S$, where 
\begin{align*}
	r_S&\colon Sym^+(d) \to Sym^+(d),~~ W \mapsto W\odot S
\end{align*}
denotes the \emph{right translation}. Therefore we have
\begin{Cor}[cf. \cite{arsigny2007geometric}, Cor. 3.7 and Cor. 3.10]
Any Riemannian metric obtained by (\ref{metricextension}) is a bi-invariant metric on $(Sym^+(d), \odot)$, i.e.
\begin{align*}
	\ls U,V \rs_{\Id} &= \ls d_{\Id}l_S(U), d_{\Id}l_S(V) \rs_S  &&\text{(left-invariant)}\\
					 &\big(= \ls d_{\Id}r_S(U), d_{\Id}r_S(V) \rs_S\big).  &&\text{(right-invariant)}
\end{align*}
Also endowed with such a metric $Sym^+(d)$ becomes a flat Riemannian manifold, i.e., the curvature tensor (resp. the sectional curvature) is null. 
\end{Cor}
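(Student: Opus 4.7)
The plan is to establish both claims (bi-invariance and flatness) by recognizing that the matrix exponential $\exp\colon (Sym(d),+)\to (Sym^+(d),\odot)$ is, by virtue of how the metric was extended, a global isometry from flat Euclidean space $Sym(d)$ (equipped with the translation-invariant metric generated by $\ls\cdot,\cdot\rs_{\Id}$) onto $Sym^+(d)$. Once this single observation is in place, both bi-invariance of the metric and flatness of the manifold are immediate.

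First, I would establish left-invariance directly from the chain rule applied to the semigroup identity $l_P \circ l_T = l_{P\odot T}$, which is exactly the statement that $(Sym^+(d),\odot)$ is an abelian group (Theorem~\ref{ThrmLieGroup}). Differentiating at the identity yields
\[
    d_{\Id} l_{P \odot T} = d_T l_P \circ d_{\Id} l_T,
\]
so $(d_{\Id} l_{P \odot T})^{-1} = (d_{\Id} l_T)^{-1} \circ (d_T l_P)^{-1}$. Plugging $d_T l_P(U), d_T l_P(V)$ into the definition of $g_{P\odot T}$ then collapses to $g_T(U,V)$, which is exactly left-invariance at an arbitrary base point $T$. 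Right-invariance then comes for free since $l_S = r_S$ by commutativity of $\odot$, hence any left translation equals the corresponding right translation and the same identity holds with $r$ in place of $l$.

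Next, for flatness, the key calculation is to show that $\exp$ is an isometry in the sense that its derivative $d_A\exp$ preserves the inner product between $(T_A Sym(d),\ls\cdot,\cdot\rs_{\Id})$ and $(T_{\exp A} Sym^+(d),g_{\exp A})$. This relies on the identity
\[
    d_A \exp = d_{\Id} l_{\exp A}, \qquad A\in Sym(d),
\]
which I would derive by differentiating the homomorphism property $\exp(A+B) = \exp(A)\odot \exp(B) = l_{\exp A}(\exp B)$ at $B=0$ and using that $d_0 \exp = \Id$. Once this identity is in hand, comparing with the definition \eqref{metricextension} of $g_{\exp A}$ shows that $d_A\exp$ pulls back $g_{\exp A}$ to $\ls\cdot,\cdot\rs_{\Id}$, so $\exp$ is a global isometry from flat Euclidean space $(Sym(d),\ls\cdot,\cdot\rs_{\Id})$ onto $(Sym^+(d),g)$. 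Since the curvature tensor is a Riemannian invariant and the domain is manifestly flat, so is the codomain.

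I anticipate no serious obstacle; the only substantive step is the derivation of $d_A \exp = d_{\Id} l_{\exp A}$, but this is just the chain rule applied to the Lie-group-isomorphism property of $\exp$, and would use only results already recorded in Section~\ref{Subsec_PreliminariesLogEuclidean}. As an alternative to the isometry argument, flatness can also be seen from the general formula $\ls R(X,Y)Z,W\rs = \tfrac{1}{4}\ls[X,Y],[Z,W]\rs$ for bi-invariant metrics on Lie groups; since $(Sym^+(d),\odot)$ is abelian all Lie brackets of left-invariant fields vanish, so $R \equiv 0$.
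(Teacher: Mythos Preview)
Your proposal is correct and considerably more detailed than what the paper does. In the paper, this Corollary is not given a separate proof environment at all: the text immediately preceding it observes that commutativity of $\odot$ forces $l_S = r_S$, so that the defining formula \eqref{metricextension} already reads the same with $r_S$ in place of $l_S$; left-invariance in the displayed form $\ls U,V\rs_{\Id} = \ls d_{\Id}l_S(U), d_{\Id}l_S(V)\rs_S$ is then literally the definition of $g_S$ applied to the pushed-forward vectors. Flatness is simply attributed to \cite{arsigny2007geometric}, Cor.~3.10.

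Your route is different in two respects. First, you prove the full left-invariance identity at an arbitrary base point $T$ via the chain rule on $l_P\circ l_T = l_{P\odot T}$, whereas the paper only displays (and needs) the identity-to-$S$ version, which is tautological from \eqref{metricextension}. Your version is of course the genuine content of the phrase ``left-invariant,'' so this is a gain in completeness, not a detour. Second, for flatness you supply an actual argument---showing $\exp$ is a Riemannian isometry from flat $(Sym(d),\ls\cdot,\cdot\rs_{\Id})$ via the identity $d_A\exp = d_{\Id}l_{\exp A}$---where the paper only cites the source. Your alternative via the bi-invariant curvature formula and abelianness is also valid and is closer in spirit to how \cite{arsigny2007geometric} argues. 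Either way, nothing is missing; your write-up simply fills in what the paper leaves to the reference.
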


\begin{Def}[log-Euclidean metric, cf. \cite{arsigny2007geometric}, Def. 3.8]
Any bi-invariant metric on $(Sym^+(d), \odot)$ is called log-Euclidean metric.
\end{Def}

\subsection{Geometry of \texorpdfstring{$Sym^+(d)$}{Sym+(d)} under log-Euclidean Metric(es)}
Denote by $g$ a bi-invariant metric on $(Sym^+(d), \odot)$ obtained by (\ref{metricextension}). Recall that 
	\begin{align*}
	\frac{d}{dt} \left.\exp(S+tV) \right|_{t=0} = d_S\exp(V) \in T_{\exp(S)}Sym^+(d),\quad V\in T_S Sym(d)
	\end{align*}
Define $\gamma_S(\cdot, L)\colon [0,1] \to Sym^+(d)$ for $S\in Sym^+(d)$ und $L \in T_SSym^+(d)$ by
\begin{align*}
	\gamma_S(t;L) = \exp\big(\log(S) + t (d_{\log(S)}\exp)^{-1}L\big),
\end{align*}
then $\gamma_S(0;L) = S$ and $\dot{\gamma_S}(0;L) = L$. Since $\gamma_S(\cdot;L)$ is a one-parameter subgroup of $(Sym^+(d), \odot)$ (cf. Theorem \ref{ThrmLieGroup})
by Theorem 3.6 in \cite{arsigny2007geometric} $\gamma_S(\cdot;L)$ is a geodesic (with resp. to $g$) starting in $S$ in direction $L$. Therefore the \emph{exponential
map} $\Exp_S \colon T_SSym^+(d) \to Sym^+(d)$ is given by 
\begin{align*}
	\Exp_S(L) := \gamma_S(1;L) = \exp\big(\log(S) + (d_{\log(S)}\exp)^{-1}L\big).
\end{align*}
Now consider $\log\circ \exp = \Id\colon Sym^+(d) \to Sym^+(d)$, then
\begin{align*}
	&~~d_S(\log\circ \exp)(U) = d_{\exp(S)}\log \circ~ d_S\exp(U) \\
	&\qquad\qquad\qquad\quad \ \ = d_S\Id(U) = \frac{d}{dt}\left.(S+tU)\right|_{t=0} = U\quad \forall U \in Sym^+(d)\\
	&\Leftrightarrow~~ d_{\exp(S)}\log \circ~ d_S\exp = \Id\\
	&\Leftrightarrow~~ d_{\exp(S)}\log = (d_S\exp)^{-1}\\
	&\Leftrightarrow~~ d_S\log = (d_{\log(S)}\exp)^{-1},
\end{align*}
which yields
\begin{align}\label{ExpMap}
	\Exp_S(L) = \exp\big(\log(S) + d_S\log (L)\big).
\end{align}
Next put $\Exp_{S_1}(L) = S_2$ for $S_1,S_2 \in Sym^+(d)$.
Solving (\ref{ExpMap}) for $L\in T_{S_1}Sym^+(d)$ gives
\begin{align*}
	\log(S_1) + d_{S_1}\log(L) = \log(S_2) 
	&\quad\Leftrightarrow\quad d_{S_1}\log(L) = \log(S_2)- \log(S_1)\\
	&\quad\Leftrightarrow\quad L = d_{\log(S_1)}\exp\big( \log(S_2) - \log(S_1) \big).
\end{align*} 
Therefore the \emph{logarithmic map} $\Log_{S_1} \colon Sym^+(d) \to T_{S_1}Sym^+(d)$ is given by
\begin{align*}
	\Log_{S_1}(S_2) = d_{\log(S_1)}\exp\big( \log(S_2) - \log(S_1) \big).
\end{align*}
Furthermore for $\gamma(t) = \exp(tU)$, $U \in Sym(d)$ we have $\gamma(0) = \Id$ and $\dot{\gamma}(0) = U$, thus 
\begin{align*}
	&d_{\Id}l_S(U) = \frac{d}{dt} \left.l_S\circ \gamma(t)\right|_{t=0} \\
	&= \frac{d}{dt} \left. S\odot \exp(tU)\right|_{t=0}
	= \frac{d}{dt} \left. \exp(\log(S) + tU)\right|_{t=0}
	= d_{\log(S)}\exp(U)\\
	\Leftrightarrow&~~(d_{\Id}l_S)^{-1} = (d_{\log(S)}\exp)^{-1} = d_S\log.
\end{align*}
The \emph{log-Euclidean metric} $g$ can consequently be explicitly written as 
\begin{align}\label{LogEuclMetric}
	g_S(U,V) = \ls d_S\log(U), d_S\log(V)\rs_{\Id}.
\end{align} 
Accordingly for the \emph{distance of two points} $S_1,S_2 \in Sym^+(d)$ we obtain
\begin{align*}
	d^2(S_1,S_2) 
	&= g_{S_1}^2(\Log_{S_1}(S_2), \Log_{S_1}(S_2))~ (= \| \Log_{S_1}(S_2) \|_{S_1}^2)\\
	&= \ls d_{S_1}\log(\Log_{S_1}(S_2)), d_{S_1}\log(\Log_{S_1}(S_2))\rs_{\Id}\\
	&= \| \log(S_2) - \log(S_1) \|_{\Id}^2.
\end{align*}

To conclude the discussion we note that Theorem 3.6 in \cite{arsigny2007geometric} together with Theorem \ref{ThrmLieGroup} immediately imply that the unique 
geodesic from $S_1$ to $S_2$ in $(Sym^+(d), \odot)$ is 
\begin{align*}
	\gamma(t;S_1,S_2) = \exp((1-t)\log(S_1) + t\log(S_2)), \quad t\in[0,1].
\end{align*}

Also (\ref{LogEuclMetric}) shows that $\log: (Sym^+(d), g) \to (Sym(d), \ls\cdot,\cdot\rs_{\Id})$ is an \emph{isometry} (and likewise 
$\exp\colon(Sym(d), \ls\cdot,\cdot\rs_{\Id}) \to (Sym^+(d), g)$), so the following diagram commutes  
\begin{align*}
    \begin{array}{ccc}
    T_{S_1}Sym^+(d) & \xrightarrow{\Gamma_{S_1}^{S_2}} & T_{S_2}Sym^+(d) \\
    &&\\
    \Big\downarrow d_{S_1}\log & & \Big\downarrow d_{S_2}\log\\
    &&\\
    T_{\log(S_1)}Sym(d) & \xrightarrow{\widetilde\Gamma_{\log(S_1)}^{\log(S_2)}} & T_{\log(S_2)}Sym(d)
    \end{array}    
\end{align*}
where $\Gamma_{S_1}^{S_2}$ denotes the \emph{parallel transport} (in $(Sym^+(d),g)$) along the geodesic $\gamma(\cdot;S_1,S_2)$. The corresponding parallel 
transport $\widetilde\Gamma_{\log(S_1)}^{\log(S_2)}$ in $Sym(d)$ (along $\log \circ \gamma$) is just the identity map since $Sym(d)$ is a vector space. Therefore 
for $U\in T_{S_1}Sym^+(d)$ we have
\begin{align*}
	\Gamma_{S_1}^{S_2}(U)
	&= (d_{S_2}\log)^{-1} \circ d_{S_1}\log(U)\\ 
	&= d_{\log(S_2)}\exp\big( d_{S_1}\log(U) \big).
\end{align*}
In particular the parallel transport of $U \in T_SSym^+(d)$ to the tangent space at the identity, $T_{\Id}Sym^+(d)= Sym(d)$, is 
\begin{align*}
	\Gamma_{S}^{\Id}(U) = d_S\log(U)
\end{align*}
since $d_{\log(\Id)}\exp = d_0\exp = \Id$, cf. \cite{yuan2012local} Equation (14).\\

Finally, by choosing $\ls U, V \rs_{\Id} = \ls U, F \rs_F = \trace(UV)$ the corresponding log-Euclidean metric is 
invariant under orthogonal similarity transformations.

\begin{Lem}\label{Lem:Invariance}
For $O \in GL(d)$ orthogonal and $S_1,S_2 \in Sym^+(d)$ we have 
\begin{align*}
    d(OS_1O^T, OS_2O^T) = d(S_1,S_2).
\end{align*}
\end{Lem}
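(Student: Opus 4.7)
The plan is to reduce the claim to the two well-known invariances: (i) the matrix logarithm commutes with orthogonal conjugation, and (ii) the Frobenius norm is invariant under orthogonal conjugation. Combining these directly gives the result, because the log-Euclidean distance is defined through exactly these two ingredients by $d(S_1,S_2) = \|\log(S_2) - \log(S_1)\|_F$.

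First I would establish the intertwining relation $\log(O S O^T) = O \log(S) O^T$ for any $S \in Sym^+(d)$ and any orthogonal $O$. The cleanest route is via the spectral decomposition: writing $S = Q \Lambda Q^T$ with $Q$ orthogonal and $\Lambda$ diagonal positive, one has $\log(S) = Q \log(\Lambda) Q^T$. Then $OSO^T = (OQ)\Lambda(OQ)^T$ is itself a spectral decomposition (since $OQ$ is orthogonal), so by the same formula $\log(OSO^T) = (OQ)\log(\Lambda)(OQ)^T = O \log(S) O^T$. Alternatively, one can argue from the power series $\exp(A) = \sum_k A^k/k!$: since $(OAO^T)^k = O A^k O^T$, orthogonal conjugation commutes with $\exp$, and hence with its inverse $\log$ on $Sym^+(d)$.

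Next I would apply the orthogonal invariance of the Frobenius norm, $\|O A O^T\|_F = \|A\|_F$ for $A \in Sym(d)$, which follows from the cyclicity of the trace together with $O^T O = I$:
\[
\|OAO^T\|_F^2 = \trace\bigl( (OAO^T)^T (OAO^T)\bigr) = \trace(OA^T A O^T) = \trace(A^T A) = \|A\|_F^2.
\]
Putting the two pieces together,
\[
d(OS_1O^T, OS_2O^T) = \bigl\|\log(OS_2O^T) - \log(OS_1O^T)\bigr\|_F = \bigl\|O\bigl(\log(S_2) - \log(S_1)\bigr)O^T\bigr\|_F = d(S_1,S_2),
\]
which is the claim.

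There is essentially no hard step; the only mild subtlety is making sure the logarithm one uses really is the principal matrix logarithm (so that the spectral-decomposition formula applies unambiguously), but on $Sym^+(d)$ this is the canonical and only sensible choice, and it is the one fixed in the preliminaries. No further regularity or convergence issue arises.
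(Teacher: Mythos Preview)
Your proof is correct and follows essentially the same route as the paper: both combine the intertwining relation $\log(OSO^T)=O\log(S)O^T$ with the orthogonal invariance of the Frobenius norm. The only cosmetic difference is that the paper states the conjugation identity $\log(ASA^{-1})=A\log(S)A^{-1}$ for arbitrary nonsingular $A$ rather than deriving the orthogonal case via the spectral decomposition.
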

\begin{proof}
Simply observe that $\log(ASA^{-1}) = A\log(S)A^{-1}$ for any non-singular matrix $A \in M(d)$ and $S \in Sym^+(d)$. Hence 
\begin{align*}
    d(OS_1O^T, OS_2O^T) &= \| O (\log S_1- \log S_2) O^T \|_F\\
    &= \| \log S_1- \log S_2 \|_F = d(S_1 , S_2).
\end{align*}
\end{proof}

\end{document}